\documentclass{article}

\usepackage{microtype}
\usepackage{graphicx}
\usepackage{subcaption}
\usepackage{booktabs} 

\usepackage{hyperref}

\usepackage[preprint]{manuscript}

\usepackage{amsmath}
\usepackage{amssymb}
\usepackage{mathtools}
\usepackage{amsthm}
\usepackage{amsfonts,latexsym,bbm,xspace,graphicx,float,mathtools,mathdots}
\usepackage{braket,caption,ellipsis,xcolor,textcomp}
\usepackage{combelow} 
\usepackage{booktabs}
\usepackage{hyperref}

\usepackage[capitalize,noabbrev]{cleveref}

\usepackage{autonum}
\theoremstyle{plain}
\newtheorem{theorem}{Theorem}[section]
\newtheorem{proposition}[theorem]{Proposition}
\newtheorem{lemma}[theorem]{Lemma}

\theoremstyle{definition}

\newtheorem{fact}[theorem]{Fact}
\theoremstyle{remark}
\newtheorem{remark}[theorem]{Remark}

\usepackage[textsize=tiny]{todonotes}
\newcommand{\iid}{\text{i.i.d.}\xspace}

\newcommand{\lp}{\left(}
\newcommand{\rp}{\right)}

\newcommand{\qtext}[1]{\quad \text{#1} \quad}

\newcommand{\kellybet}{\lambda^{\mathrm{Kelly}}}
\newcommand{\aggbet}{\lambda^{\mathrm{agg}}}
\newcommand{\defbet}{\lambda^{\mathrm{def}}}
\newcommand{\Iplus}{I^{+}}
\newcommand{\Iminus}{I^{-}}
\newcommand{\Qplus}{\calQ^{+}}
\newcommand{\Qminus}{\calQ^{-}}

\newcommand{\calE}{\mathcal{E}}
\newcommand{\calF}{\mathcal{F}}

\newcommand{\calO}{\mathcal{O}}
\newcommand{\calP}{\mathcal{P}}
\newcommand{\calQ}{\mathcal{Q}}

\newcommand{\calX}{\mathcal{X}}

\newtheorem{example}[theorem]{Example}

\icmltitlerunning{Learning to Bet for Horizon-Aware Anytime-Valid Testing}

\begin{document}

\twocolumn[
  \icmltitle{Learning to Bet for Horizon-Aware Anytime-Valid Testing}



  \icmlsetsymbol{equal}{*}

  \begin{icmlauthorlist}
    \icmlauthor{Ege Onur Taga}{comp}
    \icmlauthor{Samet Oymak}{comp}
    \icmlauthor{Shubhanshu Shekhar}{comp}
  \end{icmlauthorlist}

  \icmlaffiliation{comp}{Department of Electrical and Computer Engineering, University of Michigan, Ann Arbor, USA}

  \icmlcorrespondingauthor{Ege Onur Taga}{egetaga@umich.edu}

  \icmlkeywords{Machine Learning, ICML, sequential anytime-valid inference, testing by betting, estimation}

  \vskip 0.3in
]



\printAffiliationsAndNotice{}  

\begin{abstract}
We develop \emph{horizon-aware anytime-valid} tests and confidence sequences for bounded means under a strict deadline $N$. Using the betting/e-process framework, we cast horizon-aware betting as a finite-horizon optimal control problem with state space $(t, \log W_t)$, where $t$ is the time and $W_t$ is the test martingale value. We first show that in certain interior regions of the state space, policies that deviate significantly from Kelly betting are provably suboptimal, while Kelly betting reaches the threshold with high probability. We then identify sufficient conditions showing that outside this region, more aggressive betting than Kelly can be better if the bettor is behind schedule, 
and less aggressive can be better if the bettor is ahead. Taken together these results suggest a simple phase diagram in the $(t, \log W_t)$ plane, delineating regions where Kelly, fractional Kelly, and aggressive betting may be preferable. 
Guided by this phase diagram, we introduce a Deep Reinforcement Learning approach based on a \emph{universal} Deep Q-Network (DQN) agent that learns a single policy from synthetic experience and maps simple statistics of past observations to bets across horizons and null values. In limited-horizon experiments, the learned DQN policy yields state-of-the-art results.
\end{abstract}

\section{Introduction}
Consider a stream of \iid observations $\{X_n: n\geq 1\}$ drawn from an unknown distribution $P_X$, supported on $\calX = [0,1]$ with mean $\mu_X$. A fundamental task in statistics is testing the null 
\begin{align}
H_0: \mu_X = m, \qtext{versus} H_1: \mu_X \neq m, \label{eq:testing-problem-def}
\end{align}
for a known value $m \in [0,1]$. Formally, this involves designing a \emph{level-$\alpha$ test of power one}~\citep{darling1968some}, which are stopping times $\tau_m$ satisfying 
\begin{align}
    \mathbb{P}_{H_0}(\tau_m < \infty) \leq \alpha, \qtext{and} \mathbb{P}_{H_1}(\tau < \infty) = 1. \label{eq:power-one-test-def}
\end{align}
\begin{figure}[t]
  \centering
  \includegraphics[width=0.70\linewidth]{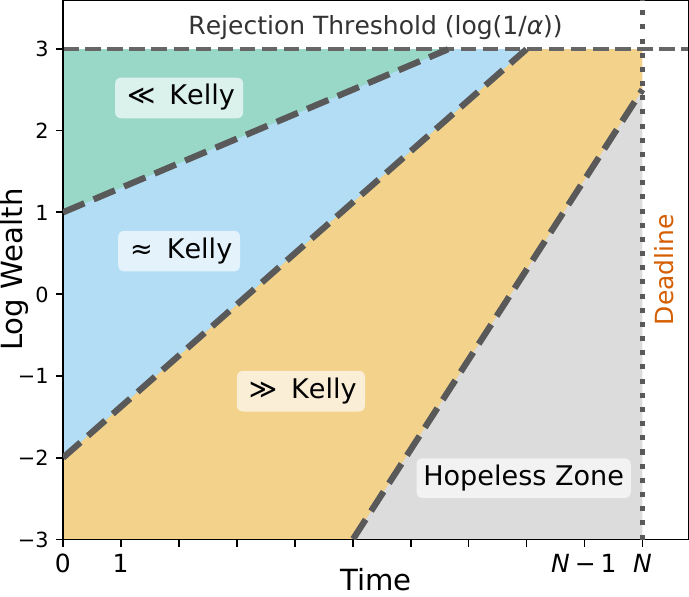}
  \caption{A representative phase diagram in the $(t,\log W_t)$ plane, illustrating  the qualitative partition of the plane into regimes where Kelly, aggressive, and conservative betting may be preferred under a finite horizon $N$ and significance level $\alpha$. Formal results implying such a partition are presented in~\Cref{sec:testing}.}
  \label{fig:phase-plot}
\end{figure}
In words, $\tau_m$ denotes the random number of observations after which the analyst stops and rejects the null. A related problem is that of constructing confidence sequences for the mean $\mu_X$, which are time-uniform generalizations of confidence intervals, and consist of a sequence of subsets $\{C_n: n \geq 0\}$, with $C_n$ being $\calF_n = \sigma(X_1, \ldots, X_n)$-measurable, and satisfying the  uniform coverage guarantee: 
\begin{align}
\begin{aligned}
\mathbb{P}\!\left( \forall n \ge 0 : \mu_X \in C_n \right) &\ge 1-\alpha. 
\end{aligned}
\label{eq:conf-seq-def}
\end{align}

While these problems have a long history going back to the pioneering works by Robbins and coauthors in the 1960s, there has been a recent surge of interest and activity in this area, led by~\citet{shafer2019game},~\citet{grunwald2024safe},~\citet{waudby2024estimating}, and~\citet{orabona2023tight}. A comprehensive recent survey on this topic can be found in~\citep{ramdas2023game}.

In many problems, the analyst may want to look at the data repeatedly and possibly stop early, but the experiment is also constrained by a hard resource limit modeled by a maximum sample size $N < \infty$. This includes applications such as online A/B tests, adaptive experimentation, and time or resource constrained scientific studies. This leads to a tension between the two classical formulations. A fixed-sample-size test designed for the terminal time $N$ can have an inflated false-alarm rate under continuous monitoring, whereas fully time-uniform procedures protect against indefinite monitoring but may be unnecessarily conservative when a hard deadline $N$ is known in advance.

This motivates the design of \emph{horizon-aware} or \emph{deadline-aware} tests and confidence sequences that retain their validity under continuous monitoring up to a prespecified deadline $N$. Concretely we seek to design stopping rules $\tau \leq N$ and confidence sets $\{C_n: 1 \leq n \leq N\}$ such that 
\begin{equation}
\begin{aligned}
&\sup_{P:\,\mathbb{E}_P[X]=m}\mathbb{E}_P(\tau \le N) \le \alpha, \\
&\inf_{P:\,\mathbb{E}_P[X]=\mu_X}\mathbb{P}_P\!\left(\forall n \le N:\, \mu_X \in C_n\right) \ge 1-\alpha.
\end{aligned}
\end{equation}
In addition to the above validity guarantees, we also want to ensure that the inference within the $N$ observations is as sharp as possible. Informally, this demand induces a certain amount of nonstationarity in the design of these procedures that is absent in the usual anytime-valid inference. 

\section{Background}
\label{sec:background}

The betting approach to constructing sequential tests and confidence sets is based on the principle of testing by betting, popularized by~\citet{shafer2021testing} and~\citet{shafer2019game}. This principle states that the evidence against a null hypothesis can be precisely quantified by the gain in wealth achievable by a (fictitious) bettor betting repeatedly on the observations under odds that are fair if the null were true. Thus, if the null is indeed true, then it is mathematically unlikely for the bettor to increase his initial wealth by large multiples irrespective of the betting strategies employed. On the other hand, if the null is not true, then good betting strategies may exploit this and lead to exponentially growing wealth. Thus, in this framework, the problem of statistical inference can be transformed into that of repeatedly betting on the stream of outcomes. 

\citet{waudby2024estimating} operationalized this principle for the task of constructing power-one tests and confidence sequences for the mean of bounded random variables. In particular, for the null $H_0: \mu_X = m$,~\citet{waudby2024estimating} constructed the following process:
\begin{align}
W_n(m)=
\begin{cases}
1, &\hspace{-0.6em} n=0,\\[-2pt]
W_{n-1}(m)\Bigl(1+\lambda_n(m)(X_n-m)\Bigr), &\hspace{-0.6em} n\ge 1.
\end{cases}
\end{align}

where $\{\lambda_n(m):n \geq 1\}$ is a predictable sequence of $[-1/(1-m), 1/m]$ valued ``bets''~(by predictable, we mean that each $\lambda_n(m)$ is $\sigma(X_1, \ldots, X_{n-1})$-measurable). This process $\{W_n(m):n \geq 0\}$ can be interpreted as the evolution of the wealth of a fictitious bettor, who starts with one dollar, and bets repeatedly on the next outcome by selecting the parameter $\lambda_n(m)$. If the null is true, and $\mu_X$ is $m$, then the process $\{W_n(m): n \geq 0\}$ is a so-called \emph{test martingale}~(this is, a nonnegative (super-) martingale with an initial value of $1$). Such processes satisfy a time-uniform generalization of Markov's inequality, called Ville's inequality~(recalled in Fact~\ref{fact:ville} in Appendix~\ref{appendix:background}), that says 
\begin{align}
\mathbb{P}_{H_0}\!\left(\exists\, n \ge 0:\, W_n(m) \ge {1}/{\alpha}\right)
\le \alpha,\; \forall\, \alpha \in (0,1].
\end{align}

This immediately suggests the following stopping time 
\begin{equation}
\begin{aligned}
\tau_m &:= \inf\{n \ge 1 : W_n(m) \ge 1/\alpha\}, 
\end{aligned}
\end{equation}
satisfying the required level-$\alpha$ property $\mathbb{P}_{H_0}(\tau_m < \infty) \leq \alpha$. 
Using the standard duality between power-one tests and confidence sequences, such stopping times can be used to construct confidence sequences $\{C_n: n\geq 1\}$ with 
\begin{align}
    C_n = \{m \in [0,1]: \tau_m > n\}. 
\end{align}
The betting strategy is crucial to the design of tests that perform well under the alternative, and confidence sequences whose width decays quickly. \citet{waudby2024estimating} proposed and analyzed several betting strategies, and conducted a thorough empirical study. \citet{orabona2023tight} employed the regret of the universal portfolio algorithm of \citet{cover1991universal, cover1996universal} to construct a computationally feasible betting confidence sequence with strong theoretical and empirical performance. Follow up works building upon the betting approach for bounded random variables include~\citet{ryu2024confidence, shekhar2023near, chen2025optimistic, clerico2025optimality}. 

All these results work under the assumption of an infinite stream of observations, which is natural in view of the definitions of power-one tests and confidence sequences. However, in practical applications, there is often a hard upper limit on the number of observations that can be collected, due to fundamental limitations of time and resources. Placing such hard limits necessitates a rethink of the betting strategies for the construction of tests and confidence sets. 

A complementary perspective is presented by~\citet{koning2026anytime}, who show that anytime-validity can be induced from a terminal procedure fixed sample size test through a Doob martingale construction. As a result they show that in settings where such constructions are tractable, one can make a classical fixed sample size test valid under continuous monitoring without losing power. Our focus is different from that of~\citet{koning2026anytime}. We restrict our attention to a class of multiplicative betting e-processes, and ask how the betting policy should be modified in the presence of a hard deadline $N$.

To the best of our knowledge, the only recent work to consider the task of constructing horizon-aware betting-based confidence intervals and tests is~\citep{voravcek2025star}. In short,~\citet{voravcek2025star} observed that in the finite horizon regime, the bets at time $t$ must take into account  (i) the (log-) distance from the threshold $\log(1/\alpha) - \log W_t$, and (ii) the remaining time $N-t$. Building upon this insight the authors introduce Sequential Target-Recalculating~(STaR) betting strategies and show that these can yield strict improvements over betting based Hoeffding and Bernstein-style tests and confidence sets. They further propose a STaR-Bets algorithm, obtained by applying the STaR principle to a variance-adaptive empirical Bernstein type test~(``Bets'') which is similar in construction to the predictable plug-in empirical-Bernstein betting scheme~(PrPlEB) of~\citet{waudby2024estimating}. They prove that the underlying Bets procedure yields a CS whose width at $N$ is within $(1+o(1))$ factor of the optimal fixed sample-size CI. 
Our approach, as described in~\Cref{sec:testing} differs fundamentally from that of~\citet{voravcek2025star}. In particular, we formulate the task of horizon-aware betting as a finite horizon optimal control problem aimed at maximizing the probability of rejection within the horizon $N$ while maintaining anytime-validity. While explicitly identifying the optimal finite horizon policy for this dynamic program~(DP) is not tractable in general, we obtain some insights about the optimal policy by considering some specific conditions~(discrete distributions, either ahead of or far behind the schedule, etc.). Together these exploratory results allow us to construct a two-dimensional ``phase-diagram'' identifying regimes in the $(t, \log W_t)$ plane where the optimal strategy is close to Kelly, more aggressive than Kelly, and more conservative than Kelly. Based on these insights we develop a class of computationally efficient heuristics~(hedging over $\epsilon$-greedy strategies) and a universal deep-Q-learning~(DQN) based betting strategy. Through empirical results we illustrate how these learned betting strategies can lead to improvements in rejection probability and CS widths.

\section{Horizon-Aware Testing by Betting}
\label{sec:testing}

We now formally describe the problem of designing horizon-aware power-one tests for the mean of bounded random variables. Let $\{X_n: n\geq 1\}$ denote a sequence of \iid $[0,1]$-valued random variables, drawn according to  a distribution $P_X$ with an unknown mean $\mu_X \in [0,1]$. Throughout this section, we will assume that $\{\calF_n: n \geq 0\}$ represents a filtration on some  underlying probability space, such that $\sigma(X_1, \ldots, X_n) \subset \calF_n$ for all $n \geq 1$. For a known $m \in [0,1]$, we  are interested in constructing a power-one test $\tau_m$ for the null $H_{0,m}: \mu_X = m$, against the alternative $H_{1,m}: \mu_X \neq m$.  Formally, we want to define a $\mathbb{N} \cup \{\infty\}$-valued random variable $\tau_m$~(here, $\mathbb{N} = \{0, 1, \ldots\}$), such that for a prespecified $\alpha \in (0,1]$, we have 
\begin{align}
\begin{aligned}
&\{\tau_m \le n\}\in \mathcal{F}_n \quad \text{for all } n\ge 1, \\
&\mathbb{P}_{H_{0,m}}(\tau_m<\infty)\le \alpha, 
\quad \text{and} \quad
\mathbb{P}_{H_{1,m}}(\tau_m<\infty)=1.
\end{aligned}
\end{align}
This is the classical definition of a level-$\alpha$ power-one test. 
However, in this paper, we assume that we also have a strict prespecified deadline $N \in \mathbb{N}$; that is, we can not collect more than $N$ observations. If $\tau_m = \inf \{n \geq 1: W_n(m) \geq 1/\alpha\}$ is defined as the first $1/\alpha$ crossing of a test martingale~(or more generally an e-process), then we can restate our objective as 
\begin{equation}
\begin{aligned}
\mathrm{maximize}\quad & \mathbb{E}_{H_{1,m}}[u(\tau_m) \boldsymbol{1}_{\tau_m \leq  N}] \\
\qtext{subject to}\quad & \mathbb{E}_{H_{0,m}}[W_{\eta}(m)] \le 1,
\end{aligned}
\label{eq:horizon-aware-testing-1}
\end{equation}
where $u$ is any nonnegative and nonincreasing ``utility function'', and  $\eta$ denotes any $[N]\coloneqq \{1, 2, \ldots, N\}$-valued stopping time satisfying $\{\eta \leq n\} \in \calF_n$ for all $n \in [N]$. 

\begin{figure*}[t]
  \centering
  \begin{subfigure}[t]{0.32\textwidth}
    \centering
    \includegraphics[width=\linewidth]{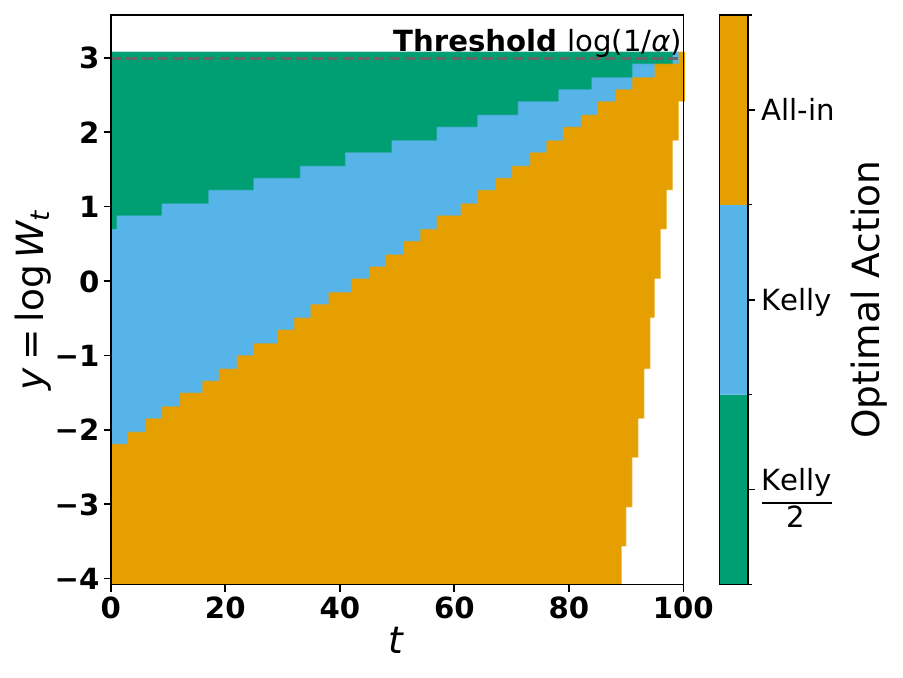}
    \caption{$m = 0.45, \mu_X = 0.40$}
    \label{fig:dynamical-program-a}
  \end{subfigure}\hfill
  \begin{subfigure}[t]{0.32\textwidth}
    \centering
    \includegraphics[width=\linewidth]{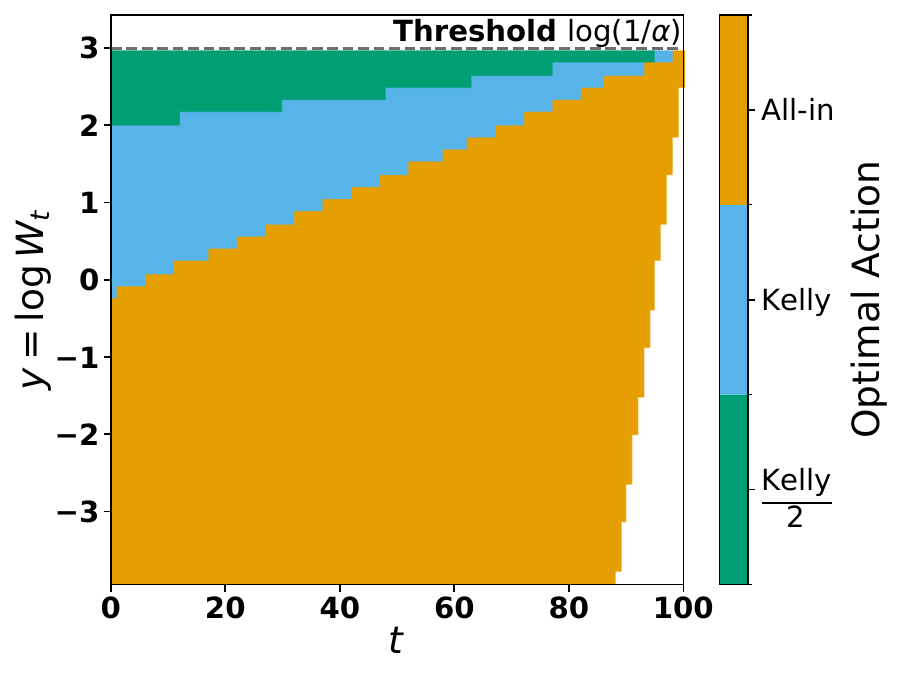}
    \caption{$m = 0.43, \mu_X = 0.40$}
    \label{fig:dynamical-program-b}
  \end{subfigure}\hfill
  \begin{subfigure}[t]{0.32\textwidth}
    \centering
    \includegraphics[width=\linewidth]{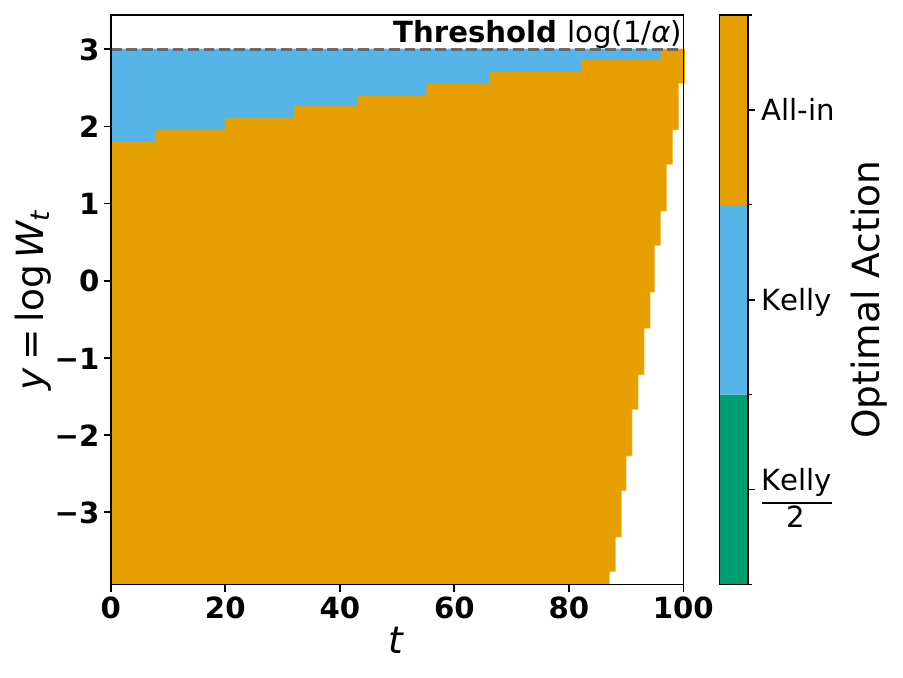}
    \caption{$m = 0.41, \mu_X = 0.40$}
    \label{fig:dynamical-program-c}
  \end{subfigure}

  \caption{The phase diagram of optimal actions in $(t, \log W_t)$ plane demonstrating the change of optimal policies with respect to problem difficulty. $X_i$ drawn from a two-component Beta mixture: $X_i \sim \mathrm{Beta}(2.4,3.6)$ with probability $1/2$ and $X_i \sim \mathrm{Beta}(4.8,7.2)$ with probability $1/2$, which has mean $\mu_X=0.40$ and varying variance across realizations. Experimental details are in Section~\ref{subsec:optimal-actions}.} 
  \label{fig:dynamical-program}
\end{figure*}
Working in the betting framework, we can construct level-$\alpha$ stopping times as 
\begin{equation}
\begin{aligned}
\tau_m = &\inf\{n \ge 1 : W_n \ge 1/\alpha\}, \quad \text{with } W_0 = 1, \\
& W_n = W_{n-1}\Bigl(1 + \lambda_n(m)\,(X_n - m)\Bigr).
\end{aligned}
\end{equation}
where each $\lambda_n(m)$ is $\calF_{n-1}$-measurable lying in the range $\Lambda_m = [-1/(1-m), 1/m]$. In other words, the sequence $\{\lambda_n(m): n \geq 1\}$ is a predictable sequence with respect to the filtration $\{\calF_n: n \geq 0\}$. 
Since the stopping time $\tau_m$ associated with any sequence of predictable bets lying in $\Lambda_m$ is level-$\alpha$ by construction due to Ville's inequality~(see~Fact~\ref{fact:ville} in~Appendix~\ref{appendix:background}), we can restate the objective of~\eqref{eq:horizon-aware-testing-1} with $u(t)=1$ for simplicity as follows:
\begin{equation}
\begin{aligned}
&\mathrm{maximize}\quad  \mathbb{P}\!\left( \max_{1 \le n \le N} W_n \;\ge\; \frac{1}{\alpha} \right), \\
&\text{s.t.}\quad 
\begin{cases}
W_0 = 1,\quad W_n = W_{n-1}\, \bigl(1+\lambda_n(m)(X_n-m)\bigr), \\[2pt]
\lambda_n \text{ is } \mathcal{F}_{n-1}\text{-measurable for all } n \ge 1, \\[2pt]
\lambda_n \in \Lambda_m=\left[ \dfrac{-1}{1-m},\, \dfrac{1}{m} \right]\ \forall\; n \ge 1 .
\end{cases}
\end{aligned}
\end{equation}

For every $P_X$ with $\mu_X \neq m$, the problem described above will have a sequence of optimal bets $\{\lambda^*_n(m): 1 \leq n \leq N\}$. 
To formalize this, fix such a $P_X$, and with $Y_t = \log W_t$, and $h_m(\lambda, x) = \log(1 + \lambda(x-m))$, define 
\begin{equation}
\begin{aligned}
V_t(y)
&= \sup_{\{\lambda_i\}_{i=t+1}^{N}}
\mathbb{P}\!\left(
\max_{j \in \{t+1,\ldots,N\}} Y_j \ge b
\;\middle|\; Y_t = y
\right), \\
V_N(y)
&= \boldsymbol{1}\!\left\{ y \ge b \right\}, \quad b \coloneqq \log (1/\alpha),
\end{aligned}
\end{equation}
where we use the notation $\boldsymbol{1}\{S\}$ to denote the $0$-$1$-valued indicator of the statement $S$.
This suggests the following Bellman recursion: 
\begin{equation}
\label{eq:bellman-recursion-1}
\begin{aligned}
&V_t(y)
= \max_{\lambda \in \Lambda_m}\,
\mathbb{E}_{X \sim P_X}\Big[
\boldsymbol{1}\{y + h_m(\lambda, X) \ge \log(1/\alpha)\} \\
&
+ \boldsymbol{1}\{y + h_m(\lambda, X) < \log(1/\alpha)\}\,
V_{t+1}\!\big(y + h_m(\lambda, X)\big)
\Big].
\end{aligned}
\end{equation}
Instead of attempting to identify the optimal policy exactly, we present some results that provide partial characterization of the optimal policy under certain conditions. Together these results allow us to create an informal ``phase diagram'' of the behavior of the optimal betting policy that we alluded to in~\Cref{fig:phase-plot}. 

To state our first result, we need some additional notation. Throughout we will work with $\Lambda$ denoting a compact subset of $\Lambda_m$ that contains the Kelly bet $\kellybet_m$. For any $\lambda \in \Lambda$, let 
\begin{align}
\begin{aligned}
&L(\lambda) = \mathbb{E}[h_m(\lambda, X)], \;
L_{\max} = L(\kellybet_m), \quad \text{and}\\
&B \coloneqq \sup_{\lambda \in \Lambda,\, x \in [0,1]} \bigl|h_m(\lambda, x)\bigr| < \infty.
\end{aligned}
\end{align}

\begin{theorem}
\label{theorem:kelly-near-optimal}
Consider any time $t$, and log-wealth $y = \log W_t$. Let $b = \log(1/\alpha)$ denote the threshold, and $T = N-t$ the remaining time.
Fix a $\delta > 0$, and assume that there exists an $\epsilon \equiv \epsilon(\delta)>0$, such that $|\lambda - \kellybet_m| \geq  \delta$ implies $L(\lambda) \leq L_{\max} - \epsilon$. Let $\Delta$ denote the term $T L_{\max} - (b-y)$. 
\begin{itemize}
    \item Suppose we follow the policy that sets $\lambda_i = \kellybet_m$ for all $i \in \{t+1, \ldots, N\}$. Then, we have 
    \begin{align}
        \Delta \geq  B \sqrt{8T \log T} \; \implies \; \mathbb{P}\lp \tau \leq N \rp \geq 1- \frac 1T. 
    \end{align}
    \item Now, let us consider any other policy that plays $\{\lambda_{t+1}, \ldots, \lambda_N\}$ taking values in $\Lambda$ such that for some $\rho >0$, the condition 
        $\left\lvert \left\{i \in \{t+1, \ldots, t+k\}: |\lambda_i - \kellybet_m| \geq \delta \right\} \right\rvert \geq \rho k$ holds for all $k \in \{1,2, \ldots, T\}$. 
    Then, we have 
    \begin{align}
        \Delta \leq \rho \epsilon T - B\sqrt{8T \log 2} \; \implies \; 
        \mathbb{P}(\tau \leq N)  \leq \frac 12. 
    \end{align}
\end{itemize}
\end{theorem}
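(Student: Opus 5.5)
Both parts reduce to concentration of the log-wealth increments $h_m(\lambda_i, X_i)$, which lie in $[-B,B]$. We use Hoeffding's inequality for sums of i.i.d. terms in the first part and Azuma--Hoeffding for martingale differences in the second. Throughout, $\tau$ denotes the first time after $t$ at which $Y_j=y+\sum_{i=t+1}^{j} h_m(\lambda_i,X_i)\ge b$.

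\emph{First part (Kelly).} Under the constant Kelly policy the increments $h_m(\kellybet_m,X_i)$, $i=t+1,\ldots,N$, are i.i.d., bounded in $[-B,B]$, and have mean $L_{\max}$. Since $\{Y_N\ge b\}\subseteq\{\tau\le N\}$, it is enough to lower bound $\mathbb{P}(Y_N\ge b)$. Now $Y_N<b$ means $\sum_i (h_m(\kellybet_m,X_i)-L_{\max})<(b-y)-TL_{\max}=-\Delta$. Hoeffding's inequality with range $2B$ gives
\begin{align}
\mathbb{P}(Y_N<b)\le \exp\!\Bigl(-\frac{2\Delta^2}{T(2B)^2}\Bigr)=\exp\!\Bigl(-\frac{\Delta^2}{2TB^2}\Bigr).
\end{align}
Plugging in $\Delta\ge B\sqrt{8T\log T}$ yields the bound $T^{-4}\le 1/T$, so the constant leaves ample room.

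\emph{Second part (deviating policies).} Here the bets are predictable, so we pass to the martingale $M_k=\sum_{i=t+1}^{t+k}\bigl(h_m(\lambda_i,X_i)-L(\lambda_i)\bigr)$. Because $\lambda_i$ is $\calF_{i-1}$-measurable and $X_i$ is independent of $\calF_{i-1}$, the conditional mean of $h_m(\lambda_i,X_i)$ is $L(\lambda_i)$, and the increments of $M_k$ lie in an interval of length $2B$.

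The drift is bounded as follows. Each step with $|\lambda_i-\kellybet_m|\ge\delta$ contributes $L(\lambda_i)\le L_{\max}-\epsilon$, and every other step contributes at most $L_{\max}$. With the counting condition this gives $\sum_{i\le t+k}L(\lambda_i)\le kL_{\max}-\rho\epsilon k$ for every $k\le T$. Hence $\{\tau\le N\}$ implies that for some $k\le T$,
\begin{align}
M_k\ge b-y-k(L_{\max}-\rho\epsilon).
\end{align}

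If $L_{\max}-\rho\epsilon\ge0$, the right-hand side is minimized at $k=T$, where it equals $\rho\epsilon T-\Delta\ge B\sqrt{8T\log 2}$. So $\{\tau\le N\}\subseteq\{\max_{k\le T}M_k\ge B\sqrt{8T\log2}\}$. The maximal Azuma--Hoeffding inequality (Doob's inequality applied to the exponential supermartingale $e^{sM_k-s^2B^2k/2}$) bounds this probability by
\begin{align}
\exp\!\Bigl(-\frac{8TB^2\log2}{2TB^2}\Bigr)=2^{-4}\le \tfrac12.
\end{align}

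\emph{Main obstacle.} The step I expect to be delicate is the uniformity over $k$ with a random crossing time. The counting condition was stated for all $k$ precisely so that the drift bound holds at every intermediate time; the maximal inequality then handles the supremum, so no union bound over $k$ is needed. A further subtlety arises if $L_{\max}-\rho\epsilon<0$. In that case the worst $k$ is the smallest, and the threshold $b-y-k(L_{\max}-\rho\epsilon)$ at small $k$ is no longer controlled by $\Delta$. I would handle this with the bound $b-y-k(L_{\max}-\rho\epsilon)\ge b-y$. If $b-y$ is not large this case may need an extra implicit assumption, such as $y<b$ together with $L_{\max}\ge\rho\epsilon$, which holds whenever $L(\lambda)\ge0$ is attainable by a deviating bet. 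I would state this explicitly if needed.
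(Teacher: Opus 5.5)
Your proposal is correct and follows the paper's proof essentially step for step. The first part applies Hoeffding to the i.i.d.\ Kelly increments, and the second part uses the predictable-drift decomposition, the counting condition at every $k$ to get $\sum_{i=t+1}^{t+k}L(\lambda_i)\le k(L_{\max}-\rho\epsilon)$, the reduction to $\max_{k\le T}S_k\ge\rho\epsilon T-\Delta$, and the exponential supermartingale $e^{\theta S_k-B^2\theta^2k/2}$ with Ville's maximal inequality (your constants $T^{-4}$ and $2^{-4}$ are just sharper than the paper's).

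Your caveat is well founded and is not an artifact of your route. The paper passes from $(b-y)-k(L_{\max}-\rho\epsilon)$ to $(b-y)-T(L_{\max}-\rho\epsilon)$ without comment, which is valid only when $L_{\max}\ge\rho\epsilon$. Without that assumption the second claim is false. Take $m=0.9$, $X\sim\mathrm{Bernoulli}(0.85)$, $\Lambda=\{\kellybet_m,1.1\}$, $\delta=1$, $\epsilon=0.6$, $\rho=1$, $T=400$ and $b-y=0.1$. Every hypothesis holds, since $B=\log 100$ and $\Delta\approx 4.8\le 240-B\sqrt{3200\log 2}\approx 23$. Yet always betting $1.1$ crosses at the first step with probability $0.85$, because $\log 1.11>0.1$. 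So $L_{\max}\ge\rho\epsilon$ must be added as a hypothesis; it cannot be derived.
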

We prove this result in Appendix~\ref{proof:kelly-near-optimal}. 
\begin{remark}
    \label{remark:kelly-optimal}
    \Cref{theorem:kelly-near-optimal} formalizes the  phenomena that in the interior of the $(t, \log W_t)$ plane, policies that deviate significantly from Kelly betting cannot be optimal. In particular, with $T = N-t$ denoting the time-to-go, and $\Delta = T L_{\max} - (b-y)$, the first part of the result says that if $\Delta \geq B \sqrt{8T \log T}$, then the Kelly betting policy rejects the null with probability at least $1-1/T$. In other words, if the bettor is almost ``on-schedule'', with the required per-step drift $(b-y)/T$ approximately $L_{\max} - \sqrt{{\log T}/{T}}$, then the Kelly policy will reliably hit the required threshold with high probability especially when $T$ is large.
    On the other hand, any policy that spends a nontrivial fraction $\rho$ of the remaining time playing bets that are $\delta$-away from Kelly suffers a linear drift deficit of the order $\rho \epsilon T$. To hit the threshold with such a policy, the wealth process needs an upward martingale fluctuation of at least the size $\rho \epsilon T - \Delta$. The second part of the result says that if $\Delta \leq \rho \epsilon T - B\sqrt{8 T \log 2}$, then this event happens with probability at most $1/2$. 
    To summarize, in this ``tubular'' interior region of the state space with $\sqrt{T \log T} \lesssim \Delta \lesssim \rho \epsilon T - \sqrt{T}$, any near-optimal strategy must stay close to Kelly for large portions of the remaining time. 
\end{remark}

\begin{example}
   \label{example:kelly-optimal} We now discuss an example in which the conditions of~\Cref{theorem:kelly-near-optimal} are satisfied. Fix $m=1/2$, and $X \sim \mathrm{Bernoulli}(p)$ with $p=0.7$. The Kelly bet in this case is $\kellybet_m = 4p-2= 0.8$ with $L_{\max} \approx 0.082$. If we fix $\Lambda = [0, 1]$, then we get $B = \log 2  \approx 0.69$. Moreover, if we select $\delta = 0.6$, then we get $\epsilon \equiv \epsilon(\delta) = L(0.8) - L(0.2) \approx 0.047$. Finally, consider a state $(t,y)$ with $T = N-t = 50000$, and $b-y = 1475$. Then, observe that $B \sqrt{8 T \log T} \approx 1442 < \Delta$, which means that the Kelly strategy hits the threshold with probability at least $1-1/T = 1 - 2 \times 10^{-t}$. Moreover, if $\rho = 0.8$, then we have $\rho \epsilon T - B \sqrt{8 T \log 2} \approx 1532 > \Delta$, which means that any policy that deviates a fraction $\rho$ of time from Kelly must have $\mathbb{P}(\tau \leq N) \leq 1/2$. 
\end{example}

\begin{figure*}[t]
  \centering

  \begin{subfigure}[t]{0.50\textwidth}
    \centering
    \begin{minipage}[t]{0.50\linewidth}
      \centering
      \includegraphics[width=\linewidth]{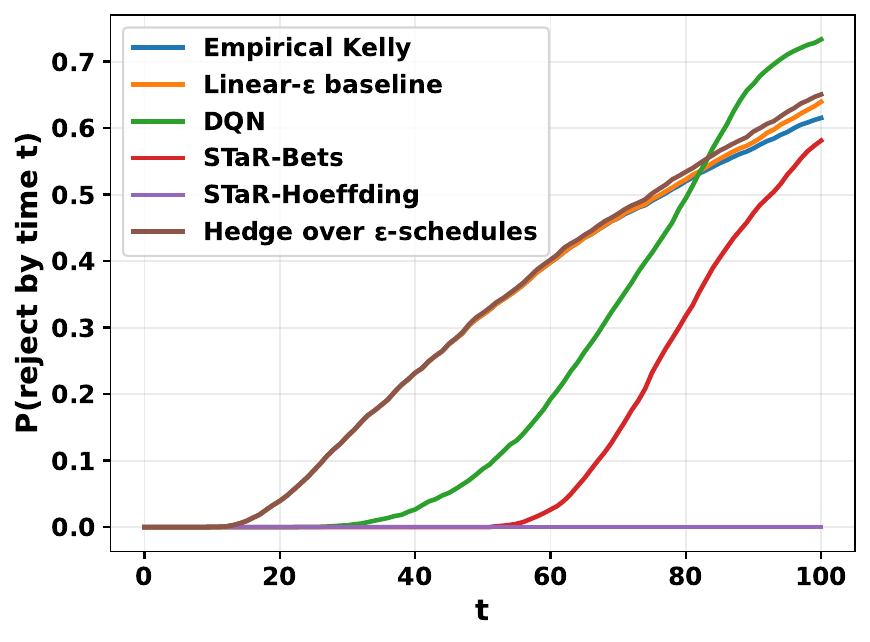}
    \end{minipage}\hfill
    \begin{minipage}[t]{0.50\linewidth}
      \centering
      \includegraphics[width=\linewidth]{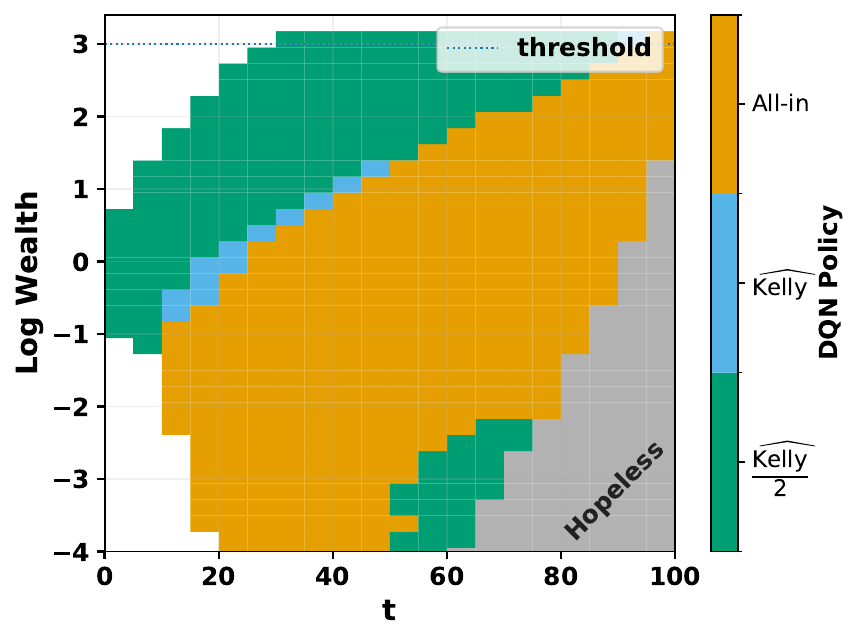}
    \end{minipage}

    \caption{$m = 0.45, \mu_X = 0.40, \mathrm{conc} = 6$}
    \label{fig:rejection-curves-a}
  \end{subfigure}\hfill
  \begin{subfigure}[t]{0.50\textwidth}
    \centering
    \begin{minipage}[t]{0.50\linewidth}
      \centering
      \includegraphics[width=\linewidth]{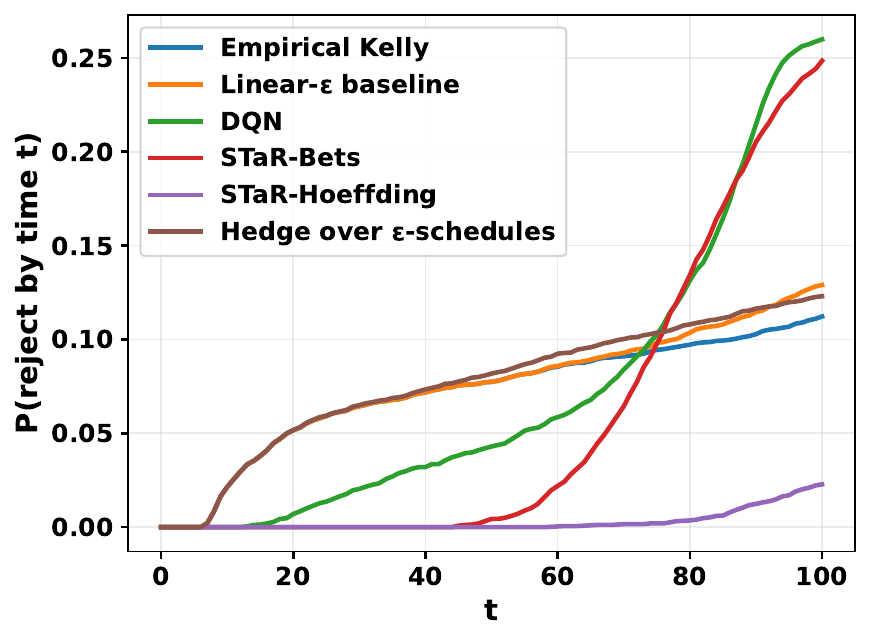}
    \end{minipage}\hfill
    \begin{minipage}[t]{0.50\linewidth}
      \centering
      \includegraphics[width=\linewidth]{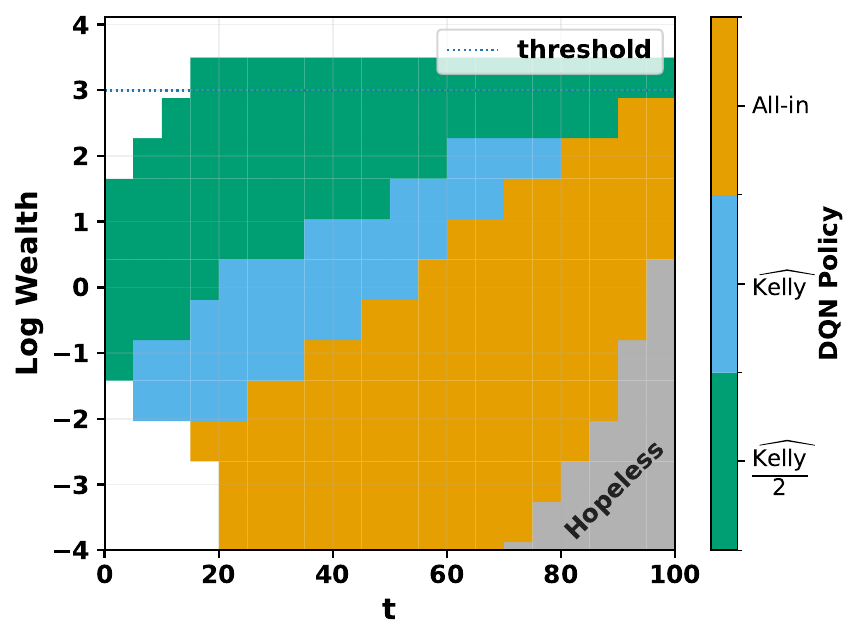}
    \end{minipage}

    \caption{$m = 0.45, \mu_X = 0.40, \mathrm{conc} = 1$}
    \label{fig:rejection-curves-b}
  \end{subfigure}

  \caption{We compare the probability of rejection by time \(t\) with \(N=100\), \(\alpha=0.05\), and null mean \(m=0.45\).
We simulate 5{,}000 length-\(N\) sequences \(\{X_i\}\) from a two-component Beta mixture.
We define the \emph{hopeless} region as the set of states where the optimal policy attains
\(\mathbb{P}(\text{reject by } N) < 10^{-4}\).
In each of \textbf{(a)} and \textbf{(b)}, the right panel shows the DQN's most frequently selected action,
illustrating the learned phase diagram adapts across data-generating distributions. Linear-$\epsilon$ baseline is implemented with $\eta=1/2$ and the Hedge over $\epsilon$ is implemented with varying $\eta$. All-in refers to  $\lambda_{\max}$. The details are provided in Section~\ref{subsec:eps-greedy}.
\textbf{(a):} \(X_i \sim \mathrm{Beta}(2.4,3.6)\) with probability \(1/2\) and
\(X_i \sim \mathrm{Beta}(4.8,7.2)\) with probability \(1/2\).
\textbf{(b):} \(X_i \sim \mathrm{Beta}(0.4,0.6)\) w.p. \(1/2\) and
\(X_i \sim \mathrm{Beta}(0.8,1.2)\) w.p. \(1/2\).
}
  \label{fig:rejection-curves}
\end{figure*}

The previous result suggests that good policies cannot deviate too much too often from Kelly betting certain ``interior'' regions of the $(t, \log W_t)$ plane. We now present two results that give us sufficient conditions under which it is advantageous to deviate from Kelly betting. To simplify the arguments, we will focus on the case of distributions with finite support $\calX \subset [0, 1]$ with $|\calX|=k$.  First we consider an example, in which the bettor has ``fallen behind'', which formally means the following, with $B_K \coloneqq \max_{x \in \calX} h_m(\kellybet_m, x)$:  
\begin{align}
    r \equiv r(t, y, b)  \coloneqq \frac {b-y}{N-t}  \; > \; \max\left\{  L_{\max}, \; \frac 12 B_K  \right\}.  \label{eq:fallen-behind-condition}
\end{align}
In words, this condition says that the average drift $(b-y)/(N-t)$ needed to hit the threshold in $T = N-t$ steps is larger than the one-step drift in log-wealth induced by the Kelly bet,   $L_{\max} = L(\kellybet_m)$. Additionally, we also impose the condition that it is impossible for the bettor to hit the threshold in fewer than $T/2$ steps using Kelly bets. 
Now, for each $\lambda$, define the following KL-rate function, with $D$ denoting the KL divergence~(or relative entropy):  
\begin{align}
\begin{aligned}
&\Iplus(\lambda, r)
= \inf_{Q \in \Qplus(r,m,\lambda)} D(Q \parallel P_X), \quad\text{where}\\
&\Qplus(r,m,\lambda)
= \Big\{ Q \in \calP(\calX) : \mathbb{E}_{Q}\!\big[h_m(\lambda, X)\big] \ge r \Big\}.
\end{aligned}
\label{eq:I-lambda-r-def}
\end{align}
Thus, $\Iplus(\lambda, r)$ denotes the minimum KL divergence between a sufficiently tilted distribution $Q$ that allows the drift to exceed $r$, and the true distribution $P_X$ under consideration.  We now state a precise condition under which a more aggressive (than Kelly) betting policy has a higher probability of hitting the threshold. 
\begin{proposition}
    \label{prop:aggressive-betting} Suppose $(t, y)$ satisfy the condition in~\eqref{eq:fallen-behind-condition}, and let $\tau(\lambda)$ denote the stopping time associated with the policy that plays a constant bet $\lambda$ for all $T=N-t$ steps starting at $(t,y)$. Then, for any $\aggbet > \kellybet_m$, we have  
\begin{align}
\begin{aligned}
&\Iplus(\aggbet, r) < \frac{\Iplus(\kellybet_m, r)}{2}
                    - c_T^+\\
&\implies \mathbb{P}\!\big(\tau(\kellybet_m) \leq N\big)
          \leq \mathbb{P}\!\big(\tau(\aggbet) \leq N\big).
\end{aligned}
\end{align}
where the term $c_T^+ =  \calO\lp \tfrac {\log T}{T} \rp$ is defined explicitly in~\eqref{eq:cT-plus-def} in Appendix~\ref{proof:aggressive-betting}. 
\end{proposition}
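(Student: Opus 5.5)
We compare the two constant-bet policies by sandwiching their hitting probabilities between method-of-types bounds. On the finite alphabet $\calX$ with $|\calX|=k$, the log-wealth after $j$ steps of a constant bet $\lambda$ is $y+\sum_{i\le j} h_m(\lambda,X_i) = y + j\,\mathbb{E}_{\hat P_j}[h_m(\lambda,X)]$, where $\hat P_j$ is the empirical type. The event $\{\tau(\lambda)\le N\}$ is therefore the event that for some $j\le T$ the type $\hat P_j$ satisfies $\mathbb{E}_{\hat P_j}[h_m(\lambda,X)]\ge (b-y)/j = rT/j$. The plan is to upper bound this for $\lambda=\kellybet_m$ and to lower bound it for $\lambda=\aggbet$. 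The stated inequality on $\Iplus$ should then force the lower bound to dominate the upper bound.

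\textbf{Upper bound for Kelly.} Apply a union bound over $j$. Since $h_m(\kellybet_m,x)\le B_K$, crossing at step $j$ requires $jB_K\ge rT$, i.e.\ $j\ge rT/B_K > T/2$ by~\eqref{eq:fallen-behind-condition}. So only $j\in(T/2,T]$ contribute. For each such $j$ the required drift is $rT/j\ge r$, so the set of admissible types is contained in $\Qplus(r,m,\kellybet_m)$. Sanov's theorem on finite alphabets (number of types at most $(j+1)^k$) then gives
\[
\mathbb{P}(\text{cross at } j)\le (j+1)^k e^{-j\Iplus(\kellybet_m,r)} \le (T+1)^k e^{-T\Iplus(\kellybet_m,r)/2}.
\]
Summing over at most $T$ values of $j$ yields $\mathbb{P}(\tau(\kellybet_m)\le N)\le T(T+1)^k e^{-T\Iplus(\kellybet_m,r)/2}$. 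This is where the factor $1/2$ in the hypothesis comes from. Note that $r>L_{\max}$ guarantees $P_X\notin\Qplus$, so the rate is positive.

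\textbf{Lower bound for the aggressive bet.} It suffices to look at the last step: $\mathbb{P}(\tau(\aggbet)\le N)\ge \mathbb{P}\big(\mathbb{E}_{\hat P_T}[h_m(\aggbet,X)]\ge r\big)$. Use the matching lower bound in the method of types: for any type $Q$ with denominator $T$, $\mathbb{P}(\hat P_T=Q)\ge (T+1)^{-k}e^{-TD(Q\|P_X)}$.

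\textbf{Main obstacle.} The difficulty is approximating the minimizer $Q^\star$ of $\Iplus(\aggbet,r)$ by a $T$-type $Q_T$ that still satisfies $\mathbb{E}_{Q_T}[h_m(\aggbet,X)]\ge r$, with $D(Q_T\|P_X)\le \Iplus(\aggbet,r)+O(\log T/T)$. I would proceed in three moves:
\begin{itemize}
\item First tilt $Q^\star$ slightly toward the atom maximizing $h_m(\aggbet,\cdot)$, which is feasible since $r$ is below the maximum for the aggressive bet.
\item Then round to a grid of mesh $1/T$, losing $O(k/T)$ in mass.
\item Finally use continuity of $D(\cdot\|P_X)$ on the simplex with full-support $P_X$, via a Lipschitz bound involving $\log(1/\min_x P_X(x))$, to control the KL change.
\end{itemize}
This gives $\mathbb{P}(\tau(\aggbet)\le N)\ge (T+1)^{-k}e^{-T\Iplus(\aggbet,r)-C}$.

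\textbf{Combining.} Comparing exponents, the desired inequality holds as soon as
\[
\Iplus(\aggbet,r) \le \frac{\Iplus(\kellybet_m,r)}{2} - \frac{\log T + 2k\log(T+1) + C}{T}.
\]
This last term is what I would define as $c_T^+=\calO(\log T/T)$.
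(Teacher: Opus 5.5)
Your proof is correct and has the same structure as the paper's: a union bound over $j\in(T/2,T]$, Sanov's upper bound, and the monotonicity $\Iplus(\kellybet_m, rT/j)\ge \Iplus(\kellybet_m,r)$ for Kelly; the type-class lower bound at the final time for $\aggbet$; and a quantization of the minimizer $Q^\star$ of $\Iplus(\aggbet,r)$ to a $T$-type. You diverge from the paper only in that quantization step, and there the paper's route is simpler and gives a better constant.

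\textbf{How the paper avoids tilting.} Lemma~\ref{lemma:approximation-result} floors each $Tq^\star_i$ and gives the $R=\sum_i f_i$ leftover units to the $R$ atoms with the largest values of $h_m(\aggbet,\cdot)$; here $f_i$ is the fractional part of $Tq^\star_i$. Since $f_i\in[0,1)$ and $\sum_i f_i=R$, the sum $\sum_i f_i\, h_m(\aggbet,x_i)$ is at most the sum of the top $R$ values. Hence $\mathbb{E}_{Q_T}[h_m(\aggbet,X)]\ge \mathbb{E}_{Q^\star}[h_m(\aggbet,X)]\ge r$ holds automatically. No slack is needed, and the KL cost is only $\tfrac{k}{T}\bigl(2+\log(T/p_{\min})\bigr)$, which is exactly what enters the explicit $c_T^+$ in~\eqref{eq:cT-plus-def}.

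\textbf{What your tilt costs.} Your tilt must create slack $\theta\,(h_{\max}-r)$, with $h_{\max}=\max_x h_m(\aggbet,x)$, large enough to absorb a generic rounding error of order $k\max_x|h_m(\aggbet,x)|/T$. This adds roughly $k\max_x|h_m(\aggbet,x)|\log(1/p_{\min})/(h_{\max}-r)$ to your $C$. That is still $\calO(1)$ for fixed problem parameters, so you do prove the implication with some $c_T^+=\calO(\log T/T)$. However, your constant blows up as $r\uparrow h_{\max}$ and is large when $\aggbet$ is near the edge of $\Lambda_m$. So you prove the proposition under a stronger hypothesis than the paper's explicit $c_T^+$. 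The boundary case $h_{\max}=r$ cannot be tilted at all; it is harmless only because $Q^\star$ is then the point mass at the maximizing atom, which is already a $T$-type.

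\textbf{Two smaller points.}
\begin{itemize}
\item $D(\cdot\parallel P_X)$ is not Lipschitz on the simplex. The cross-entropy part is Lipschitz with constant $\log(1/p_{\min})$, but the $q\log q$ part only has modulus $(2+\log T)/T$ at scale $1/T$. That modulus is where your $\log T$ in $C$ actually comes from.
\item You already showed only $j\in(T/2,T]$ contribute. So the union bound has $T/2$ terms, and $\log T$ can be improved to $\log(T/2)$, as in the paper.
\end{itemize}
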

The proof of this result is in~Appendix~\ref{proof:aggressive-betting}. The proposition above provides a rigorous though conservative finite-$T$  sufficient condition, since it relies on combinatorial type-based large-deviation arguments. To build intuition, we now present an example, that suppresses the finite-$T$  correction term~($c_T^+$) and compares the two rate function terms $\Iplus(\aggbet, r)$ and $(1/2)\Iplus(\kellybet, r)$ directly. We then verify by direct calculation that the corresponding hitting probabilities are also ordered in the same way. 
\begin{example}[]
\label{example:aggressive-betting}
Fix $m=1/2$ and $X\sim\mathrm{Bernoulli}(p)$ with $p=0.6$. Take $\alpha=0.05$, so $b=\log(1/\alpha)=\log 20$.
Suppose $T=N-t=20$ steps remain and $y=\log W_t=0$, so $r=(b-y)/T=\log 20/20\approx 0.150$.
The Kelly bet is $\kellybet_m=\frac{p-m}{m(1-m)}=0.4$ with $L_{\max}=L(\kellybet_m)\approx 0.020$ and
$B_K=\max_{x\in\{0,1\}}h_m(\kellybet_m,x)=\log(1.2)\approx 0.182$, hence
$r>\max\{L_{\max},\tfrac12 B_K\}$ and we are in~\eqref{eq:fallen-behind-condition}.
Moreover $1.2^{10}<20$, so Kelly cannot reach the threshold within $T/2$ steps.
Taking instead an aggressive bet $\aggbet=1.5$ gives multipliers $1.75$ on $X=1$ and $0.25$ on $X=0$, so $1.75^6>20$, suggesting that $\aggbet$ might be a better option. 
Indeed, direct calculation reveals that $\Iplus(\aggbet, r) \approx 0.081 < (1/2) \Iplus(\kellybet, r) \approx 0.132$, and  $\mathbb{P}(\tau(\aggbet)\le N)\approx 0.13$ beats $\mathbb{P}(\tau(\kellybet_m)\le N)\approx 8.6\times 10^{-4}$.
\end{example}

\begin{figure*}[t]
  \centering
  \begin{subfigure}[t]{0.33\linewidth}
    \centering
    \includegraphics[width=\linewidth]{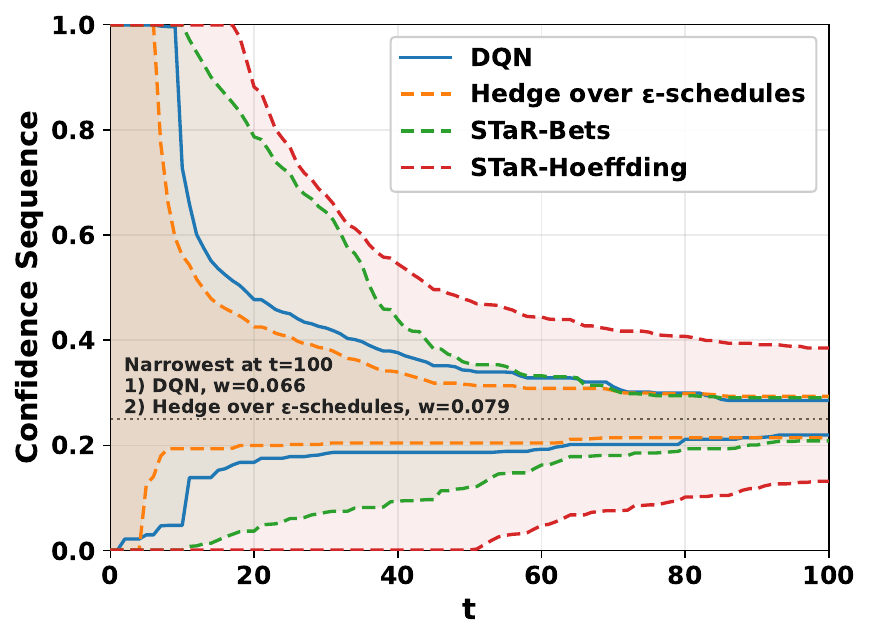}
    \caption{$\mu_X=0.25$}
    \label{fig:three-a}
  \end{subfigure}\hfill
  \begin{subfigure}[t]{0.33\linewidth}
    \centering
    \includegraphics[width=\linewidth]{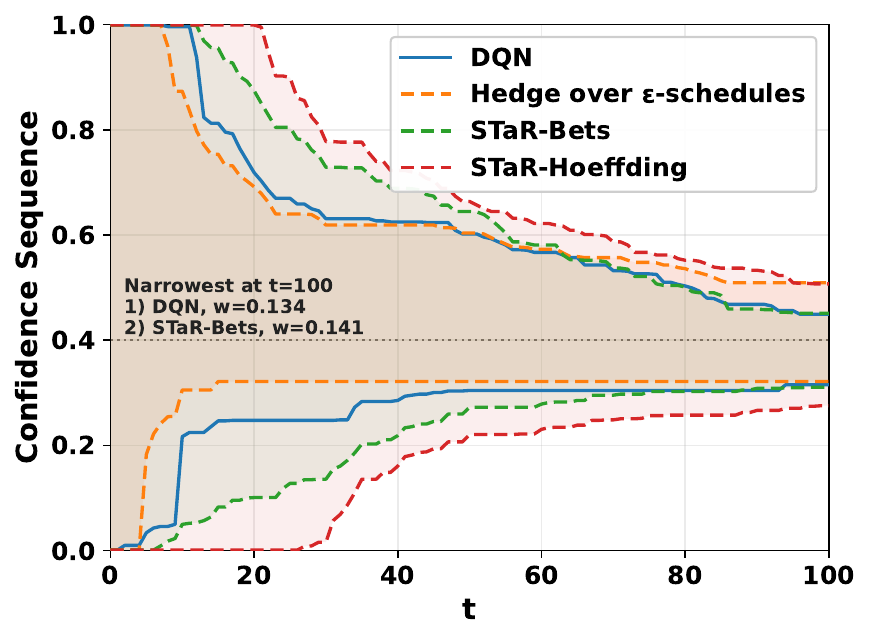}
    \caption{$\mu_X=0.4$}
    \label{fig:three-b}
  \end{subfigure}\hfill
  \begin{subfigure}[t]{0.33\linewidth}
    \centering
    \includegraphics[width=\linewidth]{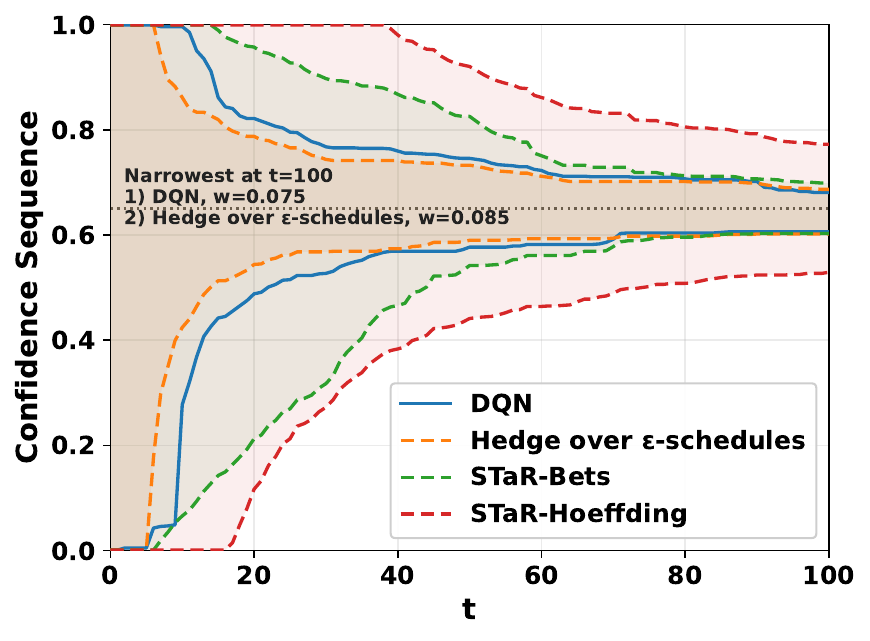}
    \caption{$\mu_X=0.65$}
    \label{fig:three-c}
  \end{subfigure}
  \caption{We compare confidence sequences under different data-generating distributions with $\alpha =0.05$ following Section \ref{sec:horizon-aware-CSs}. DQN produces the narrowest confidence intervals at deadline \(t=N=100\) and also yields tight intervals in the early steps. \textbf{(a)} \(X_i \sim \mathrm{Beta}(2,6)\) w.p.\ \(1/2\) and \(X_i \sim \mathrm{Beta}(4,12)\) w.p.\ \(1/2\). \textbf{(b)} \(X_i \sim \mathrm{Beta}(0.4,0.6)\) w.p.\ \(1/2\) and \(X_i \sim \mathrm{Beta}(0.8,1.2)\) w.p.\ \(1/2\). \textbf{(c)} \(X_i \sim \mathrm{Beta}(6.5,3.5)\) w.p.\ \(1/2\) and \(X_i \sim \mathrm{Beta}(13,7)\) w.p.\ \(1/2\).
 }
  \label{fig:three-side-by-side}
\end{figure*}

We now consider the complementary situation in which the bettor is ahead of schedule and thus it may be advantageous to bet less aggressively than even Kelly betting. Formally, we consider the state $(t,y)$ such that with $T=N-t$, 
\begin{align}
    \frac 12 B_K \; < \;  r \equiv r(t, y, b) = \frac{b-y}{T} \; <\; L_{\max}, \label{eq:ahead-of-schedule-condition}
\end{align}
with $B_K = \max_{x \in \calX} h_m(\kellybet_m, x)$ as before. 
The left inequality ensures that even Kelly betting needs more than $T/2$ steps to hit the threshold, while the right inequality formalizes the meaning of ``ahead of schedule'' in the sense that the average drift needed, $r$, is strictly less than the Kelly drift $L_{\max}$. Next,  introduce the rate function 
\begin{align}
\begin{aligned}
&\Iminus(\lambda, r) \coloneq \inf_{Q \in \Qminus(r, m, \lambda)} D(Q \parallel P_X), \qtext{where}\\
&
\Qminus(r,m, \lambda)
= \{Q \in \calP(\calX): \mathbb{E}_Q[h_m(\lambda, X)] \leq r \}.
\end{aligned}\label{eq:I-minus-def}
\end{align}

We can now state a result formalizing the sufficient conditions for a defensive betting strategy to be better than Kelly. 
\begin{proposition}
    \label{prop:defensive-betting} Suppose $(t,y)$ satisfy~\eqref{eq:ahead-of-schedule-condition}, and let  $\tau(\lambda)$ denote the stopping time associated with the policy that plays a constant bet $\lambda$ for all $T=N-t$ steps starting at $(t,y)$. Then, with $r_- \coloneqq 2 \lp r - \tfrac 12 \max_{x \in \calX} h_m(\kellybet_m, x) \rp$, for any feasible $\defbet < \kellybet_m$, we have  
    \begin{align}
    \begin{aligned}
    &\Iminus(\defbet, r) > \frac 12 \Iminus(\kellybet_m, r_-)
    + c^-_T(\defbet, r)\\
    & \implies \mathbb{P}(\tau(\defbet) \leq N) \geq \mathbb{P}(\tau(\kellybet_m) \leq N),
    \end{aligned}
    \end{align}
    where  $c^-_T(\defbet, r) = \calO\lp \frac{\log T}{T} \rp$ is defined explicitly in~\eqref{eq:c-T-minus-def} in Appendix~\ref{proof:defensive-betting}. 
\end{proposition}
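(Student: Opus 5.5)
We will prove the proposition by comparing an upper bound on the Kelly hitting probability with a lower bound on the defensive hitting probability. Both bounds come from method-of-types estimates for i.i.d.\ sequences on the finite alphabet $\calX$, $|\calX|=k$. Throughout, $T=N-t$ and we need total increment $b-y = rT$. Write $S_j(\lambda)=\sum_{i=t+1}^{t+j} h_m(\lambda,X_i)$, so that $\tau(\lambda)\le N$ iff $\max_{j\le T} S_j(\lambda)\ge rT$. For any $n$, the empirical type $\hat Q_n$ of $n$ observations satisfies $\mathbb{P}(\hat Q_n\in E)\le (n+1)^k\exp(-n\inf_{Q\in E}D(Q\|P_X))$. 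Also, for any type $Q$ with $D(Q\|P_X)$ finite, $\mathbb{P}(\hat Q_n=Q)\ge (n+1)^{-k}\exp(-nD(Q\|P_X))$.

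\textbf{Lower bound for the defensive bet.} It suffices that the final partial sum satisfies $S_T(\defbet)\ge rT$. This is the complement of $\{\hat Q_T\in\Qminus(r,m,\defbet)\}$ up to the boundary. The type bound therefore gives
\[
\mathbb{P}(\tau(\defbet)\le N)\ge 1-(T+1)^k e^{-T\Iminus(\defbet,r)}.
\]
(Since $r<L_{\max}$, either $\defbet$ has drift at least $r$ or we accept a trivial bound; the hypothesis will make $\Iminus(\defbet,r)>0$ meaningful.)

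\textbf{Upper bound for Kelly.} We bound the miss probability of Kelly from below, using the condition $\tfrac12 B_K<r$. Because a single Kelly step increases log-wealth by at most $B_K$, the threshold cannot be crossed within the first $\lceil T/2\rceil$ steps. We would exhibit a miss event: the first $T/2$ steps keep the partial sum below $rT$ automatically, and the last $T/2$ steps must have average Kelly increment at most what keeps the running maximum below $rT$.

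Concretely, suppose the type of the last $T/2$ observations lies in $\Qminus(\kellybet_m,r_-)$, with $r_- = 2(r-\tfrac12 B_K)$. Then the increment over those steps is at most $\tfrac T2 r_- = rT-\tfrac T2 B_K$. Since the first $T/2$ steps contribute at most $\tfrac T2 B_K$, the total stays below $rT$.

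The intermediate maxima need care. We would handle them by requiring the whole path of the second half to stay below the level. One option is to bound every prefix crudely by $B_K$ per step and use $\tfrac T2 r_-$ only at the end. This does not control the intermediate maxima directly, so we would instead reverse time: use cyclic/reflection arguments (the ballot-type lemma) to show that a positive fraction, at least $1/T$, of sequences in a given type class keep all partial sums below the final value plus $B_K$. This costs only a polynomial factor.

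With that, we get
\[
\mathbb{P}(\tau(\kellybet_m)>N)\ge \mathrm{poly}(T)^{-1}e^{-\frac T2 \Iminus(\kellybet_m,r_-)-o(T)}.
\]
The $o(T)$ term accounts for rounding the minimizing $Q$ to a type with denominator $T/2$, using continuity of $D$ on finite alphabets.

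\textbf{Conclusion.} We need
\[
(T+1)^k e^{-T\Iminus(\defbet,r)}\le \mathrm{poly}(T)^{-1}e^{-\frac T2\Iminus(\kellybet_m,r_-)}.
\]
Taking logarithms and dividing by $T$, this holds whenever $\Iminus(\defbet,r)>\tfrac12\Iminus(\kellybet_m,r_-)+c_T^-$, where $c_T^-$ collects $\tfrac{(k+1)\log(T+1)+\log T}{T}$ plus the type-rounding correction. We would define this correction explicitly in terms of $\defbet$ and $r$ to match the statement.

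\textbf{Main obstacle.} The hardest step is the Kelly miss lower bound: controlling the running maximum rather than only the endpoint, and the rounding of the optimizer to a realizable type while keeping the correction $\calO(\log T/T)$. We would first try the cycle lemma on the second-half increments. As a fallback, we would insist that the type of every suffix-prefix lies in $\Qminus$, accepting a slightly worse constant.
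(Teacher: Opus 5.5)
Your skeleton is the paper's: you bound the defensive miss probability by $(T+1)^{|\calX|}e^{-T\Iminus(\defbet,r)}$ through the endpoint event $\{S_T(\defbet)<rT\}$, lower-bound the Kelly miss probability at rate $\frac T2\Iminus(\kellybet_m,r_-)$, and compare exponents.

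The real difference is which half of the remaining time carries the atypical type. You constrain the \emph{last} $T/2$ Kelly increments, and that is what creates your running-maximum obstacle. The paper constrains the \emph{first} $T/2$ increments, through the event $\calE=\{\frac 2T\sum_{i=t+1}^{t+T/2}h_m(\kellybet_m,X_i)\le r_-\}$. Two facts then handle every intermediate time. During the first half the process cannot cross, since $\frac T2 B_K<rT$. At time $t+T/2$ it sits at most at $y+\frac T2 r_-=b-\frac T2 B_K$, and $T/2$ further steps of size at most $B_K$ cannot recover that deficit. So $F(\kellybet_m)\ge\mathbb{P}(\calE)$ follows from a single lower Sanov bound on one type class, with no combinatorial lemma. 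Your route costs an extra factor of order $1/T$, which is harmless because it is absorbed into $c_T^-$, but it is strictly more work.

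Your route does go through, but two steps need repair as written.

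\textbf{The ballot-type bound is too weak.} Partial sums of the second half that stay ``below the final value plus $B_K$'' only give $S_j(\kellybet_m)\le\frac T2 B_K+\frac T2 r_-+B_K=rT+B_K$. That does not rule out crossing. You need the sharp cycle-lemma conclusion instead. Rotate the block so it starts just after the argmax of its prefix sums; then every partial sum is at most $\max\{0,s\}$, where $s$ is the block total. Here $r_->0$, because $r>\frac12 B_K$, so this bound is at most $\frac T2 r_-$, and hence $S_j\le rT$. Rotation orbits have size at most $T/2$, and sequences in a type class are equiprobable, so you lose only a factor $2/T$.

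\textbf{The rounding step needs more than continuity of $D$.} You must quantize the minimizer to a type with denominator $T/2$ that stays inside $\Qminus(\kellybet_m,r_-)$. The way to do this is to push the leftover mass onto the atoms with the smallest $h_m(\kellybet_m,\cdot)$. You also need the cost to be $\calO(\log T/T)$, which uses $\min_x P_X(x)>0$. This is exactly Lemma~\ref{lemma:approximation-result}. An $o(T)$ correction would not give the claimed $c_T^-=\calO(\log T/T)$.
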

The proof of this result follows the same template as that of~Proposition~\ref{prop:aggressive-betting}, and we present the details in~Appendix~\ref{proof:defensive-betting}. 

\begin{example}
    To illustrate Proposition~\ref{prop:defensive-betting}, consider $X \sim \mathrm{Bernoulli}(p)$ with $p=0.99$,  $m=0.2$, and let $\alpha = 0.05$, so that $b=\log(20)$ as before. Consider a state with $T = N-t = 5$, and $y = b - rT$ with $r=0.8$; so $y \approx -1.0$. The Kelly bet in this case is $4.94$, with one-step maximum increment of $B_K \approx 1.60$ and  drift $L_{\max} \approx 1.54$. Hence, in this case we have $\tfrac 12 B_K < r < L_{\max}$, and the condition~\eqref{eq:ahead-of-schedule-condition} holds. Let us define $r_- = 2(r- \tfrac 12 B_K) \approx 6 \times 10^{-4}$, and select a defensive bet $\defbet = 0.75  \kellybet_m \approx 3.70$. 
    A direct computation shows that $\Iminus(\defbet, r) \approx 0.466 > (1/2) \Iminus(\kellybet_m, r_-) \approx 0.329$, ignoring the  finite $T$ correction term. So the condition required by~Proposition~\ref{prop:defensive-betting} holds~(ignoring the finite-$T$ correction term $c_T^-$), and a direct calculation shows that  we have $\mathbb{P}(\tau(\defbet) \leq N) \approx 0.999 \geq \mathbb{P}(\tau(\kellybet_m) \leq N) \approx 0.970$. 
\end{example}
\subsection{Horizon-Aware Confidence Sequences}
\label{sec:horizon-aware-CSs}

Once we know how to construct power-one tests, we can use them to define horizon-aware time-uniform confidence sequences $\{C_n: 0 \leq n \leq N\}$ with $C_0 = [0,1]$, and 
\begin{align}
\begin{aligned}
C_n &= \{m \in [0,1]: \tau_m > n\} \\
    &= \{m \in [0,1]: W_n(m) < 1/\alpha\}, \forall  n \in \{1, \ldots, N\}.
\end{aligned}
\end{align}
Since each $\tau_m$ is level-$\alpha$ by construction, it follows that 
\begin{align}
    \mathbb{P}\lp \exists n \geq 1: \mu_X  \not \in C_n\rp  \leq \mathbb{P}\lp \tau_{\mu_X} < \infty \rp \leq \alpha, 
\end{align}
where $\mu_X$ denotes the true and unknown mean. Thus, constructing a confidence sequence can be considered as running in a continuum of hypothesis tests in parallel, and retaining those values of $m$ for which the tests have not rejected the null $H_{0,m}$ yet.

\section{From Phase Diagram to Policies: Heuristics and Deep Q-Learning}
\label{sec:methodology}

The theory above provides a partial characterization of the optimal betting policy in the horizon-aware setting, which we summarize in Figure~\ref{fig:phase-plot}. In short, for $(t,y)$ in a ``tubular'' region in the interior of the state space, where hitting the threshold does not require atypical fluctuations, any near-optimal policy cannot deviate too much from Kelly betting~(\Cref{theorem:kelly-near-optimal}). In contrast, Proposition~\ref{prop:aggressive-betting} considers a ``behind schedule'' scenario, where threshold crossing itself is a large deviation event, and in this case a more aggressive bet can increase the probability of rejection. Finally, Proposition~\ref{prop:defensive-betting} identifies an ``ahead of schedule'' regime in which a more defensive betting strategy can reduce the chances of large downward jumps in the wealth that cannot be recovered from before the deadline. These three regimes align well with the representative phase diagram of~\Cref{fig:phase-plot}.

Since the true Kelly bet depends on the unknown distribution $P_X$, we instead use a predictable empirical Kelly estimate $\widehat{\lambda}_t(m)$ in our experiments. The goal is then to bet less aggressively, approximately Kelly, or more aggressively depending on where we are relative to schedule. The central challenge is determining this schedule position in practice, since we do not know the true distribution. Moreover, the phase diagram itself changes with problem difficulty. For example, how far the null $m$ is from the true mean $\mu_X$, and with distributional properties of $\{X_i\}$ such as variance and tail behavior. To build intuition about how optimal betting policies vary across the $(t,\log W_t)$ plane as difficulty changes, we compute the exact optimal action as a function of the current state $(t,\log W_t)$.
\begin{figure*}[t]
  \centering
  \begin{subfigure}[t]{0.32\linewidth}
    \centering
    \includegraphics[width=\linewidth]{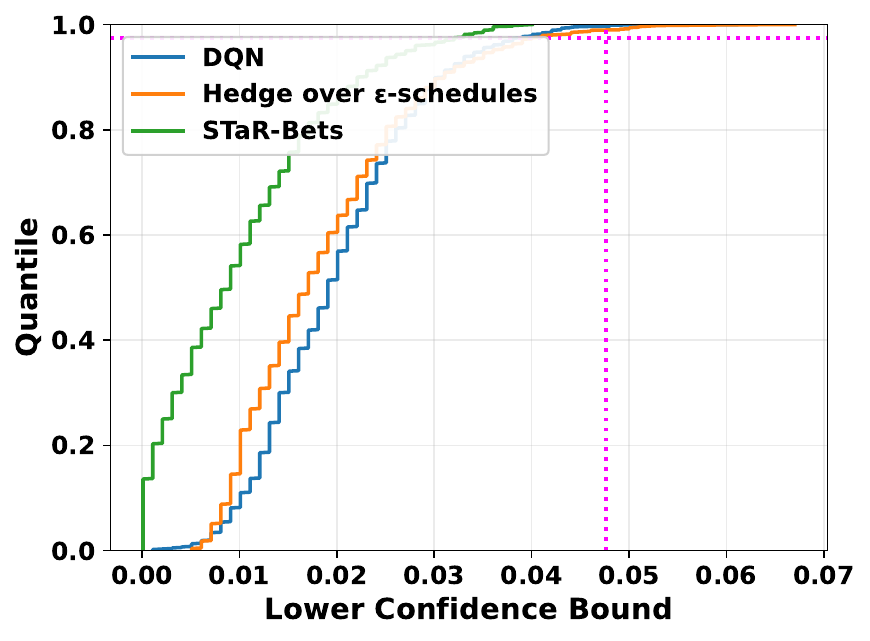}
    \caption{$X_i \sim \mathrm{Beta}(0.1, 2), N=100$}
    \label{fig:lower-conf-bounds-a}
  \end{subfigure}\hfill
  \begin{subfigure}[t]{0.32\linewidth}
    \centering
    \includegraphics[width=\linewidth]{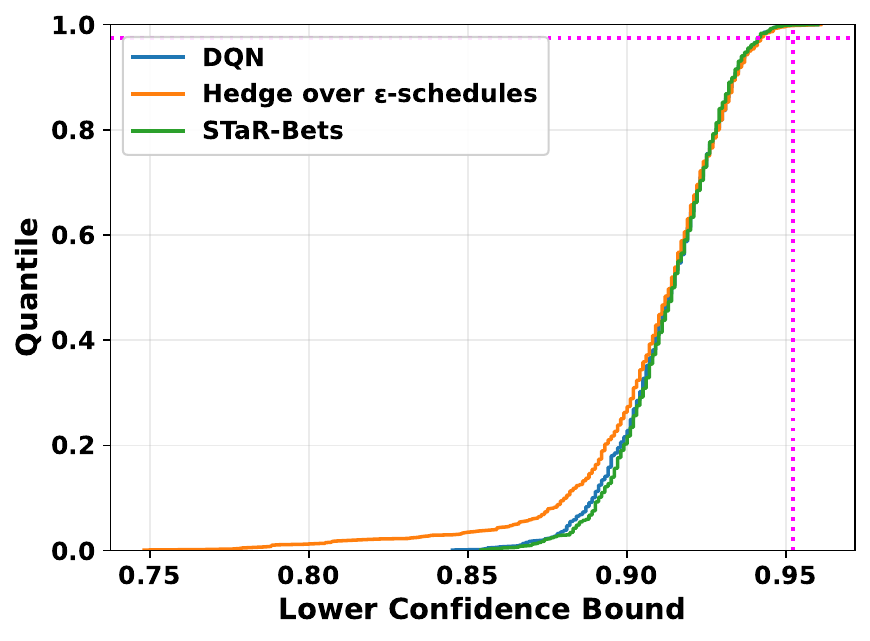}
    \caption{$X_i \sim \mathrm{Beta}(2, 0.1), N=100$}
    \label{fig:lower-conf-bounds-b}
  \end{subfigure}\hfill
  \begin{subfigure}[t]{0.32\linewidth}
    \centering
    \includegraphics[width=\linewidth]{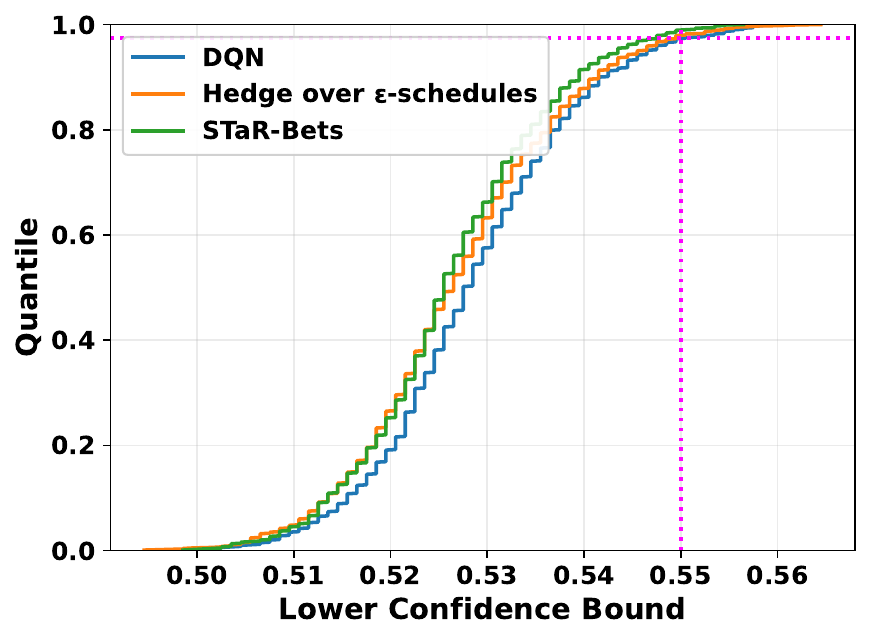}
    \caption{$X_i \sim \mathrm{Beta}(5.5, 4.5), N=200$}
    \label{fig:lower-conf-bounds-c}
  \end{subfigure}
  \caption{We plot the lower confidence bounds across different Beta distributions and deadlines \(N\). When a method's curve passes through a point \((x,y)\), it means that over 5000 repetitions, the lower confidence bound was below \(x\) in a fraction \(y\) of the runs. The vertical line indicates the mean of the distribution, and the horizontal line indicates the \((1-\alpha/2)\) quantile. Lower curves indicate better performance across methods. Note that in all cases, at \(x=\mu_X\) we observe \(y>1-\alpha/2\), consistent with the validity of the methods.}
  \label{fig:lower-conf-bounds}
\end{figure*}

\subsection{Phase Diagram of Optimal Actions}
\label{subsec:optimal-actions}

Given a finite action set at each state $(t,\log W_t)$, we use the Bellman recursion in
Equation~\ref{eq:bellman-recursion-1} to compute (i) the value at each $(t,\log W_t)$ point and
(ii) the optimal action that attains this value. We initialize the dynamic program at $t=N-1$,
where $N$ is the deadline, and then compute the value function for all earlier times $t<N-1$
via backward induction. We consider three actions:
$(\text{All-in}, \kellybet_m, \kellybet_m/2)$. All-in refers to  $\lambda_{\max}$. We discretize the $(t,\log W_t)$ plane into an
evenly spaced grid and, at each grid point, compute the optimal action using the Bellman recursion.
Note that evaluating the Bellman recursion requires knowledge of $P_X$. Therefore, we use the
population Kelly bet.

Figure~\ref{fig:dynamical-program} shows results for $X_i$ drawn from a two-component Beta mixture:
$X_i \sim \mathrm{Beta}(2.4,3.6)$ with probability $1/2$ and
$X_i \sim \mathrm{Beta}(4.8,7.2)$ with probability $1/2$, which has mean $\mu_X=0.40$ and
time-varying variance across realizations. We set the deadline to $N=100$ and the significance level
to $\alpha=0.05$. We compute the phase diagram for the null means $m=0.45$, $m=0.43$, and $m=0.41$.

In Figure~\ref{fig:dynamical-program-a}, the optimal action follows Kelly in a central band (i.e., when
the process is ``on schedule''). Once it moves ahead of schedule, the optimal action becomes fractional
Kelly, whereas when it falls behind schedule, the optimal action switches to All-in. As the problem becomes
more difficult (e.g., smaller $m=0.43$ in Figure~\ref{fig:dynamical-program-b}), the All-in and Kelly regions
shift upward, and the optimal action space favors more aggressive bets. When $m=0.41$ in
Figure~\ref{fig:dynamical-program-c}, the conservative fractional-Kelly region disappears, the Kelly region
narrows, and the All-in region widens. Overall, as difficulty increases, the optimal policy favors
increasingly aggressive actions.
 
Figure~\ref{fig:dynamical-program} highlights a key practical challenge in identifying the optimal action regions. The phase diagram is dynamic: as problem difficulty changes, the Kelly, fractional-Kelly, and aggressive-betting bands shift. Moreover, the phase diagram depends on the data-generating distribution $P_X$, which is unknown in practice. Finally, in applications we rely on a predictable empirical Kelly estimate $\widehat{\lambda}_t(m)$, which introduces estimation uncertainty. In particular, aggressive (or even approximately Kelly) actions early on, when $t$ is small and $\widehat{\lambda}_t(m)$ is most uncertain, may be undesirable.

We now discuss how to design schedule-aware betting policies. We first propose a simple schedule-aware heuristic, which serves as a strong baseline. We then describe how to learn schedule-aware and distributionally sensitive betting policies using Deep Q-learning.

\subsection{ $\epsilon$-Greedy Schedules}
\label{subsec:eps-greedy}

A simple schedule-aware heuristic is an $\epsilon$-greedy \emph{mixed-strike} rule: at time $t$, play the
empirical Kelly bet $\widehat{\lambda}_t(m)$ with probability $1-\epsilon_t$, and an aggressive bet (i.e., $\lambda_{\max}$)
with probability $\epsilon_t$. Here $\epsilon_t \in [0,1]$ is a monotone nondecreasing function of $t$ with
$\epsilon_t \uparrow 1$ as $t \uparrow N$, so the policy becomes increasingly aggressive as the deadline approaches.

We consider two schedules for $\epsilon_t$: linear and quadratic growth. To delay aggressiveness until late in the horizon,
we introduce an \emph{aggressiveness onset fraction} $\eta \in \{0.25, 0.50, 0.75\}$. We set $\epsilon_t = 0$ for all
$t < \eta N$, and for $t \in [\eta N, N]$ we increase $\epsilon_t$ according to the chosen trend, normalized so that
$\epsilon_{\eta N}=0$ and $\epsilon_N=1$.

\textbf{Hedging over $\epsilon$-schedules.}
The mixed-strike heuristic depends on the choice of the exploration schedule
$\{\epsilon_t\}_{t=1}^N$, and different schedules can perform best under different
distributions $P_X$ and null values $m$. We therefore hedge uniformly over a set of
$K=6$ schedules described above, following the e-process hedging principle of \citet{waudby2024estimating}. We provide the details in Appendix \ref{app:hedging}.

\subsection{Deep-Q-Learning}
\label{sec:dqn}

Realize that horizon-aware betting is a finite-horizon sequential decision problem. At each time $t\le N$, we choose a predictable bet
$\lambda_t(m)\in\Lambda_m$, update wealth as \(W_t \;=\; W_{t-1}\bigl(1+\lambda_t(m)(X_t-m)\bigr)\) and aim to maximize the probability of crossing the threshold $\log W_t \ge \log(1/\alpha)$ by time $N$.
The Bellman recursion in~\eqref{eq:bellman-recursion-1} casts this as a dynamic program over $(t,\log W_t)$, with nonstationarity
driven by the remaining time $N-t$ and the gap $\log(1/\alpha)-\log W_t$.
In practice, the optimal policy depends on the unknown $P_X$, so the phase diagram shifts with $m$ and distributional
properties that must be learned online.

Reinforcement learning is natural here as it optimizes finite-horizon control under uncertainty without an explicit model.
We treat each test run as an episode of length at most $N$, terminating upon threshold crossing, and use the sparse terminal reward
\begin{equation}
R = \mathbf{1}\!\left\{\max_{1\le t\le N}\log W_t \ge \log(1/\alpha)\right\}
\label{eq:terminal-reward}
\end{equation}
Then $\mathbb{E}[R]=\mathbb{P}(\tau_m\le N)$, so maximizing return maximizes deadline-limited power.
Anytime-validity is unchanged, since it depends only on predictability and the constraint $\lambda_t(m)\in\Lambda_m$, not on how the
policy is learned.

We use Deep Q-Learning with a small, interpretable discrete action set that matches the phase-diagram intuition:
$\widehat{\lambda}_t(m)/2$, $\widehat{\lambda}_t(m)$, and $\lambda_{\max}$. Q-learning~\citep{watkins1992q} learns an action-value
function $Q(s,a)$, and the Deep Q-Network (DQN) generalizes this across the continuous state space~\citep{mnih2015human}.
This is well-suited to our sparse-reward, finite-horizon and continuous state setting as the learned $Q$-function aims to yield a greedy policy that selects the action with highest estimated success probability.

Although the DP state is $(t,\log W_t)$ for fixed $m$ and $P_X$, the data-generating distribution is unknown in practice.
We therefore provide the DQN with a predictable feature vector computed only from $\mathcal{F}_{t-1}$ (e.g., empirical moments,
time remaining, distance-to-threshold, $m$, and empirical Kelly/endpoint proxies), and map actions to bets clipped into $\Lambda_m$.
The resulting policy is \emph{schedule-aware} (adapting as $t\uparrow N$ and the gap changes) and
\emph{distribution-sensitive} (reacting differently across distributional regimes), while remaining compatible with anytime-valid
inference.

Under this setting, we train a \emph{universal} DQN policy using synthetic episodes over randomized $N,m$ and a family of worlds
(i.e., Beta and Beta-mixtures with randomized parameters), so a single policy generalizes across horizons, distributions, and null
values. This addresses the main issue suggested by the phase diagram: the conservative, Kelly, and aggressive regions are not
fixed, but shift with problem difficulty and unknown distributional properties. Note that DQN is trained only once over 500,000 of episodes of random distributions, null values and deadlines. We expand on the implementation details in Appendix~\ref{app:exp-details}.

\subsection{Numerical Results}
\label{sec:experiments}
To the best of our knowledge, \citet{voravcek2025star} is the only prior work to construct horizon-aware, betting-based confidence intervals and tests, proposing the STaR-Bets and STaR-Hoeffding methods. We therefore compare our horizon-aware strategies with these methods, as well as with horizon-agnostic baselines such as predictable empirical Kelly betting. \citet{voravcek2025star} focuses on one-sided tests. We adapt them to the two-sided case while preserving the core principles of STaR-Bets and STaR-Hoeffding. Implementation details are in Appendix \ref{app:baselines}.

\textbf{Deadline-limited power and learned phase diagrams.} Figure~\ref{fig:rejection-curves} considers \(N=100\) testing with \(m=0.45\) under two Beta-mixture worlds that share mean \(\mu_X=0.40\) but differ in concentration. In both regimes, the DQN increases the probability of rejecting by the deadline. The modal-action maps (right panels) recover the qualitative structure suggested by our theory and the oracle phase diagrams in Figure~\ref{fig:dynamical-program}: a central Kelly-preferring band when on schedule, a conservative region when ahead, and a switch to All-in when behind and/or when \(N-t\) is small. Early in the horizon, the policy is more conservative, consistent with higher estimation uncertainty in \(\widehat{\lambda}_t(m)\). Importantly, these action regions shift across the two worlds, indicating distribution-sensitive adaptation rather than a fixed schedule.  

This picture is supported by the trajectory-level visualizations in Appendix~\ref{app:qualitativeactions}. Across choices of \(N\), \(m\), and distributional regime, the same three-region structure persists, and the regime boundaries and phase-shift times vary with problem difficulty. Along individual sample paths, the learned policy switches between \(\widehat{\lambda}_t(m)/2\), \(\widehat{\lambda}_t(m)\), and \(\lambda_{\max}\) in response to realized wealth and time remaining: when a path drifts sufficiently below the on-schedule region or the deadline becomes tight, the policy often enters sustained aggressive phases, while after recovery it typically returns to Kelly or half-Kelly. In easier worlds, these aggressive episodes are shorter and less frequent, whereas in harder or more diffuse worlds, they last longer. Together, the modal-action maps and pathwise plots indicate that the DQN is implementing a state-dependent and distribution-sensitive betting policy.

\textbf{Horizon-aware confidence sequences.}
Figure~\ref{fig:three-side-by-side} compares confidence sequences for \(N=100\) across three mixture settings. DQN yields the narrowest confidence sets at the deadline \(t=N\) and also remains tight in early steps, indicating that optimizing finite-horizon rejection probability also sharpens earlier estimation. Figure~\ref{fig:lower-conf-bounds} provides a complementary analysis via the empirical CDF of lower confidence bounds under several Beta distributions and deadlines (\(N\in\{100,200\}\)): DQN produces lower curves (tighter lower bounds) while remaining valid at \(x=\mu_X\).


\textbf{Alternative reward functions.} Our control framework can steer the learned betting policy by changing only the reward function. Figure~\ref{fig:reward} compares the original terminal-power objective with two early-rejection variants. We train two reward-modified DQN variants: DQN-EB with early-bonus reward $r=1+(1-t/N)$, and DQN-U with urgency reward $r=1-t/N$, and compare them to the original DQN reward (given by Equation \ref{eq:terminal-reward}) and existing baselines. Figure \ref{fig:reward} illustrates the expected tradeoff: the original DQN achieves the strongest terminal power, DQN-EB gives a middle ground, and DQN-U shifts rejection earlier at some cost in power at $t=N$. These variants may be preferable when earlier rejection is more important, for example to obtain narrower confidence sequences well before the deadline. 

\textbf{Generalization to other distributions.} Figure~\ref{fig:ood} shows that a policy trained on the Beta family in this work also transfers to out-of-distribution logit-normal data, where it continues to improve power over existing baselines. This suggests that Beta-trained policies can generalize beyond the training distribution family. In Figure \ref{fig:bernoulli}, we further evaluate the same control/RL framework on Bernoulli families, training across $\mu_X,m\in[0,1]$ and testing at $\mu_X\in\{0.10,0.30,0.50\}$ with $\alpha=0.05$. It again improves power over baselines, suggesting that the approach is not tied to continuous distributions.

\textbf{Real data experiments.} We evaluate whether a DQN policy trained only on synthetic Beta and Beta-mixture distributions transfers to real data. We consider two real-data domains comprising six unseen data sources: genome-wide DNA methylation beta values and daily relative humidity measurements. In each case, we draw $N=100$ observations with replacement from the empirical distribution, run the sequential test $H_0:\mu=m$ at level $\alpha=0.05$, and repeat this for $5{,}000$ trials to estimate $\mathbb{P}(\text{reject by time } t)$. The DQN policy is not fine-tuned on real data. Full dataset descriptions, histograms, rejection curves, and Type-I error checks are provided in Appendix~\ref{app:real-data}; see Figures~\ref{fig:GSM838506}--\ref{fig:GSM838517} for DNA methylation and Figures~\ref{fig:AZ_Yuma}--\ref{fig:CO_Boulder} for relative humidity.

Across these six real-data settings, DQN achieves the highest rejection probability by the deadline in 5/6 cases and is very close to the best method in the remaining case, while maintaining valid null calibration throughout. These results provide evidence that the gains are not limited to synthetic settings: the learned policy transfers without retraining to real distributions from two distinct application domains, reflecting a more general horizon-aware betting strategy.

\textbf{Further experimental results.} We first ablate whether a larger action space improves performance. In Figure~\ref{fig:two-DQN-policies}, we train an additional DQN policy with an expanded 9-action space. We observe that the larger action space yields similar performance but does not lead to an overall improvement. Therefore, for simplicity, we adopt the 3-action policy in the main paper. Second, we evaluate the same DQN policy used in the paper at longer deadlines, namely $N \in \{250, 300, 350\}$. Figure~\ref{fig:longer} shows that DQN maintains improved power under these extended horizons. Third, we show that DQN can accommodate different $\alpha$ values. We set $\alpha = 0.01$ and retrain a DQN policy for this level. Figure~\ref{fig:alpha} shows that DQN also improves power at $\alpha = 0.01$.

\section{Conclusion}
\label{sec:conclusion}

We study \emph{horizon-aware, anytime-valid} tests and confidence sequences under a hard deadline $N$.
Viewing horizon-aware betting as a finite-horizon control problem over $(t,\log W_t)$ makes explicit how optimal actions depend on time remaining and distance to the threshold.
This leads to a three-regime phase diagram: when on schedule, substantial deviations from Kelly are provably suboptimal (\Cref{theorem:kelly-near-optimal}). When behind schedule, aggressive bets can increase the probability of crossing by $N$ (Proposition~\ref{prop:aggressive-betting}). When ahead of schedule, defensive bets mitigate unrecoverable drawdowns (Proposition~\ref{prop:defensive-betting}).
Oracle DP phase diagrams validate this structure and show that regime boundaries shift with difficulty and distributional properties.
A universally trained DQN recovers a similar partition in modal-action maps across distributions, and improves deadline-limited power and tightens confidence sequences while preserving validity by construction.
Hedging over \(\epsilon\)-greedy schedules is strong, but the DQN typically performs better by leveraging richer predictable features to switch state-dependently among fractional-Kelly, Kelly, and all-in actions.

Our experiments focus mainly on continuous distributions. Extending DQN to more general settings, such as mixed distributions, is an intriguing direction for future work.

\clearpage
\section*{Acknowledgments}

We thank the reviewers for suggesting additional experiments and for encouraging us to consider objectives beyond $\mathbb{P}(\tau<N)$. This work is supported by the National Science Foundation grants CCF-2046816, CCF-2403075, CCF-2212426, the Office of Naval Research grant N000142412289, a gift from Open Philanthropy, and an Adobe Data Science Research Award. The computational aspects of the research are generously supported by computational resources provided by the Amazon Research Award on Foundation Model Development.

\section*{Impact Statement}

This paper formalizes horizon-aware anytime-valid inference from a finite-horizon control perspective and develops learning-based betting policies for this setting. Hypothesis testing is a backbone of scientific inference, and anytime-valid methods are particularly useful in applications such as online A/B testing, adaptive experimentation, and resource-constrained scientific studies, where analysts may monitor data continuously while facing a fixed deadline. By improving deadline-limited power and tightening confidence sequences while preserving validity, our approach can make sequential inference more effective in practical deployments. A potential limitation is that learning-based policies are generally less interpretable than hand-crafted testing rules. In settings where transparency, auditability, or simple human-understandable decision rules are essential, this reduced interpretability may present a practical concern.

\bibliography{references}
\bibliographystyle{reference_style}

\newpage
\appendix
\onecolumn

\section{Additional Background}
\label{appendix:background}
In this section, we recall some technical results that we use throughout the paper. The first result is a time-uniform generalization of Markov's inequality, due to~\citet{ville1939etude}, that is applicable to nonnegative supermartingales. 
\begin{fact}{Ville's Inequality}
    \label{fact:ville} Suppose $\{W_n: n \geq 0\}$ denotes a nonnegative martingale adapted to a filtration $\{\calF_n: n\geq 0\}$ with $W_0 =1$~(a.s.). Then, for any $\alpha \in (0, 1]$, we have 
    \begin{align}
        \mathbb{P}\lp \exists n \geq 1: W_n \geq \frac 1{\alpha} \rp \leq \alpha \quad \iff \quad \mathbb{P}\lp \forall n \geq 1: W_n < \frac 1 {\alpha} \rp \geq 1-\alpha. 
    \end{align}
\end{fact}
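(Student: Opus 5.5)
The plan is the classical stopping-time argument: reduce the time-uniform event to a single bounded stopping time, then apply Markov's inequality to the stopped martingale. The equivalence in the display is immediate, because the event $\{\forall n \geq 1: W_n < 1/\alpha\}$ is exactly the complement of $\{\exists n \geq 1: W_n \geq 1/\alpha\}$. So it suffices to prove the first inequality.

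\emph{Setup.} Define the first crossing time $\tau \coloneqq \inf\{n \geq 1: W_n \geq 1/\alpha\}$, with $\inf \emptyset = \infty$. This is a stopping time with respect to $\{\calF_n\}$, since $\{\tau \leq n\} = \bigcup_{j \le n}\{W_j \geq 1/\alpha\} \in \calF_n$. The target event is $\{\tau < \infty\}$. For each fixed $K \in \mathbb{N}$, I will work with the bounded stopping time $\tau \wedge K$.

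\emph{Key steps.} First, the stopped process $\{W_{n \wedge \tau}\}$ is again a nonnegative martingale. By the optional stopping theorem for bounded stopping times (the martingale property alone suffices here, with no integrability issues), $\mathbb{E}[W_{\tau \wedge K}] = \mathbb{E}[W_0] = 1$. For a supermartingale, which is the case needed for $W_n(m)$ when bets are clipped, one gets $\le 1$ instead. Second, split this expectation on the event $\{\tau \leq K\}$:
\begin{align}
1 \;\geq\; \mathbb{E}\big[W_{\tau\wedge K}\boldsymbol{1}\{\tau \leq K\}\big] + \mathbb{E}\big[W_{K}\boldsymbol{1}\{\tau > K\}\big] \;\geq\; \frac{1}{\alpha}\,\mathbb{P}(\tau \leq K).
\end{align}
Here I use that $W_\tau \geq 1/\alpha$ on $\{\tau \le K\}$, and I drop the second term by nonnegativity of $W$. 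This yields $\mathbb{P}(\tau \leq K) \leq \alpha$ for every $K$. Third, the events $\{\tau \leq K\}$ increase to $\{\tau < \infty\}$, so continuity of probability from below gives $\mathbb{P}(\tau < \infty) \leq \alpha$, which is the claim.

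\emph{Expected obstacle.} There is no deep obstacle; the only step requiring care is the passage to the infinite horizon. One cannot apply optional stopping directly at the possibly infinite time $\tau$, which is why I truncate at $K$ and then use monotone limits rather than any uniform-integrability argument. The edge case $\alpha = 1$ is trivial. The inequality does not depend on whether the index starts at $n=0$ or $n=1$, since $W_0 = 1 < 1/\alpha$ whenever $\alpha < 1$.
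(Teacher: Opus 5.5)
Your argument is correct and complete. The equivalence is just complementation of events, and then truncation at $\tau \wedge K$, optional stopping for a bounded stopping time, splitting on $\{\tau \le K\}$, and continuity from below give $\mathbb{P}(\tau < \infty) \le \alpha$.

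The paper does not prove this Fact; it only recalls it as a classical result of Ville. So there is no competing route to compare with: your proof supplies the standard argument, and it brings two things worth noting.

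First, your intermediate bound $\mathbb{P}(\tau \le K) \le \alpha$ is already the deadline-limited form the paper actually relies on. Examples are $\mathbb{P}_{H_0}(\sup_{t \le N} W_t(m) \ge 1/\alpha) \le \alpha$ for the hedged wealth, and the bound on $\max_{1 \le k \le T} M_k(\theta)$ in the proof of Theorem~\ref{theorem:kelly-near-optimal}. The passage $K \to \infty$ is only needed for the infinite-horizon form.

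Second, your supermartingale variant ($\mathbb{E}[W_{\tau \wedge K}] \le 1$) is genuinely needed by the paper. It applies ``Ville's inequality'' to $M_k(\theta) = e^{\theta S_k - B^2\theta^2 k/2}$, which is only a supermartingale, even though the Fact is stated for martingales.

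Your stated reason for needing the supermartingale case is off, however. Clipping the bets does not turn $W_n(m)$ into a strict supermartingale. For any predictable $\lambda_n \in \Lambda_m$, clipped or not, $\mathbb{E}[X_n \mid \calF_{n-1}] = m$ under $H_{0,m}$ gives $\mathbb{E}[1 + \lambda_n(X_n - m) \mid \calF_{n-1}] = 1$. So $W_n(m)$ remains an exact martingale, and clipping only keeps the multiplicative factors positive.
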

Next, we recall a large deviation inequality for the case of random variables taking values in a finite alphabet $\calX$. The specific form we present here is from~\citet[Theorem 11.4.1]{Cover2006elements}. 
\begin{fact}[Sanov's Theorem for finite alphabets]
    \label{fact:sanov} Let $X_1, \ldots, X_n \overset{\iid}{\sim} P_X$ denote $n$ $\calX$-valued random variables with empirical distribution $\widehat{P}_n$, and $\calP_n(\calX)$ denote the set of all types or empirical distributions with $n$ observations on $\calX$.  Then, for any set $E \subset \calP(\calX)$ that is the closure of its interior and with $E_n = E \cap \calP_n(\calX)$, we have 
    \begin{align}
        \frac{1}{(n+1)^{|\calX|}} e^{- n D(Q^*(E_n) \parallel P_X)} \leq \mathbb{P}_{X^n}\lp \widehat{P}_n \in E \rp \leq {(n+1)^{|\calX|}} e^{- n D(Q^*(E) \parallel P_X)}, 
    \end{align}
    where $Q^*(E) = \operatorname{argmin}_{Q \in E} D(Q \parallel P_X)$, and $Q^*(E_n) = \operatorname{argmin}_{Q \in E_n} D(Q \parallel P_X)$. 
\end{fact}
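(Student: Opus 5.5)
The statement immediately above is Fact~\ref{fact:sanov}, the finite-alphabet Sanov bound, so I would prove it by the method of types as in \citet{Cover2006elements}. Three ingredients are needed. First, the number of types satisfies $|\calP_n(\calX)| \leq (n+1)^{|\calX|}$: each coordinate of a type is one of $n+1$ values $\{0,1/n,\ldots,1\}$. Second, for any type $Q \in \calP_n(\calX)$, the probability under $P_X^n$ of any fixed sequence of type $Q$ equals $e^{-n(H(Q)+D(Q\parallel P_X))}$. This follows by writing $\prod_i P_X(x_i) = \prod_a P_X(a)^{nQ(a)}$ and regrouping the exponent. Third, the type class $T(Q)$ satisfies
\begin{align}
\frac{1}{(n+1)^{|\calX|}} e^{nH(Q)} \leq |T(Q)| \leq e^{nH(Q)}.
\end{align}
Combining these gives $\frac{1}{(n+1)^{|\calX|}} e^{-nD(Q\parallel P_X)} \leq P_X^n(T(Q)) \leq e^{-nD(Q\parallel P_X)}$.

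For the upper bound, I would decompose $\{\widehat{P}_n \in E\} = \bigcup_{Q \in E_n} T(Q)$ and use a union bound:
\begin{align}
\mathbb{P}(\widehat{P}_n \in E) \leq \sum_{Q \in E_n} e^{-nD(Q\parallel P_X)} \leq (n+1)^{|\calX|} \max_{Q\in E_n} e^{-nD(Q\parallel P_X)} \leq (n+1)^{|\calX|} e^{-nD(Q^*(E)\parallel P_X)}.
\end{align}
The last step uses $E_n \subset E$, so the minimum of $D(\cdot\parallel P_X)$ over $E_n$ is at least its minimum over $E$. For the lower bound, I would keep only the single term $T(Q^*(E_n))$, which lies in the event, and apply the lower bound on $P_X^n(T(Q))$.

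The main obstacle is the lower bound on $|T(Q)|$. There I would show that under $Q^n$ the type class $T(Q)$ is the most likely of all type classes, i.e.\ $Q^n(T(Q)) \geq Q^n(T(\widehat{Q}))$ for every type $\widehat{Q}$. This comes from a ratio of multinomial coefficients together with the inequality $m!/l! \geq l^{m-l}$. Hence $Q^n(T(Q)) \geq 1/(n+1)^{|\calX|}$, and since every sequence in $T(Q)$ has $Q^n$-probability $e^{-nH(Q)}$, this gives the claimed lower bound on $|T(Q)|$.

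Two technical points remain. We need $E_n$ to be nonempty so that $Q^*(E_n)$ exists; this is where the hypothesis that $E$ is the closure of its interior enters, and for the nonasymptotic bound as stated we simply require it. We also need the minimizers to exist: $E$ is closed in the compact simplex and $D(\cdot\parallel P_X)$ is lower semicontinuous, so the minimum over $E$ is attained, while $E_n$ is finite.
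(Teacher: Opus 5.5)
Your proof is correct and is exactly the method-of-types argument from \citet[Theorem 11.4.1]{Cover2006elements}. That is all the paper offers here: it states the result as a Fact and cites that theorem without proof, and your lower bound (keeping the single type class $T(Q^*(E_n))$) matches the one in that proof. You are also right about the closure-of-interior hypothesis: the finite-$n$ bound needs only $E_n \neq \emptyset$, and the hypothesis guarantees this (by density of types) only for all sufficiently large $n$; otherwise it matters only for the asymptotic matching of exponents.
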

We now state and prove a simple technical result that will be used in the proof of Proposition~\ref{prop:aggressive-betting} and~Proposition~\ref{prop:defensive-betting}. 
\begin{lemma}
    \label{lemma:approximation-result} 
    Let $\calX \subset [0,1]$ denote a finite alphabet of size $k$, and let $\calP_n(\calX)$ denote the set of all ``types'' on $\calX$ with $n$ observations; that is, 
    \begin{align}
        \calP_n(\calX) = \left\{ \frac{1}{n} \sum_{i=1}^n \delta_{x_i}: (x_1, \ldots, x_n) \in \calX^n, \; n \geq 1 \right\}, 
    \end{align}
    where $\delta_{x_i}$ denotes the point mass at $x_i$. Fix a distribution $P_X \in \calP(\calX)$ with $p_{\min} = \min_{x \in \calX} P_X(x)>0$. For an $m \in [0,1]$ and $\lambda \in \Lambda_m$, let $h_m(\lambda, x) = \log(1 + \lambda(x-m))$ and define 
    \begin{align}
        \Iplus_n(\lambda, r) &= \inf \left\{D(Q \parallel P_X): Q \in \calP_n(\calX), \; \mathbb{E}_Q[h_m(\lambda, X)] \geq r \right \}, \\
        \text{and} \quad \Iminus_n(\lambda, r) &= \inf \left\{D(Q \parallel P_X): Q \in \calP_n(\calX), \; \mathbb{E}_Q[h_m(\lambda, X)] \leq r \right \}. 
    \end{align}
    With $\Iplus$ and $\Iminus$ as defined in~\eqref{eq:I-lambda-r-def} and~\eqref{eq:I-minus-def} and $C \equiv C(n, k, p_{\min}) = \lp 2 +  \log\lp \frac n{p_{\min}} \rp \rp \tfrac kn$, we have 
    \begin{align}
        \Iplus_n(\lambda, r) \leq \Iplus(\lambda, r) + C(n, k, p_{\min}) , \qtext{and} \Iminus_n(\lambda, r) \leq \Iminus(\lambda, r) +  C(n, k, p_{\min}). 
    \end{align}
\end{lemma}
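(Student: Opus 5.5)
The plan is to take a minimizer $Q^\star$ of the unrestricted problem defining $\Iplus(\lambda, r)$ and round it to a type $Q_n \in \calP_n(\calX)$. The rounding has to do two things at once: keep the linear constraint $\mathbb{E}_{Q}[h_m(\lambda,X)] \geq r$ satisfied, and change the divergence by at most $C(n,k,p_{\min})$. If $\Qplus(r,m,\lambda)$ is empty, then $\Iplus = +\infty$ and there is nothing to prove. Otherwise $\Qplus$ is a closed subset of the simplex $\calP(\calX)$ (the constraint is linear in $Q$). Also $D(\cdot \parallel P_X)$ is continuous on the simplex, because $p_{\min}>0$. So the infimum is attained at some $Q^\star$. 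The case of $\Iminus$ is identical after reversing the inequality.

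\textbf{Rounding step.} Let $x^+ \in \operatorname{argmax}_{x\in\calX} h_m(\lambda,x)$, or $x^- \in \operatorname{argmin}_{x \in \calX} h_m(\lambda, x)$ for the $\Iminus$ case. Define
\[
Q_n(x) = \lfloor n Q^\star(x)\rfloor / n \quad \text{for } x \neq x^+, \qquad Q_n(x^+) = 1 - \sum_{x\neq x^+} Q_n(x).
\]
Then $Q_n$ is a type with $n$ observations. Each coordinate $x \neq x^+$ loses mass $\delta_x := Q^\star(x) - Q_n(x) \in [0,1/n)$, and all of this mass goes to $x^+$. Hence
\[
\mathbb{E}_{Q_n}[h] - \mathbb{E}_{Q^\star}[h] = \sum_{x\neq x^+} \delta_x\bigl(h(x^+) - h(x)\bigr) \;\geq\; 0,
\]
so $Q_n$ remains feasible, and $\Iplus_n(\lambda,r) \leq D(Q_n\parallel P_X)$. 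For the lower-tail problem, sending the mass to $x^-$ gives $\mathbb{E}_{Q_n}[h] \leq \mathbb{E}_{Q^\star}[h] \leq r$ in the same way.

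\textbf{Divergence perturbation.} Write $D(Q\parallel P_X) = \sum_x \phi(Q(x)) + \sum_x Q(x)\log(1/P_X(x))$ with $\phi(q) = q\log q$. The coordinate perturbations satisfy $|Q_n(x) - Q^\star(x)| \leq 1/n$ for $x \neq x^+$ and $|Q_n(x^+) - Q^\star(x^+)| \leq (k-1)/n$. Their total variation is therefore $\sum_x |Q_n(x)-Q^\star(x)| \leq 2(k-1)/n$.

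The cross-entropy term is linear in $Q$, so it changes by at most $\tfrac{2k}{n}\log(1/p_{\min})$. In fact it changes by at most $\tfrac{k}{n}\log(1/p_{\min})$, since only the mass moved onto $x^+$ can increase it.

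For the entropy part I will use the elementary modulus of continuity $|\phi(q) - \phi(q')| \leq \delta\log(1/\delta) + \delta$ for $|q-q'| \leq \delta \leq 1$. This follows from concavity of $-\phi$ and by splitting according to whether $q' \leq \delta$. Applied with $\delta = 1/n$ on the $k-1$ small coordinates and $\delta \leq k/n$ on $x^+$, the entropy terms contribute at most $\tfrac{k}{n}(\log n + 2)$, after absorbing the $\log k$ arising at $x^+$. Adding the two contributions gives
\[
D(Q_n\parallel P_X) \leq D(Q^\star\parallel P_X) + \tfrac{k}{n}\bigl(2+\log n + \log(1/p_{\min})\bigr) = \Iplus(\lambda,r) + C(n,k,p_{\min}).
\]

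\textbf{Main obstacle.} The feasibility-preserving rounding is the conceptual key, and it is easy. The delicate part is the bookkeeping in the continuity bound for $q\log q$ near $0$ and at the single coordinate $x^+$, which absorbs up to $k/n$ mass. Getting exactly the constant $2+\log(n/p_{\min})$ may require a slightly sharper estimate at $x^+$. I would first try the one-sided observation that $\phi$ increases by at most $\delta(\log(q+\delta)+1) \leq \delta$ when mass $\delta$ is added (since $q+\delta \leq 1$). Combined with the fact that removing mass from the other coordinates increases $\sum\phi$ by at most $\sum_x \delta_x\log(1/\delta_x) \leq \tfrac{k}{n}\log n$, this should give the stated constant directly.
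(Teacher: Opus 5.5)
Your argument is correct, but the stated constant comes only from the one-sided estimate in your last paragraph. The symmetric bookkeeping in the ``Divergence perturbation'' step does not deliver it.

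With the modulus $\delta\log(1/\delta)+\delta$, the $k-1$ rounded-down coordinates cost about $\frac{k-1}{n}\log n$. The coordinate $x^+$ moves by up to $(k-1)/n$ (not $k/n$) and costs about $\frac{k-1}{n}\log\frac{n}{k-1}$. The coefficient of $\frac{\log n}{n}$ is then $2(k-1)$ rather than $k$. The favorable term $-\frac{k-1}{n}\log(k-1)$ does not grow with $n$, so it cannot absorb the excess once $k\geq 3$ and $n$ is large.

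The one-sided route closes this gap:
\begin{itemize}
\item Since $\phi'(u)=1+\log u\leq 1$ on $(0,1]$, the mass added at $x^+$ raises $\phi$ by at most $\sum_{x\neq x^+}\delta_x\leq\frac{k-1}{n}$.
\item Removing $\delta_x<\frac1n$ raises $\phi$ by $\int_{q-\delta_x}^{q}(-1-\log u)\,du\leq\delta_x\log(1/\delta_x)$. This is at most $\frac{\log n}{n}$ for $n\geq 3$; for $n\leq 2$, use $\delta\log(1/\delta)\leq 1/e$ and the slack in the constant.
\item The cross-entropy term changes by at most $\frac{k-1}{n}\log(1/p_{\min})$.
\end{itemize}
Together these give $D(Q_n\parallel P_X)-D(Q^\star\parallel P_X)\leq\frac{k-1}{n}\bigl(2+\log(n/p_{\min})\bigr)\leq C(n,k,p_{\min})$. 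So make the one-sided estimate the proof itself, not a fallback.

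The paper rounds differently. It floors every coordinate of $Q^\star$, then adds $\frac1n$ to the $R=\sum_i\bigl(nq^\star_i-\lfloor nq^\star_i\rfloor\bigr)$ symbols with the largest values of $h_m(\lambda,\cdot)$. Every coordinate then moves by at most $\frac1n$. So one symmetric estimate $|\phi(u)-\phi(v)|\leq\frac{2+\log n}{n}$, applied $k$ times, plus $\frac kn\log(1/p_{\min})$ for the linear term, finishes directly.

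The cost of the paper's route is that feasibility needs a rearrangement argument: unit increments on the top-$R$ symbols dominate the fractional weights $f_i\in[0,1)$, which sum to $R$. The paper states this without spelling it out.

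Your lumped rounding onto $\operatorname{argmax} h$ makes feasibility a one-line observation. But it concentrates the perturbation on a single coordinate, so you must use the direction of the perturbation rather than a symmetric modulus. In exchange you get the slightly sharper factor $k-1$ in place of $k$.
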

\begin{proof}
We will prove the statement for $\Iplus$ since an exactly analogous argument also implies the result for $\Iminus$. Let $Q^* \equiv (q^*_1, \ldots, q^*_k)$ denote the distribution achieving the infimum in $\Iplus(\lambda, r)$. The existence of such a $Q^*$ is guaranteed since the domain of optimization is a closed subset of the simplex. For each $i \in [k]$, introduce 
\begin{align}
    f_i \coloneqq n q^*_i - \lfloor n q^*_i \rfloor, \qtext{and} R = \sum_{i=1}^k f_i \in \{0, \ldots, k-1\}. 
\end{align}
Now, let $i_1, i_2, \ldots, i_k$ denote a permutation of the elements of $[k]$, such that $h_m(\lambda, x_{i_1}) \geq h_m(\lambda, x_{i_2}) \geq \ldots \geq h_m(\lambda, x_{i_k})$. Then, we can define a new distribution in $\calP_n(\calX)$ by perturbing $Q^*$ as 
\begin{align}
    Q_n \equiv (q_1, \ldots, q_k), \qtext{with} q_i = \begin{cases}
        \frac{\lfloor n q^*_i \rfloor + 1}{n}, & \text{ if } i \in \{i_1, \ldots, i_R\}, \\
        \frac{\lfloor n q^*_i \rfloor}{n}, & \text{ otherwise}. 
    \end{cases}
\end{align}
In other words, $Q_n$ is a quantized version of $Q^*$ lying in $\calP_n(\calX)$ which is biased to have a larger value of $E_{Q_n}[h_m(\lambda, X)]$, and thus remains  feasible  for $\Iplus(\lambda, r)$. In particular, we have
\begin{align}
    |q_i - q^*_i| \leq \frac 1n \text{ for all } i \in [k], \qtext{and} \sum_{i=1}^k (q_i - q^*_i) h_m(\lambda, x_i) \geq 0. 
\end{align}
Then, with $\phi(u) = u \log u$ for $u>0$ and $\phi(0) =0$, we have 
\begin{align}
    D(Q_n \parallel P_X) - D(Q^* \parallel P_X) &= \sum_{i=1}^k \phi(q_i) - \phi(q^*_i) + \sum_{i=1}^k (q_i - q^*_i) \log \lp \frac{1}{P_X(x_i)} \rp \\
    & \leq \sum_{i=1}^k \phi(q_i) - \phi(q^*_i)  + \frac{k}{n} \log \lp  \frac{1}{p_{\min}}\rp \\
    & \leq \frac{k(2 + \log n)}{n} + \frac{k}{n}\log \lp  \frac{1}{p_{\min}}\rp. 
\end{align}
On taking infimum over $Q_n$ this implies the required $\Iplus_n(\lambda, r) \leq \Iplus(\lambda, r) + (2 +   \log(n/p_{\min}))k/n$. 
The second inequality uses the fact that for $0 \leq u \leq v  \leq 1$,   if $|u-v| \leq 1/n$ then $|\phi(u) - \phi(v)|\leq (2 + \log n)/n$. This is true, because $\phi'(t) = 1 + \log t$, which implies that 
\begin{align}
    |\phi(v) - \phi(u)| &\leq  \int_u^v \left \lvert(1 +\log t) \right\rvert dt  \stackrel{(i)}{\leq} \int_u^v (1- \log t) dt \stackrel{(ii)}{\leq} \int_0^{1/n} (1 - \log t)dt = \int_{0}^{1/n} (2 - (1 + \log t)) dt \\
    & =  \int_{0}^{1/n} (2 - \phi'(t))dt = \frac{2}{n} - \lp \phi(1/n) - \phi(0) \rp  = \frac{2}{n} - \frac{1}{n} \log \lp \frac{1}{n} \rp = \frac{2 + \log n}{n}. 
\end{align}
The inequality $(i)$ uses the fact that for $t \in [0,1]$, we have $|1 + \log t| \leq 1 - \log t$, and $(ii)$ uses the fact that the mapping $t \mapsto 1 - \log t$ is decreasing. This completes the proof of the bound for $\Iplus(\lambda, r)$. The same argument also works for $\Iminus(\lambda, r)$ and we omit the details. 
\end{proof}
\section{Deferred Proofs}
\label{appendix:proofs}

\subsection{Proof of~\Cref{theorem:kelly-near-optimal}}
\label{proof:kelly-near-optimal}

The first part of the result considers the case of playing the Kelly bet at all remaining steps. In particular, let $Z^K_i = h_m(\kellybet_m, X_i)$ denote the increments of log-wealth under this policy, and observe that $Z^K_i \in [-B, B]$ for all $i$. Now, we note that $\{\tau \leq N\}  = \cup_{k=1}^T \{\tau \leq t + k\} \supset \{y + \sum_{i=t+1}^N Z^K_i  \geq b\}$, which implies that 
\begin{align}
    \mathbb{P}(\tau > N) \leq \mathbb{P}\lp \sum_{i=t+1}^N (Z^K_i - L_{\max}) \leq \underbrace{(b-y) - T L_{\max}}_{= -\Delta} \rp. 
\end{align}
Since first part of the result assume that $\Delta \geq B \sqrt{8T \log T}$, an application of Hoeffding's inequality implies 
\begin{align}
    \mathbb{P}(\tau > N) \leq \mathbb{P}\lp \sum_{i=t+1}^N Z^K_i - L_{\max} \leq -\Delta \rp \leq \mathbb{P}\lp \sum_{i=t+1}^N Z^K_i - L_{\max} \leq - B \sqrt{8 T \log T} \rp \leq \frac{1}{T}. 
\end{align}
This justifies the first part of~\Cref{theorem:kelly-near-optimal}. 

For the second part, let $Z_i = h_m(\lambda_i, X_i)$ with $\mathbb{E}[Z_i \mid \calF_{i-1}] = L(\lambda_i)$. Thus, the terms $\{D_i \coloneqq Z_i - L(\lambda_i): t+1 \leq i \leq N\}$ forms a martingale difference sequence. For any $k \geq t+1$, let $S_k = \sum_{i=t+1}^k D_i$ denote the martingale with bounded difference  $|D_i|\leq 2B$ for all $i$. 
Furthermore, by assumption of the second part of the theorem, we know that for any $k \in [T]$, we have 
\begin{align}
    \sum_{i=t+1}^k L(\lambda_i) \leq k L_{\max} - (\rho k) \epsilon = k (L_{\max}- \rho \epsilon). 
\end{align}
Now, if $\tau \leq N$, for some $k \in [T]$, 
\begin{align}
    b \leq y + \sum_{i=t+1}^{t+k} Z_i  = y + S_k +  \sum_{i=t+1}^{t+k} L(\lambda_i) \leq y + S_k + k(L_{\max} - \rho \epsilon). 
\end{align}
This implies that if $\tau \leq N$, then for some $k \in [T]$,  
\begin{align}
S_k \geq (b-y) - k(L_{\max} - \rho \epsilon) \geq (b-y) - T (L_{\max} - \rho \epsilon) = \rho \epsilon T - \Delta. 
\end{align}
In other words, 
\begin{align}
\{\tau \leq N\} \; \subset \; \bigg\{ \max_{1 \leq k \leq T} S_k \geq \rho \epsilon T - \Delta \bigg\} \eqcolon \calE.     
\end{align}
To conclude the proof, we claim that the event $\calE$ defined above satisfies 
\begin{align}
\mathbb{P}(\tau \leq N ) \leq     \mathbb{P}\lp \calE \rp \leq \exp \lp - \frac{(\rho \epsilon T - \Delta)_+^2}{8 B^2 T}\rp \leq \frac 12, \label{eq:kelly-proof-1}
\end{align}
where the last inequality follows from the assumption that $\Delta \leq \rho \epsilon T - B \sqrt{8 \log 2}$. To prove the upper bound on $\mathbb{P}(\calE)$, observe that $\{D_i: i \in [T]\}$ form a martingale difference sequence with $|D_j|\leq 2B$ almost surely. Hence, with $S_k = \sum_{i=1}^k D_i$, we know that $\{M_k(\theta): k \in [T]\}$ with $M_k(\theta) \coloneqq e^{\theta S_k -  B^2 \theta^2 k/2}$ for some $\theta >0$, forms a nonnegative supermartingale sequence with an initial value of $1$ by using Hoeffding's Lemma. Hence, an application of Ville's inequality implies that 
\begin{align}
    \mathbb{P}\lp \max_{1 \leq k \leq T} M_k(\theta) \; \geq a \rp \leq \frac 1a, \qtext{for} a > 0. 
\end{align}
Now, since $\{\max_{k \in [T]} S_k \geq s\} \subset \{\max_{k \in [T]} M_k(\theta) \geq e^{\theta s -  B^2 \theta^2 T/2} \}$, the above implies that $\mathbb{P}\lp \max_{k \in [T]} S_k \geq s\rp  \leq e^{-\theta s +  B^2 \theta^2 T/2}$. On optimizing over $\theta$, we get the required $e^{-s^2/2B^2 T}$ upper bound used in~\eqref{eq:kelly-proof-1}.

\subsection{Proof of~Proposition~\ref{prop:aggressive-betting}}
\label{proof:aggressive-betting}
To show this result, we will obtain  a lower bound on $\mathbb{P}(\tau(\aggbet) \leq N)$ and an upper bound on $\mathbb{P}(\tau(\kellybet_m) \leq N)$, and then show that under the conditions of the proposition, their difference is positive. The assumption of finite $|\calX|$ allows us to use elementary type-based arguments for the probabilities of rare events; see~\citep{csiszar1998method} and~\citep[Chapter 11]{Cover2006elements}. Throughout this section, we will use $\calP_T(\calX)$ to denote the set of all types or empirical distributions on $\calX$ with denominator $T$, and introduce the term 
\begin{align}
&\Iplus_T(\lambda, r)
= \inf_{Q \in \Qplus_T(r,m,\lambda)} D(Q \parallel P_X), \quad\text{where} \quad 
\Qplus_T(r,m,\lambda)
= \Big\{ Q \in \calP_T(\calX) : \mathbb{E}_{Q}\!\big[h_m(\lambda, X)\big] \ge r \Big\}.
\label{eq:Iplus-T}
\end{align}

We first look at the term $\mathbb{P}(\tau(\aggbet) \leq N)$, and observe that 
\begin{align}
    \mathbb{P}\lp \tau(\aggbet) \leq N\rp & = \mathbb{P}\lp \bigcup_{k = 1}^{T}  \left\{  \sum_{i=t+1}^{t+k} h_m(\aggbet, X_i) \; \geq\; b-y \right\}  \rp \geq \mathbb{P}\lp \frac{1}{T} \sum_{i=t+1}^{N} h_m(\aggbet, X_i) \geq r \rp  \\
    & \geq (T+1)^{|\calX|} \exp \lp - T\, \Iplus_T(\aggbet, r) \rp. \label{eq:proof-aggressive-0}
\end{align}
The last inequality above follows from an application of Sanov's theorem for finite alphabets~\citep[Theorem 11.4.1]{Cover2006elements} recalled in~Fact~\ref{fact:sanov}. 

Similarly, for $\mathbb{P}(\lp \tau(\kellybet_m) \leq N \rp$, we can use the fact that $y$ is small enough that it is impossible to hit the threshold in fewer than $T/2 = (N-t)/2$ steps as stated in~\eqref{eq:fallen-behind-condition}. This implies that 
\begin{align}
\mathbb{P}\lp \tau(\kellybet_m) \leq N \rp & =  \mathbb{P}\lp \bigcup_{k = T/2+1}^{T}  \left\{  \sum_{i=t+1}^{t+k} h_m(\kellybet_m, X_i) \; \geq\; b-y \right\}  \rp \\ 
&\leq \sum_{k=T/2+1}^{T} \mathbb{P}\lp \frac{1}{k} \sum_{i=t+1}^{t+k} h_m(\kellybet_m, X_i) \geq r \frac T{k} \rp  \\
& \leq \sum_{k=T/2 + 1}^T (k+1)^{|\calX|} \exp \lp - k \Iplus(\kellybet_m, r T/k) \rp, \label{eq:proof-aggressive-1}
\end{align}
where the first inequality follows from the union bound, and the second inequality again uses Sanov's theorem for finite alphabets. Finally, recalling the definition of $\Iplus(\cdot, \cdot)$ from~\eqref{eq:I-lambda-r-def}, observe that for all $k \in \{T/2, \ldots, T\}$, 
\begin{align}
    \Iplus(\kellybet_m, r) = \inf_{Q \in \Qplus(r,m, \kellybet_m)} D(Q \parallel P_X)  \leq  \inf_{Q \in \Qplus\lp r \tfrac Tk,m, \kellybet_m \rp} D(Q \parallel P_X)  = \Iplus\lp \kellybet, r \tfrac Tk \rp = \Iplus\lp \kellybet, r \tfrac T k\rp. 
\end{align}
Using this along with $(k+1)^{|\calX|} \leq (T+1)^{|\calX|}$  for all $k \leq T$ in~\eqref{eq:proof-aggressive-1}, we get 
\begin{align}
    \mathbb{P}\lp \tau(\kellybet_m) \leq N \rp \leq  \frac{T}{2} (T+1)^{|\calX|} \exp \lp - \tfrac T2 \Iplus(\kellybet_m, r) \rp. \label{eq:proof-aggressive-2}
\end{align}
Combining this with~\eqref{eq:proof-aggressive-0}, we get 
\begin{align}
    \frac{\mathbb{P}\lp \tau(\aggbet) \leq N \rp}{\mathbb{P}\lp \tau(\kellybet_m) \leq N \rp} &\geq \frac{(T+1)^{-|\calX|} \exp \lp - T\, \Iplus_T(\aggbet, r) \rp }{\frac{T}{2} (T+1)^{|\calX|} \exp \lp - \tfrac T2 \Iplus(\kellybet_m, r) \rp} \\
    & = \exp \lp  T \left\{ \frac 12 \Iplus(\kellybet_m, r) - \frac{\log\lp \tfrac T2 (T+1)^{2|\calX|} \rp}{T} -\Iplus_T(\aggbet, r) \right\}  \rp. 
\end{align}
Thus the lower bound on the ratio of the two probabilities is strictly greater than $1$, if $\Iplus_T(\aggbet, r) < \tfrac 12 \Iplus(\kellybet_m, r) - \frac{\log\lp \tfrac T2 (T+1)^{2|\calX|} \rp}{T}$. Furthermore, by~Lemma~\ref{lemma:approximation-result}, we know that $\Iplus_T(\aggbet, r) \leq \Iplus(\aggbet, r) + C(T, |\calX|, p_{\min})$, which implies the required sufficient condition 
\begin{align}
    \Iplus(\aggbet, r) \leq \frac 12 \Iplus(\kellybet, r) - \frac{\log\lp \tfrac T2 (T+1)^{2|\calX|} \rp}{T} - C(T, |\calX|, p_{\min})  \eqcolon \frac 12 \Iplus(\kellybet, r) - c_T^+, 
\end{align}
where 
\begin{align}
    c_T^+ = \frac{\log\lp \tfrac T2 (T+1)^{2|\calX|} \rp}{T} + \frac{|\calX|}{T} \lp 2 + \log \lp \frac{T}{\min_{x \in \calX}P_X(x)} \rp \rp.  \label{eq:cT-plus-def}
\end{align}

\subsection{Proof of~Proposition~\ref{prop:defensive-betting}}
\label{proof:defensive-betting}
We follow the same pattern as the proof of~Proposition~\ref{prop:aggressive-betting}, and begin by introducing the analog of~$\Iplus_T$ for this case: 
\begin{align}
    &\Iminus_{T/2}(\lambda, r) \coloneq \inf_{Q \in \Qminus_{T/2}(r, m, \lambda)} D(Q \parallel P_X), \qtext{where} \quad 
&
\Qminus_{T/2}(r,m, \lambda)
= \{Q \in \calP_{T/2}(\calX): \mathbb{E}_Q[h_m(\lambda, X)] \leq r \},
\end{align}
where $\calP_{T/2}(\calX)$ denotes the set of all types or empirical distributions with denominator $T/2$ on $\calX$.

For any feasible $\lambda$, let $F(\lambda)$ denote the failure probability $\mathbb{P}(\tau(\lambda) > N)$. First we obtain an upper bound on $F(\defbet)$. 
\begin{align}
    F(\defbet) & = \mathbb{P}\lp \bigcap_{k=1}^T \left\{ \sum_{i=t+1}^{t+k} h_m(\defbet, X_i) \; < b -y \right\} \rp \leq \mathbb{P}\lp \frac{1}{T} \sum_{i=t+1}^{N} h_m(\defbet, X_i) < r \rp. 
\end{align}
An application of Sanov's theorem for finite alphabets implies that 
\begin{align}
    F(\defbet) & \leq (T+1)^{|\calX|} \exp \lp - T \, \Iminus(\defbet, r) \rp. \label{eq:defensive-proof-0}
\end{align}

Now, let $B_K$ denote the value $\max_{x \in \calX} h_m(\kellybet_m, x)$, and consider the event 
\begin{align}
    \calE = \left\{ \frac 2T \sum_{i=t+1}^{t+T/2} h_m(\kellybet_m, X_i) \leq r_- \right\}, \qtext{where} r_- = 2\left(r - \frac 12 B_K \right). 
\end{align}
Crucially, under this event $\calE$, we have 
\begin{align}
    y + \sum_{i=t+1}^{t+T/2} h_m(\kellybet_m, X_i) < y +  \frac T2 r_- =  y + (b-y) - \frac{T}{2} B_K = b - \frac T2 B_K. 
\end{align}
In other words, under $\calE$, with $T/2$ steps remaining, the Kelly betting scheme is at least $(T/2)B_K$ away from the threshold, and it is thus impossible for this scheme to hit the threshold. Thus, 
\begin{align}
    F(\kellybet_m) & \geq \mathbb{P}(\calE) \geq (T/2 + 1)^{-|\calX|} \exp \lp - \frac T2 \Iminus_{T/2}(\kellybet, r_-) \rp, \label{eq:defensive-proof-1}
\end{align}
by another application of Sanov's Theorem for finite alphabets. 
Combining this with~\eqref{eq:defensive-proof-0}, we get 
\begin{align}
    \frac{F(\kellybet_m)}{F(\defbet)} &\geq \frac{(T/2) + 1)^{-|\calX|}\exp \lp - \frac T2 \Iminus_{T/2}(\kellybet, r_-) \rp }{(T+1)^{|\calX|} \exp \lp - T \, \Iminus(\defbet, r) \rp} \\
    & \geq \exp \lp T \lp \Iminus(\defbet, r)  - \frac 12 \Iminus_{T/2}(\kellybet_m, r_-) - \frac{\log\lp \tfrac T2 (T+1)^{2|\calX|} \rp}{T} \rp \rp. 
\end{align}
Thus, if there exists a $\defbet$ satisfying the condition $\Iminus(\defbet, r)  - \frac 12 \Iminus_{T/2}(\kellybet_m, r_-) - \frac{\log\lp \tfrac T2 (T+1)^{2|\calX|} \rp}{T} > 0$, then the failure probability of Kelly bet, $F(\kellybet_m)$, is exponentially larger than the failure probability of the defensive bet, $F(\defbet)$. To complete the proof, we again use Lemma~\ref{lemma:approximation-result} to obtain 
\begin{align}
    \Iminus_{T/2}(\kellybet_m, r_-) \leq \Iminus(\kellybet_m, r_-) + \frac{2|\calX|}{T} \lp 2 + \log \lp \frac{T}{2 \min_{x \in \calX}P_X(x)} \rp \rp. 
\end{align}
Thus, we obtain the following sufficient condition for the failure probability of $\defbet$ to be exponentially smaller than Kelly bet: $\Iminus(\defbet, r) > \frac 12 \Iminus(\kellybet_m, r_-) + c_T^-$, where 
\begin{align}
      c_T- \coloneqq \frac{\log\lp \tfrac T2 (T+1)^{2|\calX|} \rp}{T}   + \frac{|\calX|}{T} \lp 2 + \log \lp \frac{T}{2 \min_{x \in \calX}P_X(x)} \rp \rp.  \label{eq:c-T-minus-def}
\end{align}
This completes the proof. 

\section{Hedging over $\epsilon$-schedules.}
\label{app:hedging}
Let the schedule class be indexed by
\[
\mathcal{K}
\;=\;
\{(\eta,q): \eta \in \{0.25,0.50,0.75\},\; q\in\{1,2\}\},
\qquad |\mathcal{K}| = K = 6,
\]
where $q=1$ corresponds to linear growth and $q=2$ to quadratic growth, and $\eta$
is the aggressiveness onset fraction.
For each $k=(\eta,q)\in\mathcal{K}$ we define
\[
\epsilon_t^{(k)}
=
\begin{cases}
0, & t < \eta N,\\[4pt]
\left(\dfrac{t-\eta N}{(1-\eta)N}\right)^{q}, & t\in[\eta N,\,N],
\end{cases}
\]
so that $\epsilon_{\eta N}^{(k)}=0$ and $\epsilon_N^{(k)}=1$.

Fix $m\in[0,1]$. For each schedule $k\in\mathcal{K}$ we run a separate mixed-strike
betting strategy producing a predictable bet sequence
$\{\lambda_t^{(k)}(m)\}_{t\ge 1} \subset \Lambda_m$.
Concretely, letting $\widehat{\lambda}_t(m)$ denote the empirical Kelly estimate
and $\lambda_{\max}\in\Lambda_m$ an aggressive bet, we implement the
$\epsilon$-greedy rule by drawing (independently of the data)
\[
U_t^{(k)} \sim \mathrm{Bernoulli}(\epsilon_t^{(k)}),
\qquad
\lambda_t^{(k)}(m)
=
\bigl(1-U_t^{(k)}\bigr)\,\widehat{\lambda}_t(m) + U_t^{(k)}\,\lambda_{\max}.
\]
Equivalently, one may view $\lambda_t^{(k)}(m)$ as predictable with respect to an
enlarged filtration that includes the auxiliary randomness $\{U_t^{(k)}\}$.)
Each schedule induces a wealth process
\[
W_0^{(k)}(m)=1,
\qquad
W_t^{(k)}(m)
=
W_{t-1}^{(k)}(m)\Bigl(1+\lambda_t^{(k)}(m)\bigl(X_t-m\bigr)\Bigr),
\quad t\ge 1.
\]

We hedge by splitting the initial unit capital uniformly across schedules and
tracking the total (mixture) wealth
\[
W_t(m)
\;=\;
\frac{1}{K}\sum_{k\in\mathcal{K}} W_t^{(k)}(m).
\]
Under the null $H_0:\mathbb{E}[X]=m$, each $\{W_t^{(k)}(m)\}_{t\ge 0}$ is a
nonnegative test martingale  and therefore their convex
mixture $\{W_t(m)\}_{t\ge 0}$ is also a nonnegative test martingale. Hence, by
Ville's inequality,
\[
\mathbb{P}_{H_0}\!\left(\sup_{t\le N} W_t(m)\ge \frac{1}{\alpha}\right)\le \alpha.
\]
We thus use the hedged stopping rule
\[
\tau_m \;=\; \inf\left\{t\in\{1,\dots,N\}: \log W_t(m)\ge \log\!\left(\frac{1}{\alpha}\right)\right\},
\]
and reject $H_{0,m}$ when $\tau_m \le N$.

Finally, the hedged wealth is within an additive $\log K$ of the best schedule in
log-wealth:
\[
\log W_t(m)
=
\log\!\left(\frac{1}{K}\sum_{k}W_t^{(k)}(m)\right)
\;\ge\;
\max_{k}\log W_t^{(k)}(m) - \log K,
\]
so the price of hedging over $K=6$ schedules is small (an additive $\log 6$ in
log-evidence) while providing robustness to the unknown $P_X$ and problem
difficulty.
\newpage
\section{Algorithmic Implementations}
\label{app:baselines}

We implement the \emph{Linear-$\epsilon$} baseline using a single mixed-strike schedule.
Fixing an onset fraction $\eta=0.5$, we define
\[
\epsilon_t^{\text{lin}}
=
\begin{cases}
0, & t < \eta N,\\[3pt]
\dfrac{t-\eta N}{(1-\eta)N}, & t\in[\eta N,\,N],
\end{cases}
\]
so $\epsilon_{\eta N}^{\text{lin}}=0$ and $\epsilon_N^{\text{lin}}=1$.
At each time $t$, we draw an auxiliary coin $U_t\sim\mathrm{Bernoulli}(\epsilon_t^{\text{lin}})$
independently of the data, and set
\[
\lambda_t(m) = (1-U_t)\,\widehat{\lambda}_t(m) + U_t\,\lambda_{\max},
\]
where $\widehat{\lambda}_t(m)$ is the predictable empirical-Kelly estimate and $\lambda_{\max}\in\Lambda_m$
is an aggressive (endpoint/all-in) bet.
We clip $\lambda_t(m)$ to $\Lambda_m$ (with the same $\varepsilon$-margin used elsewhere) to ensure
$1+\lambda_t(X_t-m)\ge \varepsilon$ for all $X_t\in[0,1]$.
We update wealth via $W_t=W_{t-1}\bigl(1+\lambda_t(m)(X_t-m)\bigr)$ and reject when
$\max_{t\le N} W_t \ge 1/\alpha$.

To reduce sensitivity to the choice of $(\eta,\text{trend})$, we hedge across a finite family of schedules
$\mathcal{K}$ (e.g., $\eta\in\{0.25,0.50,0.75\}$ and trend $q\in\{1,2\}$, so $K=|\mathcal{K}|=6$).
For each $k\in\mathcal{K}$, we run an independent mixed-strike strategy with its own coin flips
$U_t^{(k)}\sim\mathrm{Bernoulli}(\epsilon_t^{(k)})$ and bets
\[
\lambda_t^{(k)}(m)=(1-U_t^{(k)})\,\widehat{\lambda}_t(m)+U_t^{(k)}\,\lambda_{\max},
\qquad
W_t^{(k)}(m)=W_{t-1}^{(k)}(m)\Bigl(1+\lambda_t^{(k)}(m)(X_t-m)\Bigr).
\]
We then form the \emph{hedged} (mixture) wealth by splitting unit capital uniformly:
\[
W_t(m) = \frac{1}{K}\sum_{k\in\mathcal{K}} W_t^{(k)}(m).
\]
Under $H_0:\mathbb{E}[X]=m$, each $\{W_t^{(k)}(m)\}$ is a test martingale, hence their convex mixture
$\{W_t(m)\}$ is also a test martingale and remains anytime-valid. We reject when
$\max_{t\le N} W_t(m)\ge 1/\alpha$.

Also, the STaR procedure~\citep{voravcek2025star} is designed for \emph{one-sided} alternatives and outputs a
one-sided predictable bet magnitude $\tilde{\lambda}_t(m)\ge 0$, clipped to the safe range.
To use STaR as a baseline in our \emph{two-sided} setting $H_1:\mu\neq m$, we keep the same rule for the
magnitude and choose the direction using the sign of the current mean estimate:
\[
s_t(m)=\operatorname{sign}\!\big(\widehat{\mu}_{t-1}-m\big),\qquad
\lambda_t^{\mathrm{STaR\text{-}2s}}(m)=\mathrm{clip}\!\big(s_t(m)\,\tilde{\lambda}_t(m),\,\Lambda_m\big),
\]
with $\operatorname{sign}(0)=0$.
This yields a two-sided version of STaR while preserving predictability and the constraint $\lambda_t(m)\in\Lambda_m$.

\section{Experimental Details}
\label{app:exp-details}

Now, we describe the implementation details for the universal DQN betting policy
introduced in~\Cref{sec:dqn}.

\subsection{Environment.}
We treat each horizon-aware test run as an episodic MDP of length at most $N$.
At time $t\in\{0,1,\dots,N-1\}$ we choose a predictable bet $\lambda_{t+1}(m)\in\Lambda_m$
and update wealth via
\[
W_{t+1}(m)=W_t(m)\bigl(1+\lambda_{t+1}(m)(X_{t+1}-m)\bigr),
\qquad Y_t \coloneqq \log W_t(m),
\]
with $W_0=1$ (so $Y_0=0$) and threshold $\log(1/\alpha)$.
An episode terminates upon first threshold crossing,
$\tau=\inf\{t\ge 1: Y_t\ge \log(\frac{1}{\alpha})\}$, or at the deadline $N$.
We use a sparse terminal reward
\[
R_t = \mathbf{1}\{Y_{t}\ge \log(\frac{1}{\alpha})\},
\]
issued at the first hitting time (and $0$ otherwise). Since the episode stops at the first hit,
the undiscounted return (we set $\gamma=1$) equals the indicator of rejection by the deadline,
so maximizing expected return is equivalent to maximizing $\mathbb{P}(\tau\le N)$.

\subsection{Actions.}
The DQN outputs a discrete action $a_t\in\{0,1,2\}$ which is mapped to a feasible bet
$\lambda_t(m)\in\Lambda_m$ using the current predictable empirical Kelly estimate
$\widehat{\lambda}_t(m)$ and an all-in bet $\lambda_{\mathrm{end},t}(m)$:
\[
a_t = 0:\ \lambda_t = \tfrac12\,\widehat{\lambda}_t(m),
\qquad
a_t = 1:\ \lambda_t = \widehat{\lambda}_t(m),
\qquad
a_t = 2:\ \lambda_t = \lambda_{\mathrm{end},t}(m).
\]
The endpoint bet is directional, chosen by the sign of $\widehat{\mu}_{t-1}-m$ and clipped to the
safe range $\Lambda_m=[-(1-\varepsilon)/ (1-m),\ (1-\varepsilon)/m]$ (we use $\varepsilon=10^{-3}$)
so that $1+\lambda_t(X_t-m)\ge \varepsilon>0$ for all $X_t\in[0,1]$, preventing numerical issues
from $\log(0)$ and preserving anytime-validity by construction.

At each time step we compute $\widehat{\mu}_{t-1}$ and a stable Taylor/Kelly-style estimate using only past data.
Let $S_{t-1}=\sum_{i=1}^{t-1}(X_i-m)$ and $V_{t-1}=\sum_{i=1}^{t-1}(X_i-m)^2$.
We use
\[
\widehat{\lambda}_t(m)=\mathrm{clip}\!\left(\frac{S_{t-1}}{V_{t-1}},\ \Lambda_m\right),
\]
with the convention $\widehat{\lambda}_t(m)=0$ when $V_{t-1}=0$.

\subsection{State.}
Although the oracle DP would use the state $(t, Y_t)$ when $(m, P_X)$ is fixed, the data-generating distribution $P_X$ is unknown in practice. Instead, we feed the DQN a predictable feature vector $\phi_t \in \mathbb{R}^{22}$, computed only from $\mathcal{F}_{t-1}$ and the current log-wealth $Y_{t-1}$. The features summarize: (i) the mean gap $\widehat{\mu}_{t-1}-m$ (and its magnitude), (ii) how far we are from the rejection threshold, including a normalized distance-to-threshold and the per-step log-wealth increase needed to reach it by time $N$, (iii) the remaining time fraction $(N-1-t)/(N-1)$, (iv) variance proxies around $m$ and an SNR-style normalized gap, (v) the current empirical Kelly and endpoint bets, $\widehat{\lambda}_t(m)$ and $\lambda_{\mathrm{end},t}(m)$ along with their difference, (vi) simple Taylor-style proxies for the one-step growth advantage of endpoint versus Kelly and (vii) additional shape summaries from empirically centered moments (skewness, excess kurtosis), and a Beta concentration proxy $\log \widehat{\kappa}$. Because every component is computed predictably (with no dependence on $X_t$), the learned policy remains compatible with anytime-valid inference.

\subsection{Synthetic Training Distribution.}
Training is performed purely on synthetic episodes.
We consider two data-generating families:
\begin{itemize}
\item Beta world: $X_i\sim \mathrm{Beta}(\mathrm{conc}\cdot\mu,\ \mathrm{conc}\cdot(1-\mu))$.
\item Beta-mixture world: each $X_i$ is drawn from a $50/50$ mixture of
$\mathrm{Beta}(\mathrm{conc}\cdot\mu,\mathrm{conc}\cdot(1-\mu))$ and
$\mathrm{Beta}(2\mathrm{conc}\cdot\mu,2\mathrm{conc}\cdot(1-\mu))$.
\end{itemize}
Each rollout batch uses Beta or Beta-mixture with probability $1/2$ each.

\subsection{Training}
To train a single policy that generalizes across horizons and nulls, we sample rollout configurations.
For each rollout batch we sample $(N,m,\mu)$ as follows:
\begin{enumerate}
\item Sample $N$ log-uniformly from a range $[N_{\min},N_{\max}]$
\item Sample $m$ uniformly from $[m_{\min},m_{\max}]$ (away from the boundaries)
\item Sample a difficulty multiplier $c\sim \mathrm{Unif}[c_{\min},c_{\max}]$ and set
\[
|\mu-m|\ \approx\ \sqrt{\frac{2\,\sigma^2_{\mathrm{proxy}}\,c\,\log(1/\alpha)}{N}},
\]
with a random sign, where $\sigma^2_{\mathrm{proxy}}$ is a variance proxy under the sampled world
then clip $\mu$ into $[\mu_{\min},\mu_{\max}]$.
\end{enumerate}
This coupling keeps problems across different $N$ at comparable difficulty scales.
We also randomize the concentration parameter $\mathrm{conc}$ within a specified range.

We implement DQN in PyTorch with a 2-hidden-layer MLP:
\[
22 \xrightarrow{\ \mathrm{ReLU}\ } 256 \xrightarrow{\ \mathrm{ReLU}\ } 128 \xrightarrow{\ } |\mathcal{A}|,
\]
where $|\mathcal{A}|=3$ actions.
We use Double DQN setting \citep{van2016deep}, with Huber loss \citep{huber1992robust}, and the AdamW optimizer \citep{loshchilov2017fixing}.
We use a replay buffer and update the target network every fixed number of
gradient steps. We also clip gradient norms to stabilize training.

Exploration is $\epsilon$-greedy over actions with an exponential decay schedule:
$\epsilon_k=\max(\epsilon_{\min},\epsilon_0\cdot \rho^{k})$ as a function of episode index $k$.
We collect experience using vectorized rollouts with multiple environments in parallel and train
off-policy from the replay buffer. For model selection, we evaluate checkpoints using the
\emph{greedy} policy ($\epsilon=0$) on a fixed evaluation set (fixed random seed),
and select the checkpoint with the highest estimated hit-rate $\mathbb{P}(\tau\le N)$.
We also adapt the learning rate using a plateau scheduler based on this evaluation metric.

Training takes about three hours on a single L40s GPU. Although the MLP is small enough to train on CPU, we use a GPU for faster training. 

\begin{table}[h]
\centering
\caption{DQN training hyperparameters used in our main runs.}
\label{tab:dqn-hparams}
\begin{tabular}{@{}ll@{}}
\toprule
\textbf{Hyperparameter} & \textbf{Setting} \\
\midrule
Actions & $\{\widehat{\lambda}/2,\ \widehat{\lambda},\ \lambda_{\mathrm{end}}\}$ \\
Feature dimension & $22$ \\
Network & MLP (256, 128) with ReLU activation \\
Discount & $\gamma=1$ \\
Loss & Huber Loss \\
Optimizer & AdamW, lr $=3\cdot 10^{-4}$ \\
Replay buffer capacity & $3.2\times 10^6$ \\
Min.\ buffer before training & $4\times 10^4$ \\
Batch size & $512$ \\
Target updates & Hard update every $1000$ train steps \\
Exploration & $\epsilon_0=1.0$, $\epsilon_{\min}=0.02$, decay $0.99998$ / episode \\
Training episodes & $550{,}000$ \\
Eval for checkpointing & $4000$ greedy episodes  \\
$N$ range & Log-uniform in $[100,350]$ \\
$m$ range & Uniform in $[0.01,0.99]$ \\
Difficulty multiplier $c$ & Uniform in $[0.70,1.30]$ \\
$\mu$ Clipping & $[0.01,0.99]$ \\
$\mathrm{conc}$ range & in $[0.1,11.0]$ \\
World & Beta vs.\ Beta-mixture (50/50 per rollout batch) \\
\bottomrule
\end{tabular}
\end{table}

\newpage
\section{Additional Results}

\subsection{Alternative Reward Functions}
\label{app:alternative-reward}
In Figure \ref{fig:reward}, we plot the rejection probability, $P(\text{reject by time } t)$, for the original DQN and two additional DQN variants trained with modified rewards: DQN (binary reward, $r=1$ if the threshold is crossed by the deadline), DQN-EB (early-bonus reward, $r=1+(1-t/N)$), and DQN-U (urgency reward, $r=1-t/N$), together with the existing baselines. The distributional setting is the same as in Figure \ref{fig:rejection-curves}: $m=0.45$, $\mu=0.40$, $N=100$, $\alpha=0.05$, in the Beta-mixture world. 

\begin{figure}[h]
    \centering

    \begin{subfigure}[t]{0.48\textwidth}
        \centering
        \includegraphics[width=\linewidth]{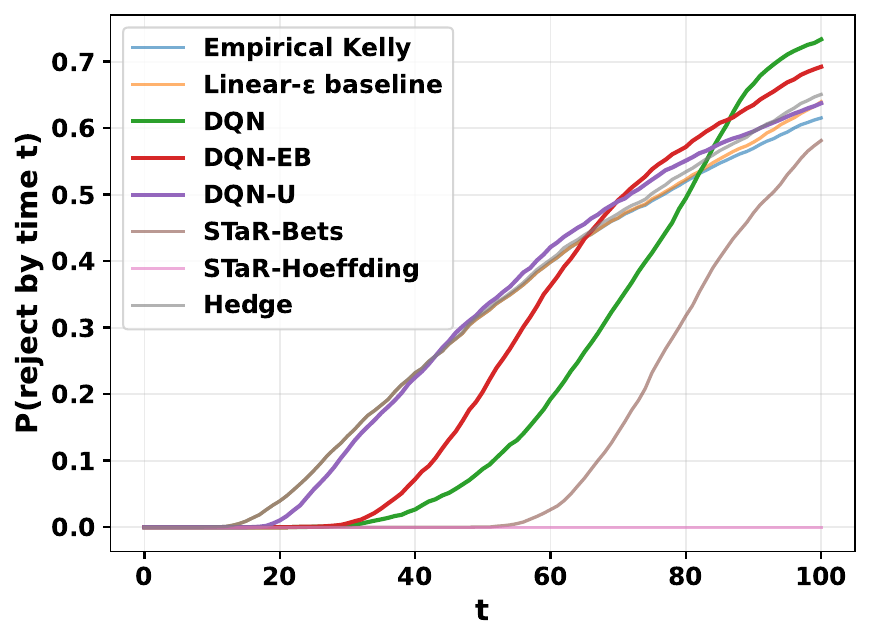}
        \caption{$c=6$}
        \label{subfig:rewarda}
    \end{subfigure}
    \hfill
    \begin{subfigure}[t]{0.48\textwidth}
        \centering
        \includegraphics[width=\linewidth]{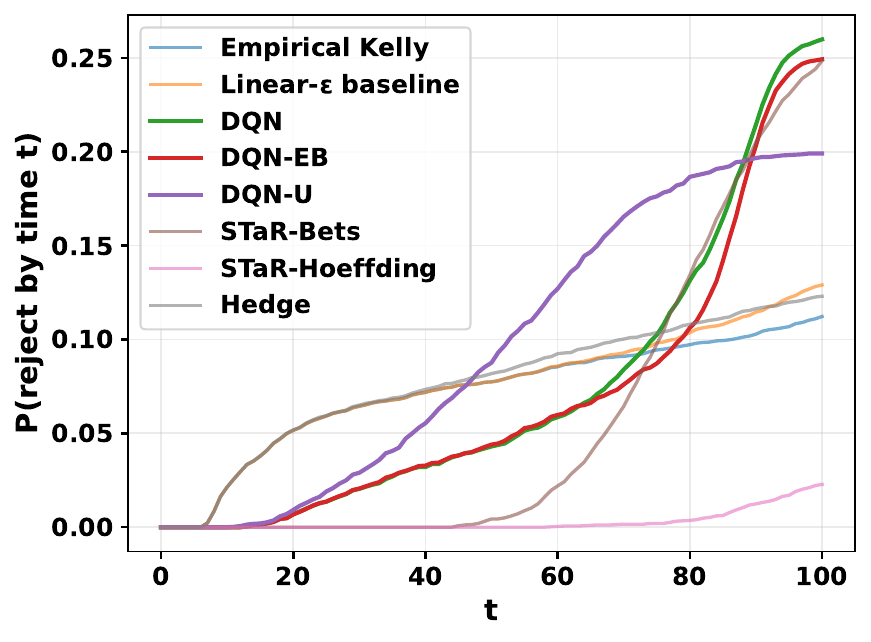}
        \caption{$c=1$}
        \label{subfig:rewardb}
    \end{subfigure}

    \caption{Reward-shaping comparison in the Beta-mixture world. Curves show the cumulative rejection probability $P(\tau \le t)$, where $\tau$ is the first threshold-crossing time. Both panels use $m=0.45$, $\mu_X=0.40$, $N=100$, and $\alpha=0.05$; only the concentration $c$ varies. We compare the original DQN, trained with binary deadline reward $\mathbf{1}\{\tau \le N\}$, with two reward-shaped variants: DQN-EB, with early-bonus reward $\mathbf{1}\{\tau \le N\}[1+(1-\tau/N)]$, and DQN-U, with urgency reward $\mathbf{1}\{\tau \le N\}(1-\tau/N)$, along with the existing baselines.}
    \label{fig:reward}
\end{figure}

\subsection{Generalization to Other Distributions}
We evaluate the DQN policy on out-of-distribution samples drawn from a logit-normal distribution in Figure \ref{fig:ood}. A random variable $P \in (0,1)$ is said to follow a logit-normal distribution with parameters $(\mu_X,\sigma_X)$ if \(
\mathrm{logit}(P)=\log\!\left(\frac{P}{1-P}\right)\sim \mathcal{N}(\mu_X,\sigma_X^2).\)
Equivalently, if $X\sim\mathcal{N}(\mu_X,\sigma_X^2)$, then
\(
P=\frac{e^X}{1+e^X}
\)
has a logit-normal distribution. The policy was trained on the Beta and Beta-mixture families, and is evaluated here on the logit-normal distribution. 

Furthermore, to test whether the approach extends beyond Beta and Beta-mixture settings to discrete distributions, we universally train a DQN policy on Bernoulli families over all $\mu_X, m \in [0,1]$. We then evaluate it on Bernoulli distributions with $\mu_X \in \{0.10, 0.30, 0.50\}$ at $\alpha=0.05$. As in the Beta-family experiments, the learned DQN policy also yields power improvements in the Bernoulli setting as seen in Figure \ref{fig:bernoulli}.

\begin{figure}[h]
    \centering

    \begin{subfigure}[t]{0.48\textwidth}
        \centering
        \includegraphics[width=\linewidth]{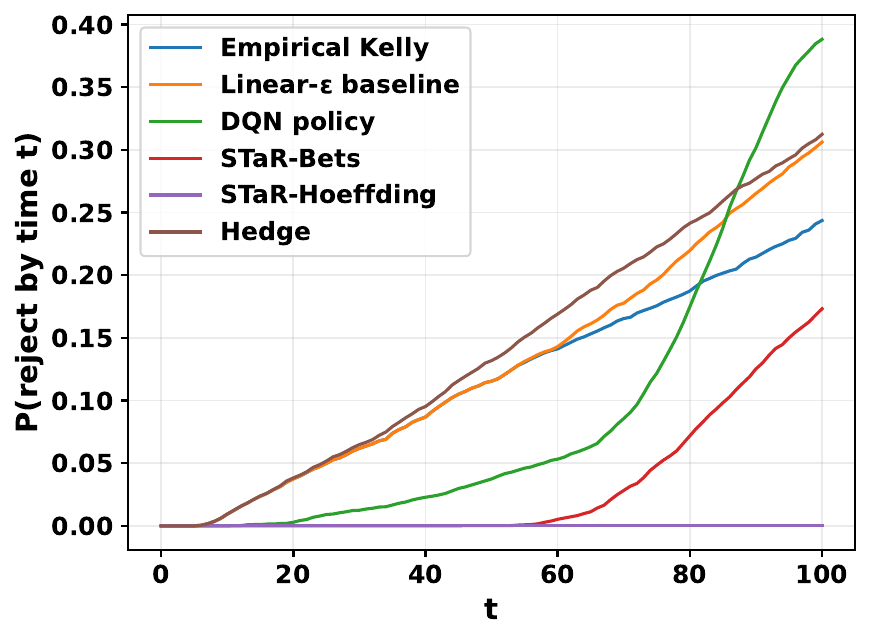}
        \caption{$m=0.27$, $\mu_X = 0.30, \sigma_X=1.0$}
        \label{subfig:ooda}
    \end{subfigure}
    \hfill
    \begin{subfigure}[t]{0.48\textwidth}
        \centering
        \includegraphics[width=\linewidth]{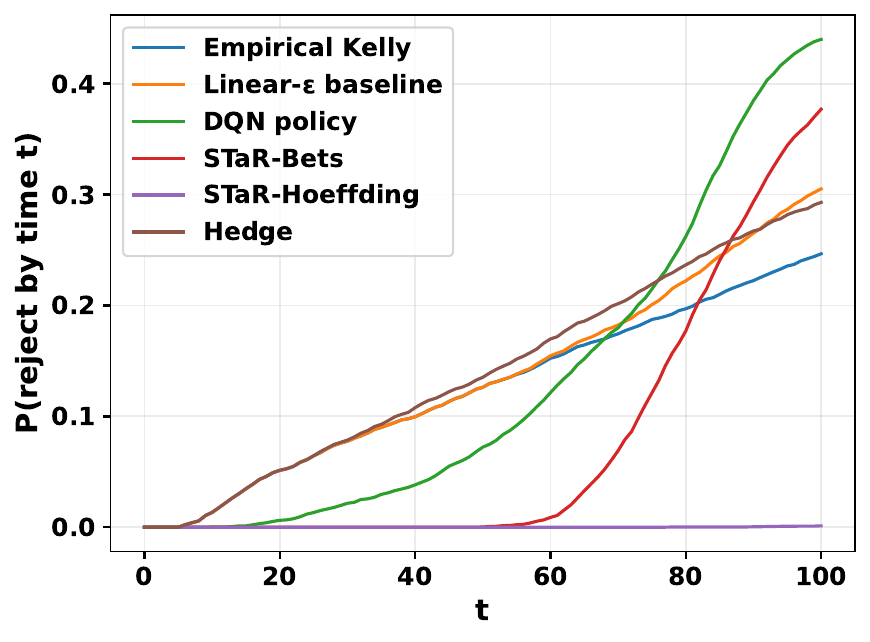}
        \caption{$m=0.60$, $\mu_X = 0.55, \sigma_X=1.2$}
        \label{subfig:oodb}
    \end{subfigure}

    \caption{We evaluate the DQN policy on out-of-distribution samples drawn from a logit-normal distribution. A random variable $P \in (0,1)$ is said to follow a logit-normal distribution with parameters $(\mu_X,\sigma_X)$ if \(
\mathrm{logit}(P)=\log\!\left(\frac{P}{1-P}\right)\sim \mathcal{N}(\mu_X,\sigma_X^2).\)
Equivalently, if $X\sim\mathcal{N}(\mu_X,\sigma_X^2)$, then
\(
P=\frac{e^X}{1+e^X}
\)
has a logit-normal distribution. The policy was trained on the Beta and Beta-mixture families, and is evaluated here on the logit-normal distribution. 
For \textbf{(a)}, $\mu_X = 0.30$ and $\sigma_X = 1.0$; for \textbf{(b)}, $\mu_X = 0.55$ and $\sigma_X = 1.2$. In both cases, DQN achieves higher power, even though the evaluation distribution is out of distribution relative to the training distributions.}
    \label{fig:ood}
\end{figure}

\begin{figure}[h]
    \centering

    \begin{subfigure}[t]{0.325\textwidth}
        \centering
        \includegraphics[width=\linewidth]{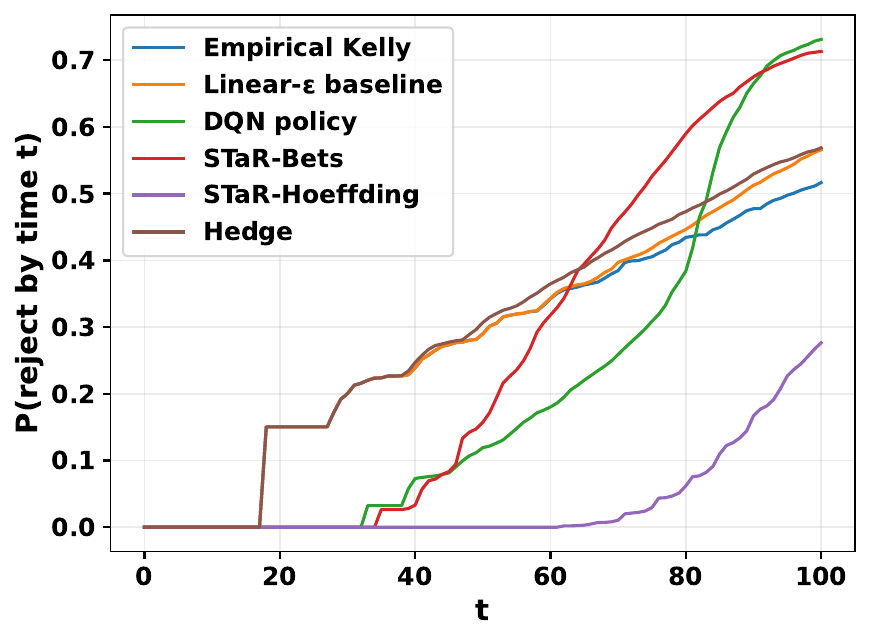}
        \caption{$m=0.20$, $\mu_X = 0.10$}
        \label{fig:bernoulli1}
    \end{subfigure}
    \hfill
    \begin{subfigure}[t]{0.325\textwidth}
        \centering
        \includegraphics[width=\linewidth]{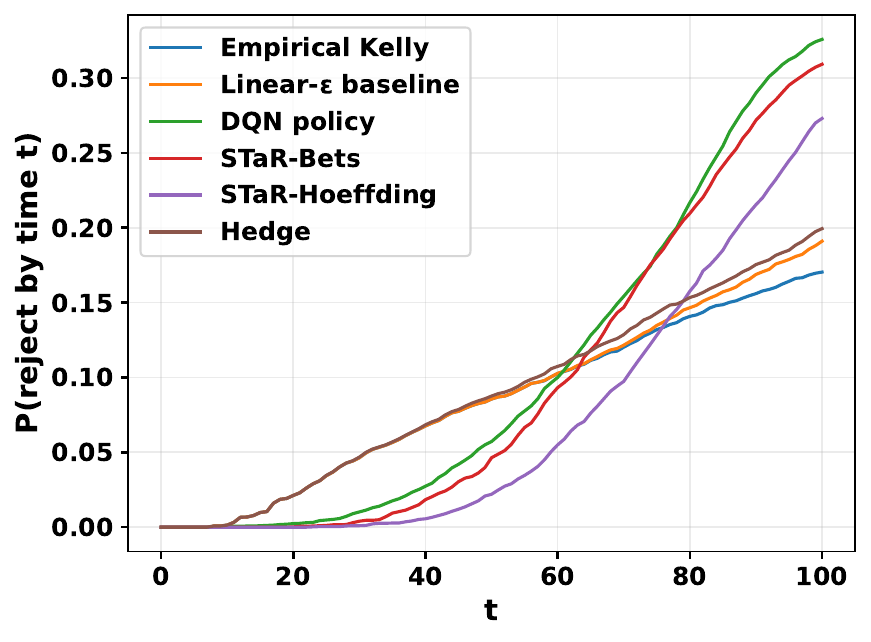}
        \caption{$m=0.22$, $\mu_X = 0.30$}
        \label{fig:bernoulli2}
    \end{subfigure}
    \hfill
    \begin{subfigure}[t]{0.325\textwidth}
        \centering
        \includegraphics[width=\linewidth]{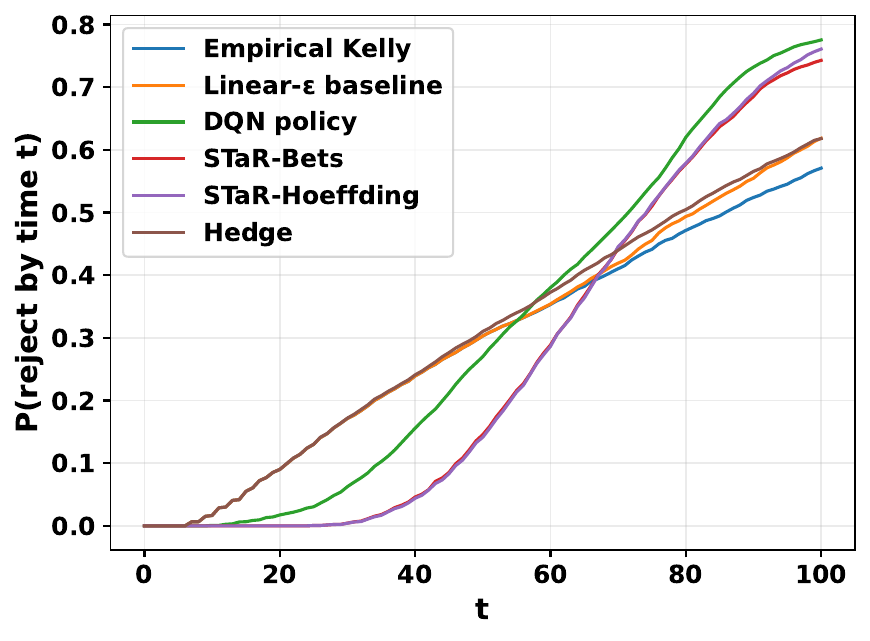}
        \caption{$m=0.65$, $\mu_X = 0.50$}
        \label{fig:bernoulli3}
    \end{subfigure}

    \caption{Instead of focusing on Beta/Beta-mixture distributions, we universally train a DQN policy on Bernoulli distribution families (for all $\mu_X, m \in [0,1]$). We evaluate for Bernoulli distributions with $\mu_X=0.10, \mu_X=0.30$ and $\mu_X=0.50$ with $\alpha=0.05$. Just like for Beta families, DQN learned policy yields power improvements in the Bernoulli case too. }
    \label{fig:bernoulli}
\end{figure}

\subsection{Real Data Experiments}
\label{app:real-data}

\textbf{Data domain 1: DNA methylation}
(\href{https://www.ncbi.nlm.nih.gov/geo/query/acc.cgi?acc=GSE33896}{NCBI GEO: GSE33896};
Figures~\ref{fig:GSM838506}--\ref{fig:GSM838517}).
We use genome-wide CpG methylation beta values. Each observation is a continuous beta value in $[0,1]$, where $0$ indicates a fully unmethylated CpG site and $1$ indicates a fully methylated site. We test three datasets spanning adipose-derived stem cells, induced osteocytes, and a rhabdomyosarcoma cell line. These empirical distributions are strongly heterogeneous, with pronounced skewness and bimodality, and arise from a completely different data-generating process than the synthetic families used in training. We provide histograms for the empirical distributions along with the power results.

\textbf{Data domain 2: Daily relative humidity}
(\href{https://www.ncei.noaa.gov/access/crn/products.html}{NOAA U.S. Climate Reference Network};
Figures~\ref{fig:AZ_Yuma}--\ref{fig:CO_Boulder}).
We use daily average relative humidity values from three stations spanning distinct climates: desert (AZ Yuma), mountain (CO Boulder), and temperate Northeast (NY Millbrook). We scale the raw measurements from $[0\%,100\%]$ to $[0,1]$ by dividing by $100$. These yield right-skewed, near-symmetric, and mildly left-skewed distributions, respectively. Because our framework assumes i.i.d. observations, we evaluate these data by resampling individual daily measurements with replacement from the observed record. This isolates transfer to the real marginal distribution rather than robustness to temporal dependence, which is outside the current scope of this work.

\begin{figure}[h]
    \centering

    \begin{subfigure}[t]{0.32\textwidth}
        \centering
        \includegraphics[width=\linewidth]{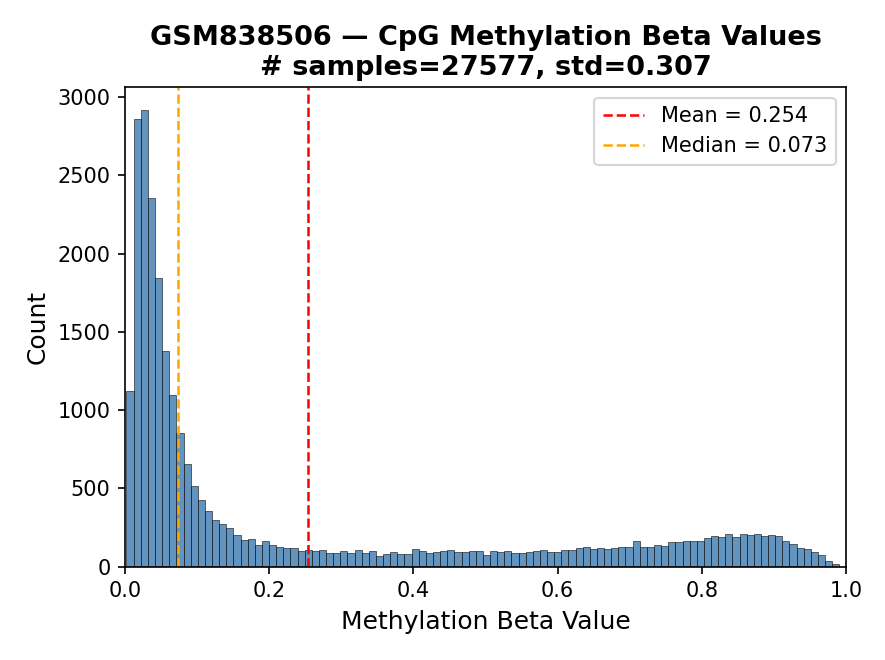}
        \caption{The distribution of GSM838506}
        \label{subfig:GSM838506-hist}
    \end{subfigure}
    \hfill
    \begin{subfigure}[t]{0.32\textwidth}
        \centering
        \includegraphics[width=\linewidth]{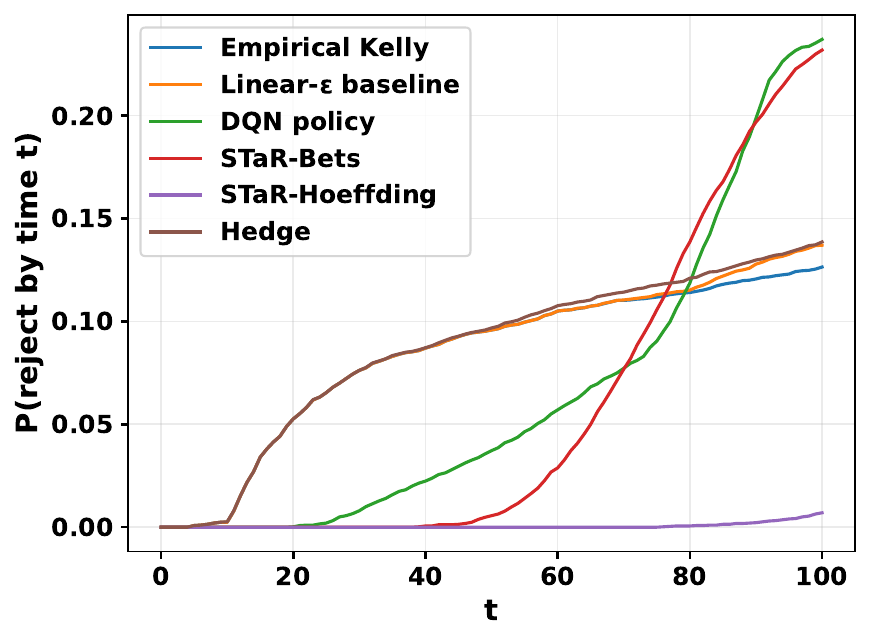}
        \caption{$m=0.30$, $\hat{\mu}_X = 0.2537$}
        \label{subfig:GSM838506-rej}
    \end{subfigure}
    \hfill
    \begin{subfigure}[t]{0.32\textwidth}
        \centering
        \includegraphics[width=\linewidth]{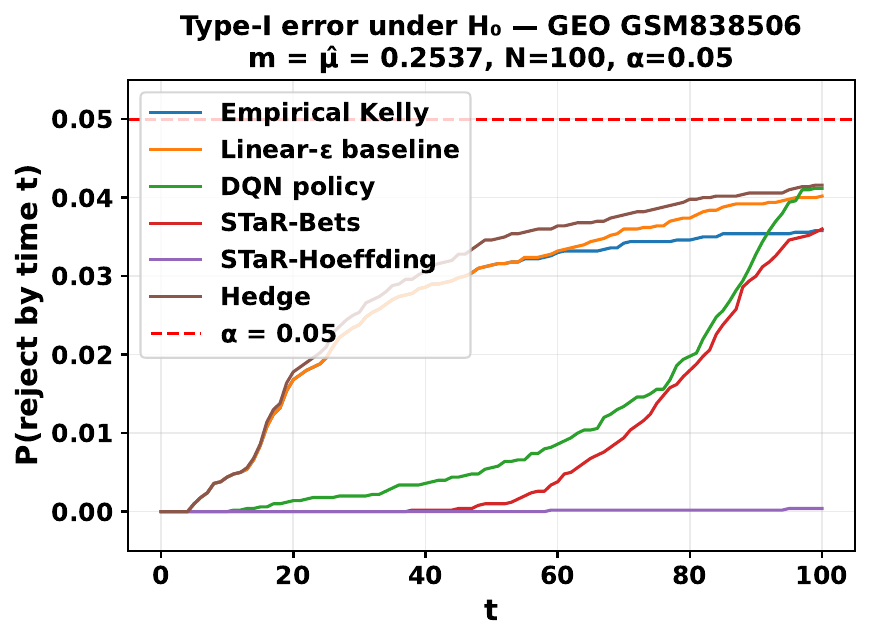}
        \caption{Type-I error, $m=\hat{\mu}_X=0.2537$}
        \label{subfig:GSM838506-type1}
    \end{subfigure}

    \caption{Real-data evaluation on CpG methylation beta values from GEO sample GSM838506 (adipose-derived stem cells, donor~1; GSE33896, Illumina HumanMethylation27 BeadChip). (a)~The empirical distribution is heavily right-skewed with a bimodal structure. (b)~Rejection curves for $H_0\colon \mu = 0.30$ with randomly $N=100$ subsampled observations and $\alpha=0.05$. (c)~Type-I error curves for $H_0\colon \mu = \hat{\mu}_X$ with $N=100$ and $\alpha=0.05$. We conducted 5,000 trials to create both rejection and Type-I error curves.}
    \label{fig:GSM838506}
\end{figure}

\begin{figure}[h]
    \centering

    \begin{subfigure}[t]{0.32\textwidth}
        \centering
        \includegraphics[width=\linewidth]{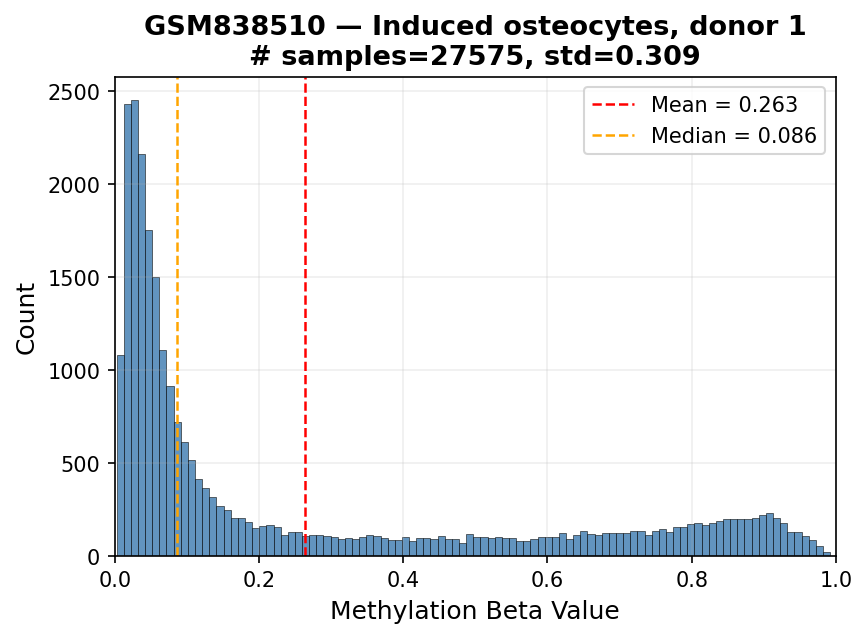}
        \caption{The distribution of GSM838510}
        \label{subfig:GSM838510-hist}
    \end{subfigure}
    \hfill
    \begin{subfigure}[t]{0.32\textwidth}
        \centering
        \includegraphics[width=\linewidth]{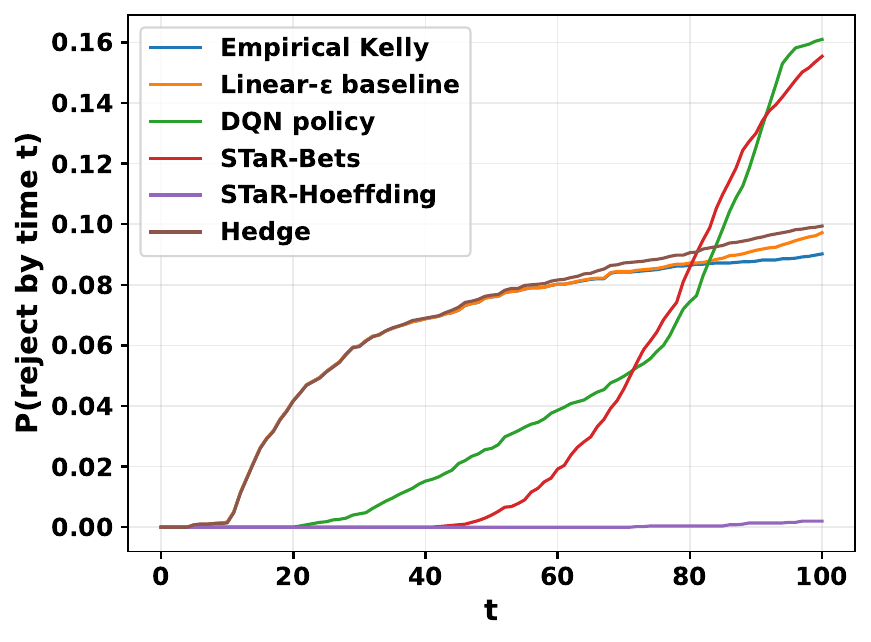}
        \caption{$m=0.30$, $\hat{\mu}_X = 0.2635$}
        \label{subfig:GSM838510-rej}
    \end{subfigure}
    \hfill
    \begin{subfigure}[t]{0.32\textwidth}
        \centering
        \includegraphics[width=\linewidth]{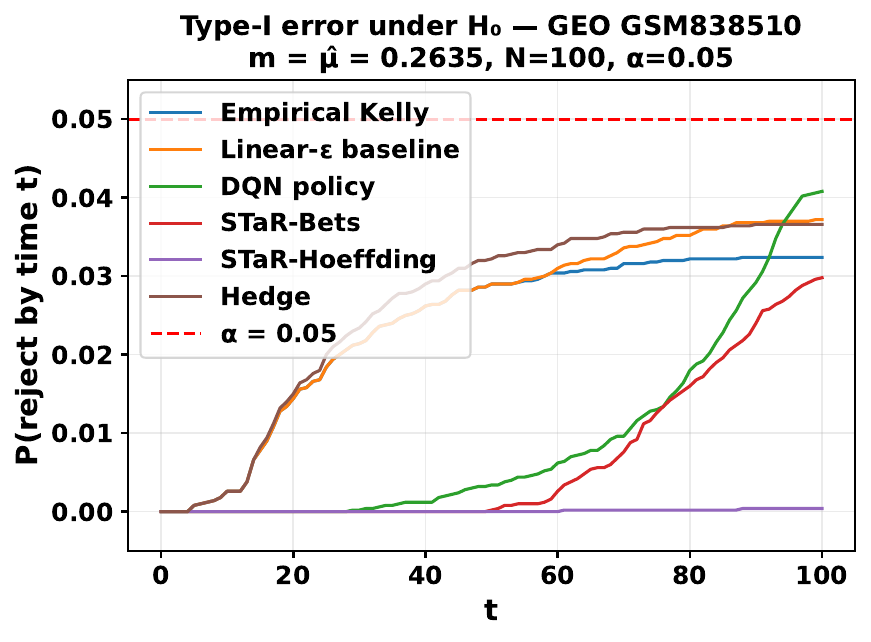}
        \caption{Type-I error, $m=\hat{\mu}_X=0.2635$}
        \label{subfig:GSM838510-type1}
    \end{subfigure}

    \caption{Real-data evaluation on CpG methylation beta values from GEO sample GSM838510 (in vitro induced osteocytes, donor~1; GSE33896). (a)~The distribution shape is similar to the undifferentiated stem cells but with a slightly higher mean. (b)~Rejection curves for $H_0\colon \mu = 0.30$ with $N=100$ and $\alpha=0.05$. (c)~Type-I error curves for $H_0\colon \mu = \hat{\mu}_X$ with $N=100$ and $\alpha=0.05$. We conducted 5,000 trials to create both rejection and Type-I error curves.}
    \label{fig:GSM838510}
\end{figure}

\begin{figure}[h]
    \centering

    \begin{subfigure}[t]{0.32\textwidth}
        \centering
        \includegraphics[width=\linewidth]{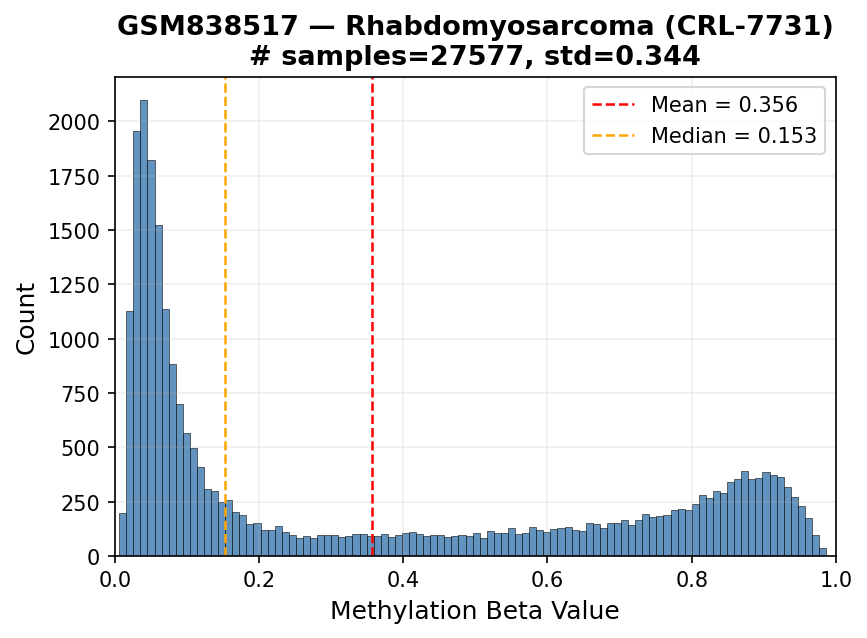}
        \caption{The distribution of GSM838517}
        \label{subfig:GSM838517-hist}
    \end{subfigure}
    \hfill
    \begin{subfigure}[t]{0.32\textwidth}
        \centering
        \includegraphics[width=\linewidth]{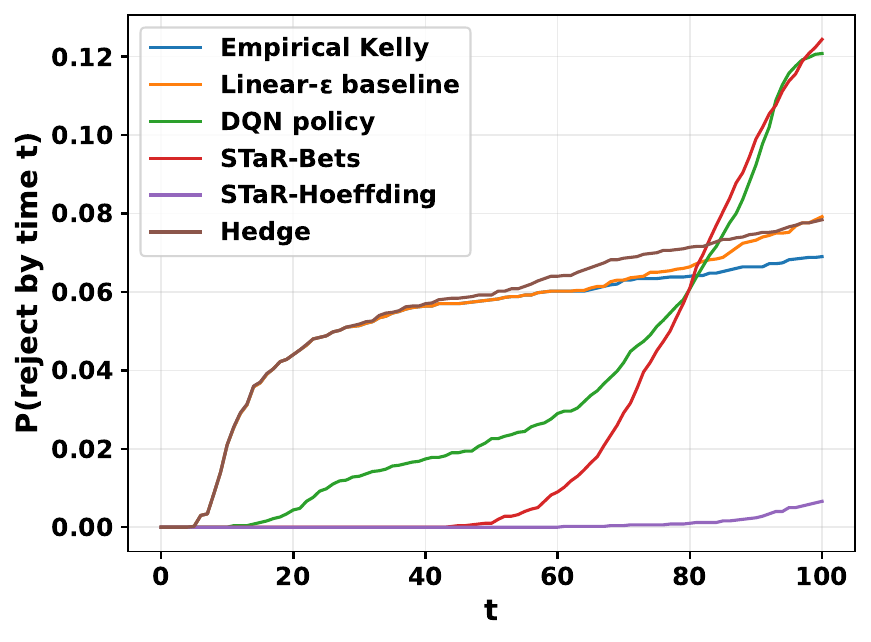}
        \caption{$m=0.39$, $\hat{\mu}_X = 0.3560$}
        \label{subfig:GSM838517-rej}
    \end{subfigure}
    \hfill
    \begin{subfigure}[t]{0.32\textwidth}
        \centering
        \includegraphics[width=\linewidth]{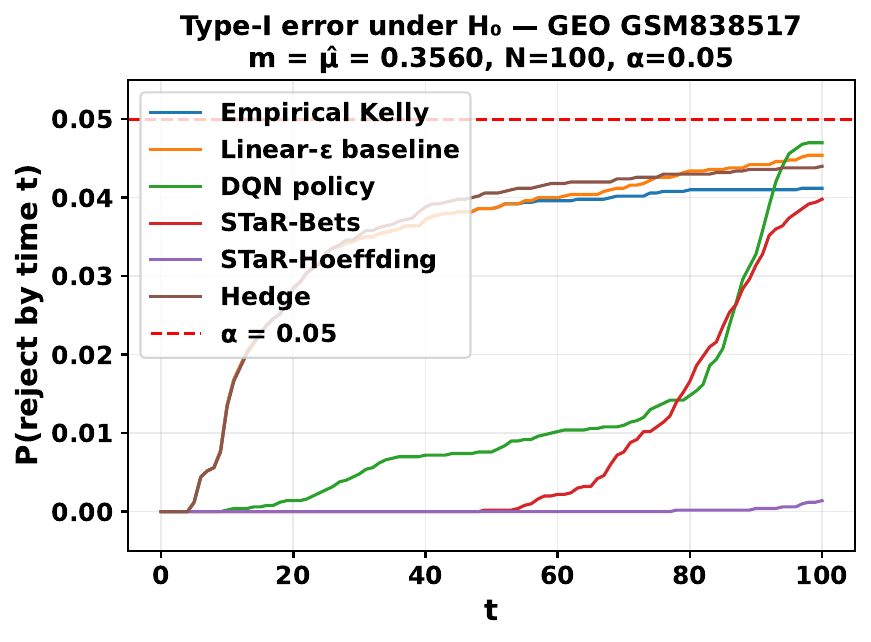}
        \caption{Type-I error, $m=\hat{\mu}_X=0.3560$}
        \label{subfig:GSM838517-type1}
    \end{subfigure}

    \caption{Real-data evaluation on CpG methylation beta values from GEO sample GSM838517 (rhabdomyosarcoma cell line CRL-7731; GSE33896). (a)~The cancer cell line exhibits a more pronounced bimodal distribution with higher mean and the largest standard deviation among all samples. (b)~Rejection curves for $H_0\colon \mu = 0.39$ with $N=100$ and $\alpha=0.05$. (c)~Type-I error curves for $H_0\colon \mu = \hat{\mu}_X$ with $N=100$ and $\alpha=0.05$. We conducted 5,000 trials to create both rejection and Type-I error curves.}
    \label{fig:GSM838517}
\end{figure}

\begin{figure}[h]
    \centering

    \begin{subfigure}[t]{0.32\textwidth}
        \centering
        \includegraphics[width=\linewidth]{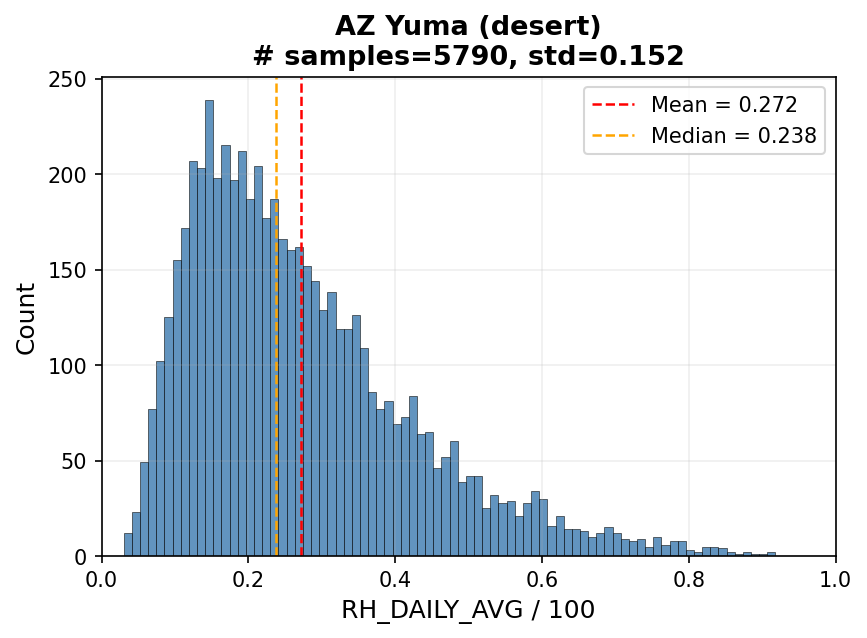}
        \caption{The distribution of relative humidity}
        \label{subfig:AZ_Yuma-hist}
    \end{subfigure}
    \hfill
    \begin{subfigure}[t]{0.32\textwidth}
        \centering
        \includegraphics[width=\linewidth]{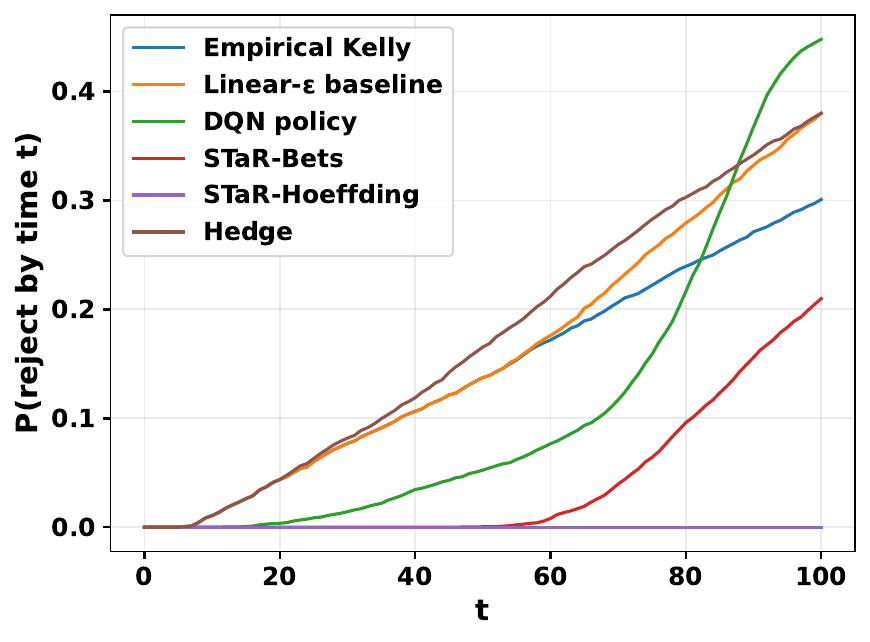}
        \caption{$m=0.24$, $\hat{\mu}_X = 0.2720$}
        \label{subfig:AZ_Yuma-rej}
    \end{subfigure}
    \hfill
    \begin{subfigure}[t]{0.32\textwidth}
        \centering
        \includegraphics[width=\linewidth]{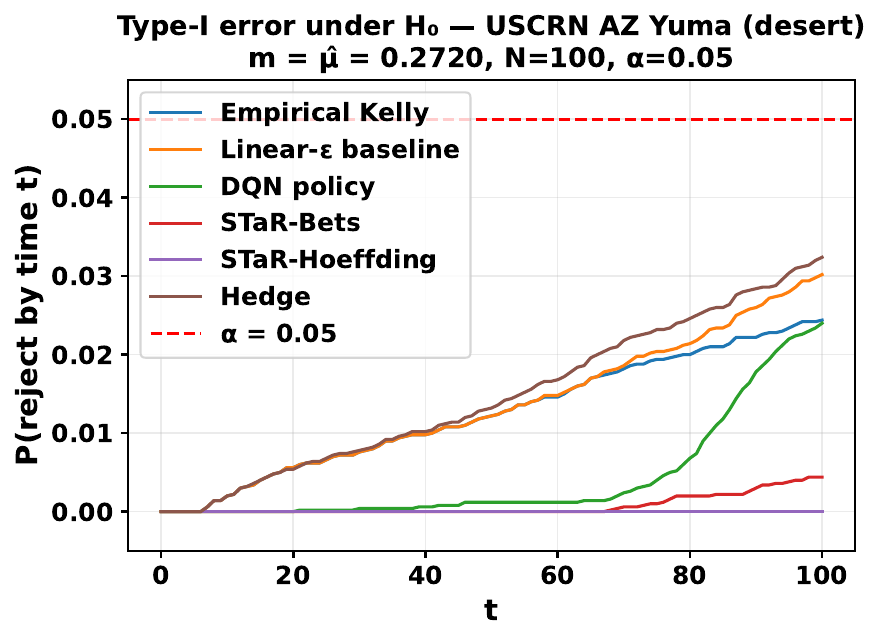}
        \caption{Type-I error, $m=\hat{\mu}_X=0.2720$}
        \label{subfig:AZ_Yuma-type1}
    \end{subfigure}

    \caption{Real-data evaluation on daily average relative humidity from the USCRN station at Yuma, AZ (desert climate, 2007--2025). (a)~The distribution is right-skewed with low mean. (b)~Rejection curves for $H_0\colon \mu = 0.24$ with $N=100$ and $\alpha=0.05$. (c)~Type-I error curves for $H_0\colon \mu = \hat{\mu}_X$ with $N=100$ and $\alpha=0.05$. We conducted 5,000 trials to create both rejection and Type-I error curves.}
    \label{fig:AZ_Yuma}
\end{figure}

\begin{figure}[h]
    \centering

    \begin{subfigure}[t]{0.32\textwidth}
        \centering
        \includegraphics[width=\linewidth]{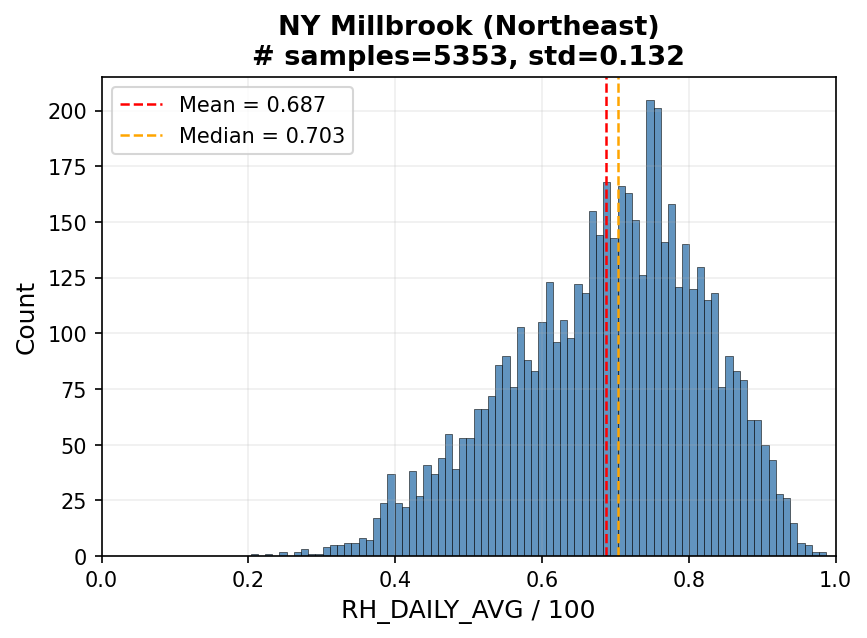}
        \caption{The distribution of relative humidity}
        \label{subfig:NY_Millbrook-hist}
    \end{subfigure}
    \hfill
    \begin{subfigure}[t]{0.32\textwidth}
        \centering
        \includegraphics[width=\linewidth]{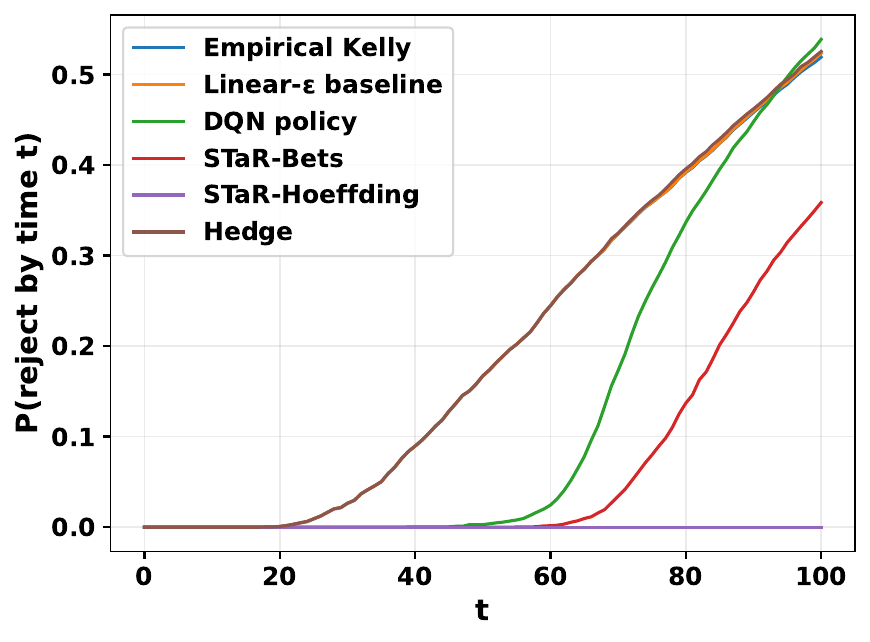}
        \caption{$m=0.65$, $\hat{\mu}_X = 0.6874$}
        \label{subfig:NY_Millbrook-rej}
    \end{subfigure}
    \hfill
    \begin{subfigure}[t]{0.32\textwidth}
        \centering
        \includegraphics[width=\linewidth]{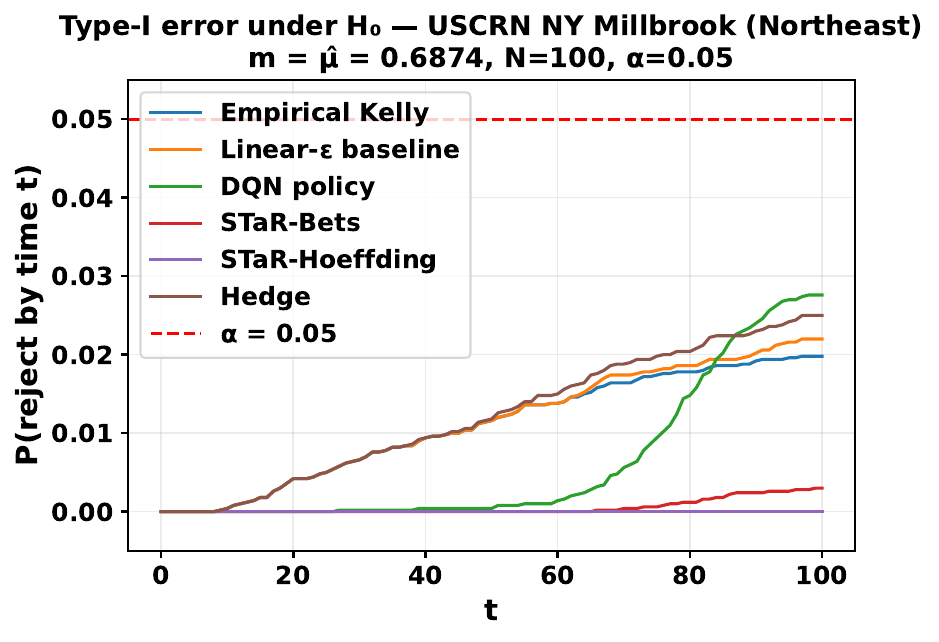}
        \caption{Type-I error, $m=\hat{\mu}_X=0.6874$}
        \label{subfig:NY_Millbrook-type1}
    \end{subfigure}

    \caption{Real-data evaluation on daily average relative humidity from the USCRN station at Millbrook, NY (temperate Northeast climate, 2004--2025). (a)~The distribution is approximately symmetric with a mild left skew. (b)~Rejection curves for $H_0\colon \mu = 0.65$ with $N=100$ and $\alpha=0.05$. (c)~Type-I error curves for $H_0\colon \mu = \hat{\mu}_X$ with $N=100$ and $\alpha=0.05$. We conducted 5,000 trials to create both rejection and Type-I error curves.}
    \label{fig:NY_Millbrook}
\end{figure}

\begin{figure}[h]
    \centering

    \begin{subfigure}[t]{0.32\textwidth}
        \centering
        \includegraphics[width=\linewidth]{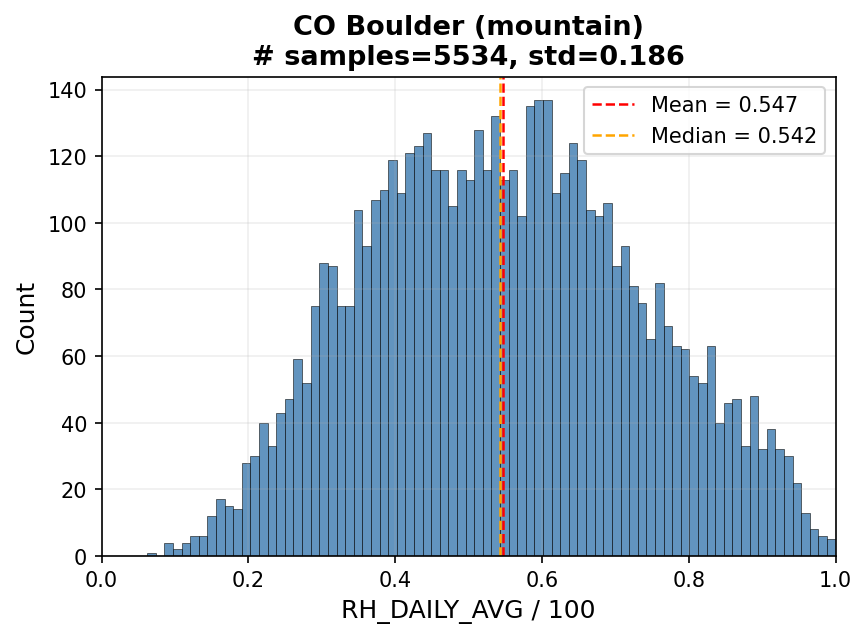}
        \caption{The distribution of relative humidity}
        \label{subfig:CO_Boulder-hist}
    \end{subfigure}
    \hfill
    \begin{subfigure}[t]{0.32\textwidth}
        \centering
        \includegraphics[width=\linewidth]{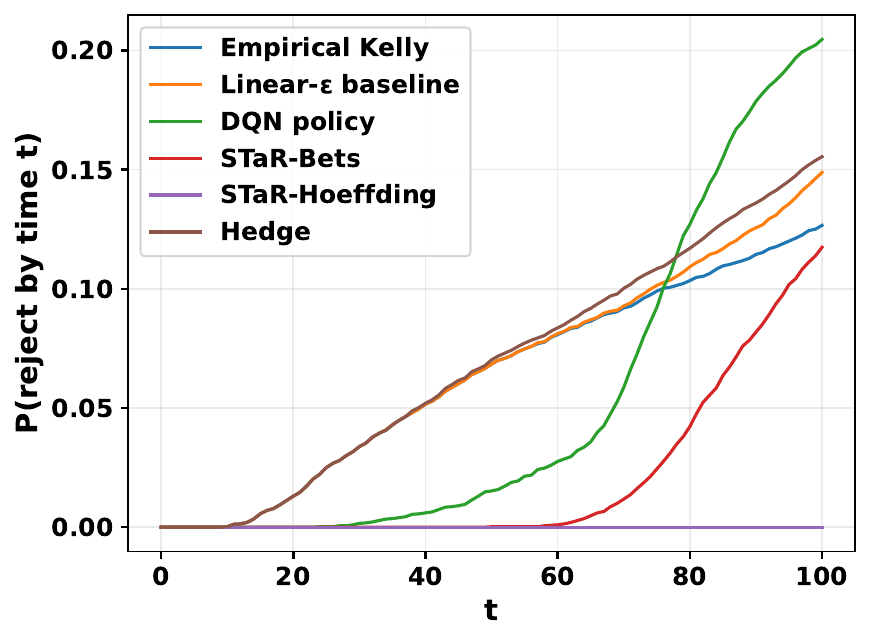}
        \caption{$m=0.52$, $\hat{\mu}_X = 0.5474$}
        \label{subfig:CO_Boulder-rej}
    \end{subfigure}
    \hfill
    \begin{subfigure}[t]{0.32\textwidth}
        \centering
        \includegraphics[width=\linewidth]{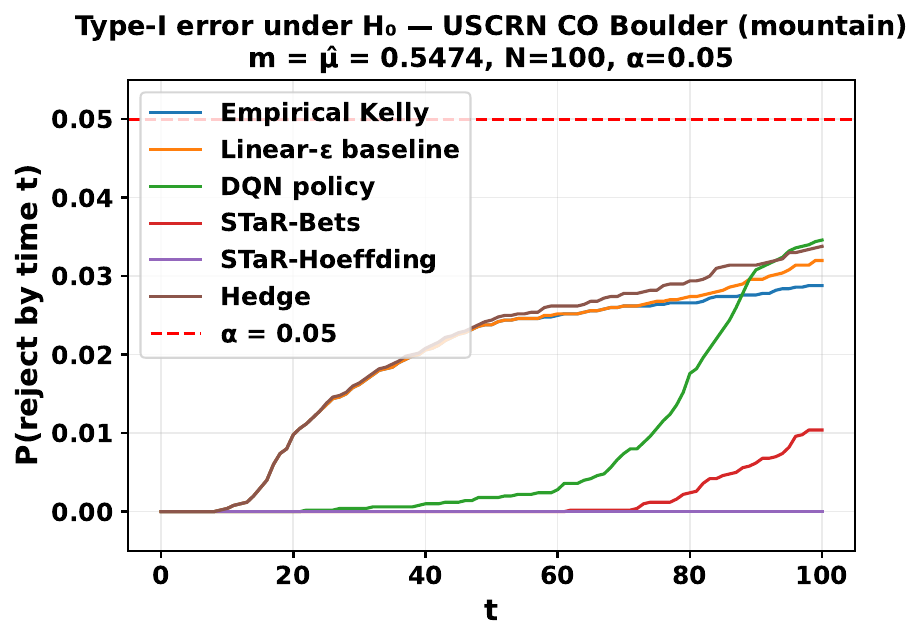}
        \caption{Type-I error, $m=\hat{\mu}_X=0.5474$}
        \label{subfig:CO_Boulder-type1}
    \end{subfigure}

    \caption{Real-data evaluation on daily average relative humidity from the USCRN station at Boulder, CO (mountain climate, 2003--2025). (a)~The distribution is nearly symmetric with the highest variance among all USCRN stations tested. (b)~Rejection curves for $H_0\colon \mu = 0.52$ with $N=100$ and $\alpha=0.05$. (c)~Type-I error curves for $H_0\colon \mu = \hat{\mu}_X$ with $N=100$ and $\alpha=0.05$. We conducted 5,000 trials to create both rejection and Type-I error curves.}
    \label{fig:CO_Boulder}
\end{figure}

\subsection{Extended Action Space}
\label{app:extended-action-space}
Our choice of the 3-action set is due to the fact that it is the smallest discrete set that directly matches the three theory-motivated regimes in the phase diagram: conservative, approximately Kelly, and aggressive betting. To investigate the question of whether larger action set yield better policy, we trained an expanded 9-action version and compared it with our original 3-action policy in Figure \ref{fig:two-DQN-policies}. Empirically, the larger action space achieved power improvements similar to those of the 3-action policy relative to existing baselines, but it behaved somewhat differently: the 9-action version attained a higher rejection probability at earlier times, but a slightly lower rejection probability at the deadline.

Our interpretation is that, under our current synthetic training setup, the benefit of expanding the action space may be limited if the total training time (~3hours) and synthetic dataset size are held constant. Intuitively, in this sparse-reward RL setting, a larger action space makes optimization harder and increases both data and training requirements. Thus, with fixed compute, data and a simple DQN architecture, the larger-action policy may simply be less well optimized.

\begin{figure}[h]
    \centering

    \begin{subfigure}[t]{0.48\textwidth}
        \centering
        \includegraphics[width=\linewidth]{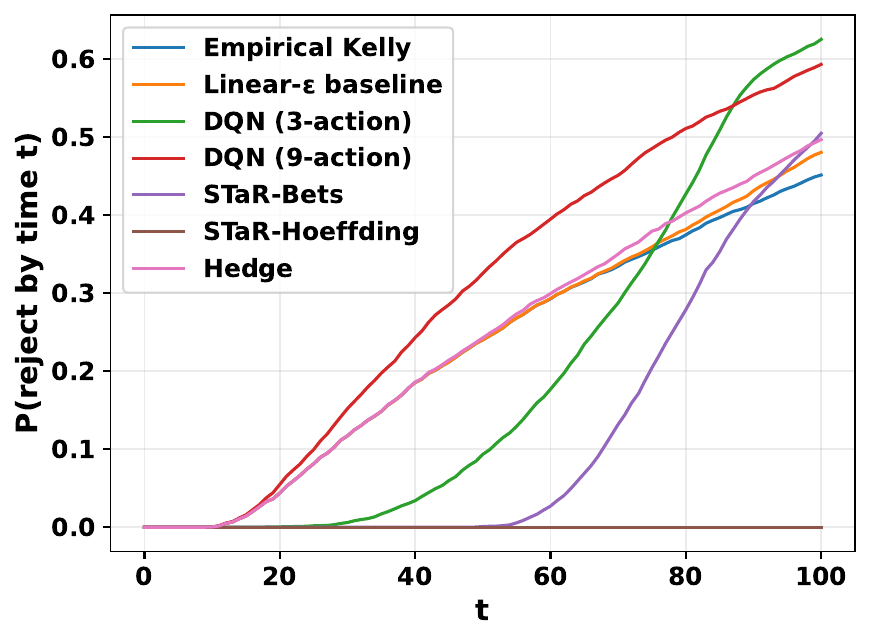}
        \caption{$m=0.45$, $\mu_X = 0.4, conc=6$}
        \label{subfig:two-DQN-policiesa}
    \end{subfigure}
    \hfill
    \begin{subfigure}[t]{0.48\textwidth}
        \centering
        \includegraphics[width=\linewidth]{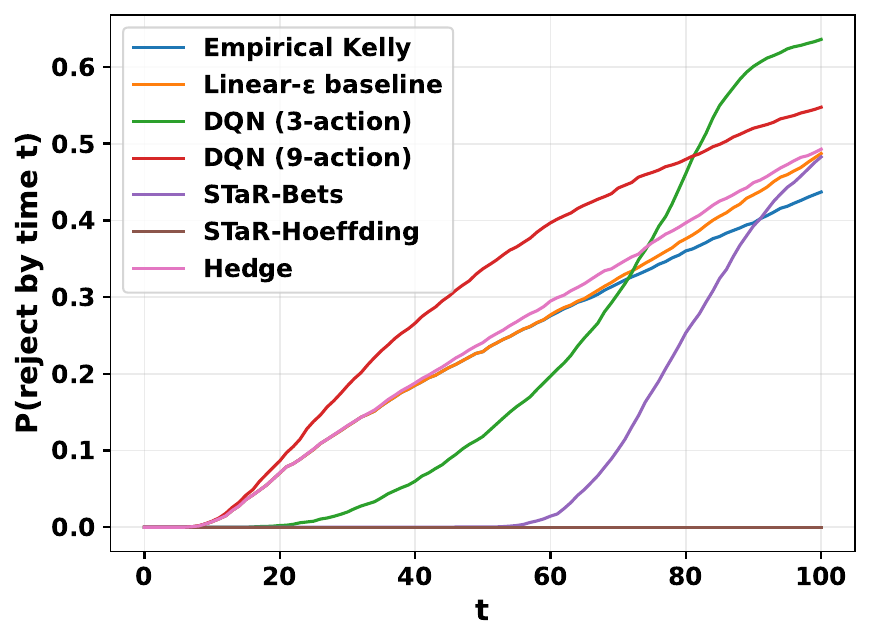}
        \caption{$m=0.55$, $\mu_X = 0.50, conc=6$}
        \label{subfig:two-DQN-policiesb}
    \end{subfigure}

    \caption{We compare two DQN policies: one trained with the original 3-action space, consisting of $\widehat{\lambda}_t(m)/2$, $\widehat{\lambda}t(m)$, and $\lambda{\max}$, and another with an expanded 9-action space, namely $\widehat{\lambda}_t(m)/8$, $\widehat{\lambda}_t(m)/4$, $\widehat{\lambda}_t(m)/2$, $\widehat{\lambda}_t(m)$, $\widehat{\lambda}_t(m)$, $\widehat{\lambda}_t(m)\times \frac{5}{4}$, $\widehat{\lambda}_t(m)\times \frac{6}{4}$, $\widehat{\lambda}t(m)\times \frac{7}{4}$, and $\lambda{\max}$. The distribution in the plot is Beta and $\alpha=0.05$. We observe that the larger action space yields similar performance but does not lead to an overall improvement. Therefore, for simplicity, we use the 3-action baseline in the paper.}
    \label{fig:two-DQN-policies}
\end{figure}

\subsection{DQN on Longer Deadlines}
\label{app:dqn-longer}

We evaluate the same DQN policy used in the main experiments at longer deadlines, \(N \in \{250,300,350\}\), as shown in Figure~\ref{fig:longer}. The data-generating process is the same Beta-mixture setting used in Figure~\ref{fig:rejection-curves}. The results show that DQN continues to achieve improved power under these extended horizons.

\begin{figure}[h]
    \centering

    \begin{subfigure}[t]{0.325\textwidth}
        \centering
        \includegraphics[width=\linewidth]{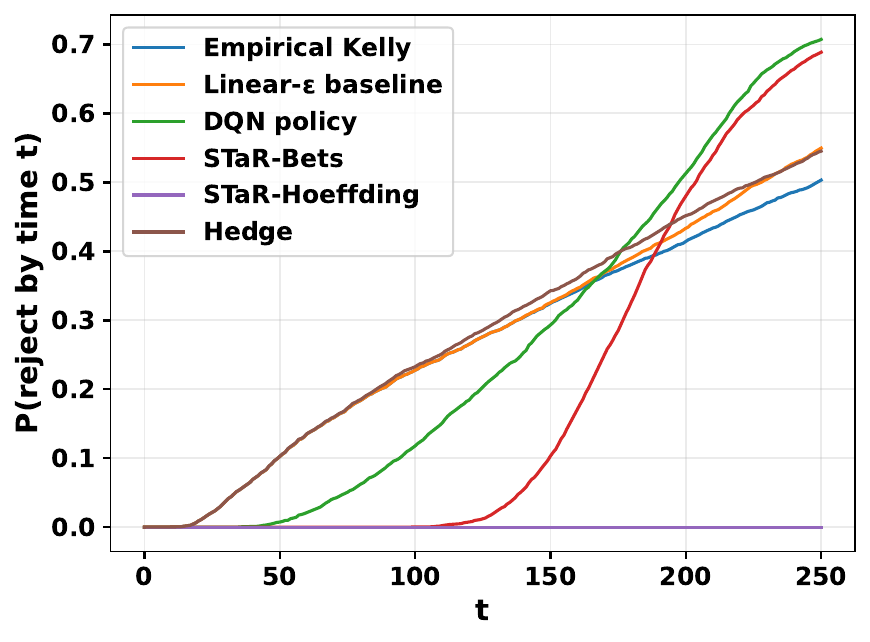}
        \caption{$m=0.43$, $\mu_X = 0.40, conc=6$}
        \label{fig:longera}
    \end{subfigure}
    \hfill
    \begin{subfigure}[t]{0.325\textwidth}
        \centering
        \includegraphics[width=\linewidth]{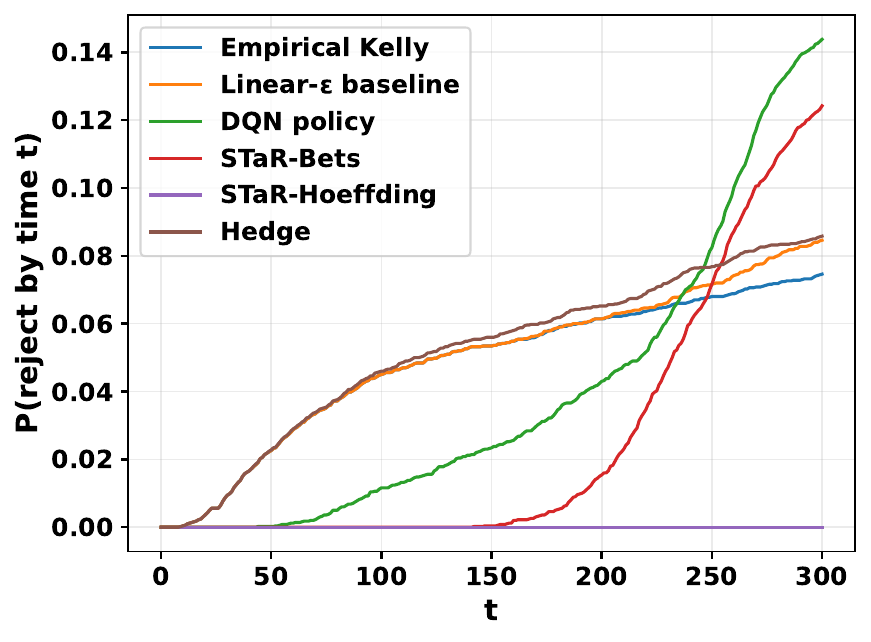}
        \caption{$m=0.41$, $\mu_X = 0.40, conc=6$}
        \label{fig:longerb}
    \end{subfigure}
    \hfill
    \begin{subfigure}[t]{0.325\textwidth}
        \centering
        \includegraphics[width=\linewidth]{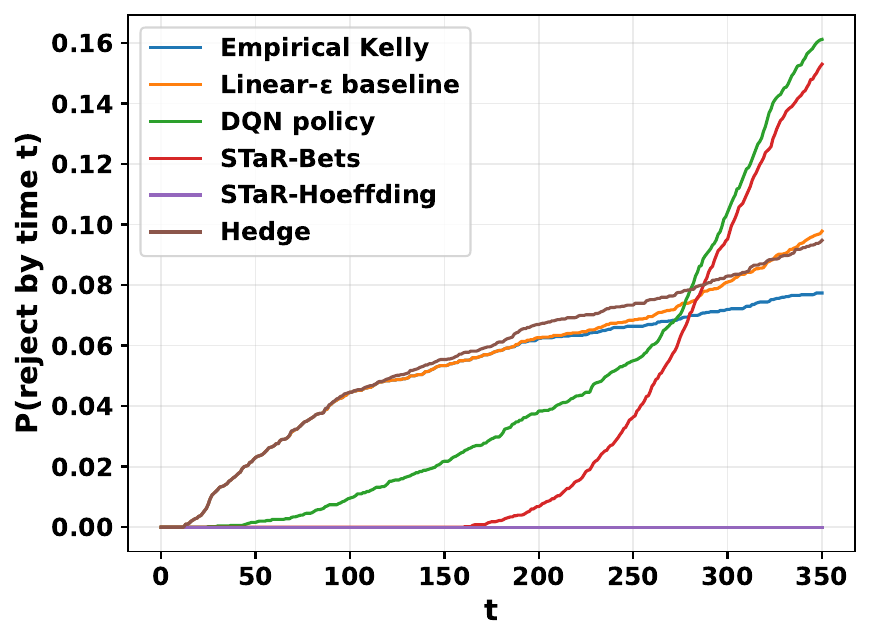}
        \caption{$m=0.41$, $\mu_X = 0.40, conc=6$}
        \label{fig:longerc}
    \end{subfigure}

    \caption{We use the same DQN policy as in the paper and evaluate it at longer deadlines, namely $N \in {250, 300, 350}$. The data-generating setting is the same Beta-mixture model used in Figure 3 of the main submission. The plots show that DQN maintains improved power under these extended horizons. } 
    \label{fig:longer}
\end{figure}

\subsection{DQN on Different Significance Levels}
\label{app:significance-dqn}
We also demonstrate that DQN can be adapted to different significance levels. In this experiment (Figure \ref{fig:alpha}), we set $\alpha = 0.01$ while using the same Beta-mixture data-generating setting as in Figure~\ref{fig:rejection-curves}. The results show that DQN continues to improve power at this tighter level.

\begin{figure}[htbp]
    \centering

    \begin{subfigure}[t]{0.48\textwidth}
        \centering
        \includegraphics[width=\linewidth]{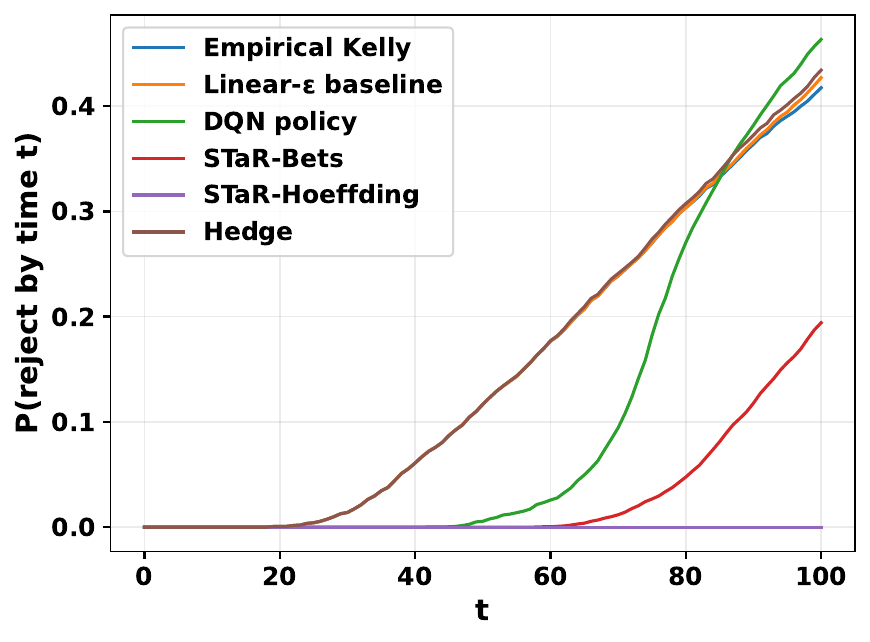}
        \caption{$m=0.45$, $\mu_X = 0.40, conc=6$}
        \label{fig:alpa}
    \end{subfigure}
    \hfill
    \begin{subfigure}[t]{0.48\textwidth}
        \centering
        \includegraphics[width=\linewidth]{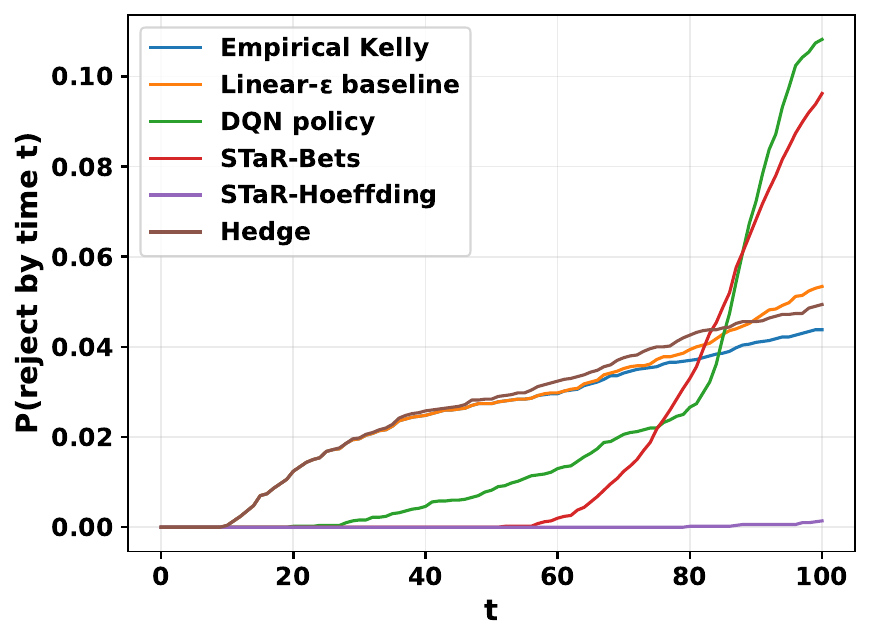}
        \caption{$m=0.45$, $\mu_X = 0.40, conc=1$}
        \label{fig:alpb}
    \end{subfigure}

    \vspace{0.5em}

    \begin{subfigure}[t]{0.48\textwidth}
        \centering
        \includegraphics[width=\linewidth]{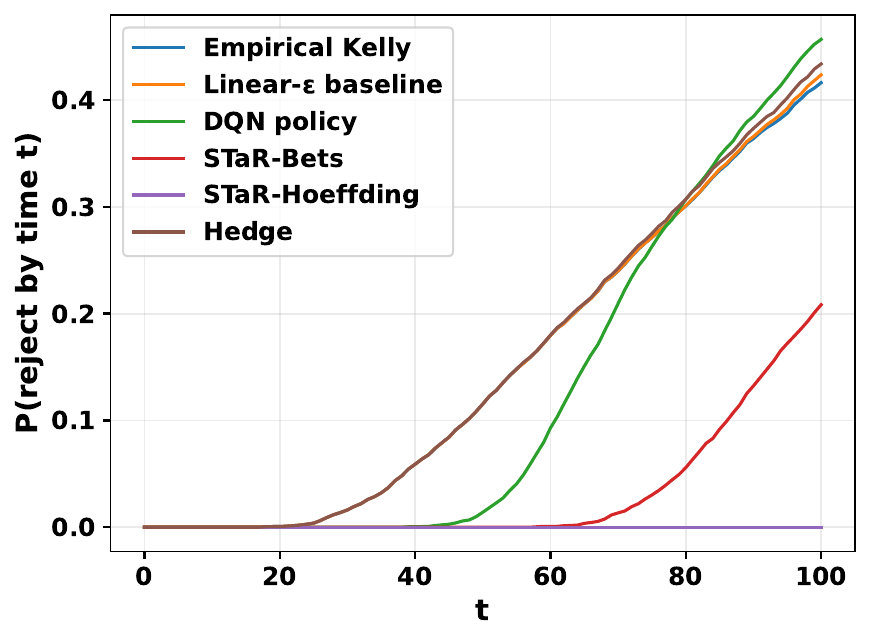}
        \caption{$m=0.55$, $\mu_X = 0.60, conc=6$}
        \label{fig:alpc}
    \end{subfigure}
    \hfill
    \begin{subfigure}[t]{0.48\textwidth}
        \centering
        \includegraphics[width=\linewidth]{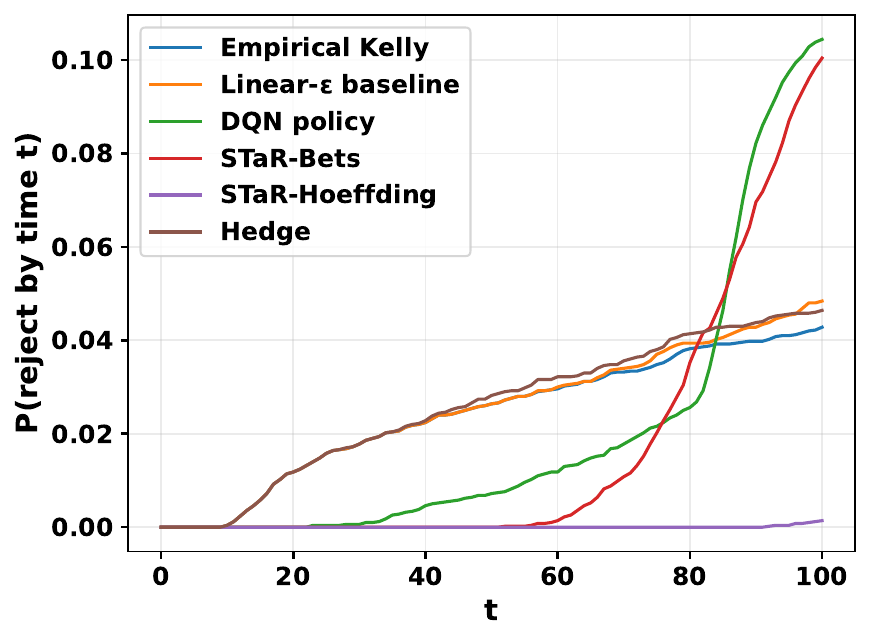}
        \caption{$m=0.55$, $\mu_X = 0.60, conc=1$}
        \label{fig:alpd}
    \end{subfigure}

    \caption{We note that DQN is flexible enough to accommodate a wide range of $\alpha$ values. Here, we set $\alpha = 0.01$. The data-generating setting is the same Beta-mixture model used in Figure 3 of the main submission. The plots show that DQN also improves power at $\alpha = 0.01$.}
    \label{fig:alpha}
\end{figure}

\subsection{Type-I Error of DQN}

Figure~\ref{fig:fourfigs} provides explicit null-calibration plots of $\mathbb{P}(\text{reject by } t)$ under $H_0$ for several null values. Across all settings, the Type-I error remains below $\alpha = 0.05$, as guaranteed by the anytime-validity of the methods. At the same time, DQN approaches the threshold more closely than competing methods, suggesting that it uses the allowable Type-I error budget more effectively, which helps explain its stronger power.

\begin{figure}[htbp]
    \centering

    \begin{subfigure}[t]{0.48\textwidth}
        \centering
        \includegraphics[width=\linewidth]{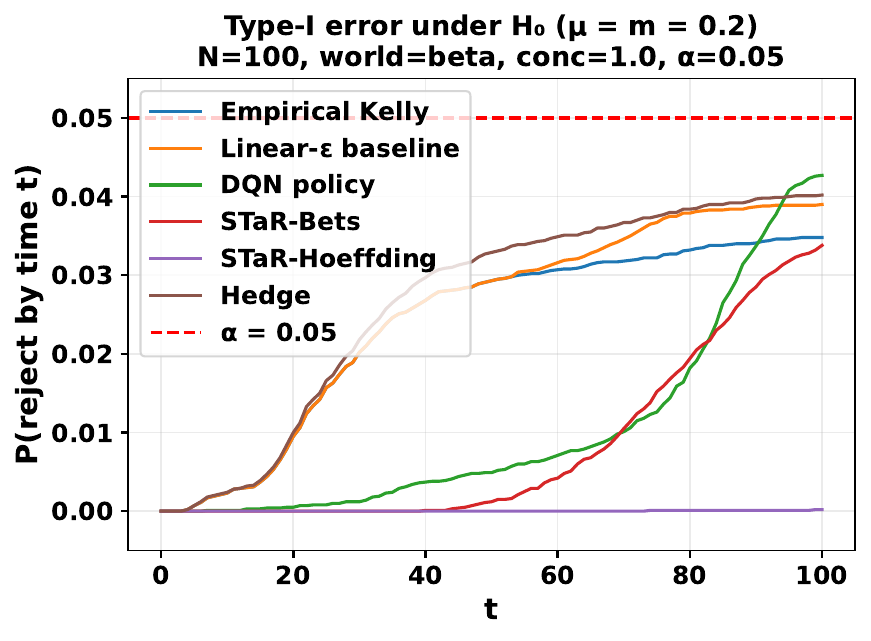}
        \caption{}
        \label{fig:sub1}
    \end{subfigure}
    \hfill
    \begin{subfigure}[t]{0.48\textwidth}
        \centering
        \includegraphics[width=\linewidth]{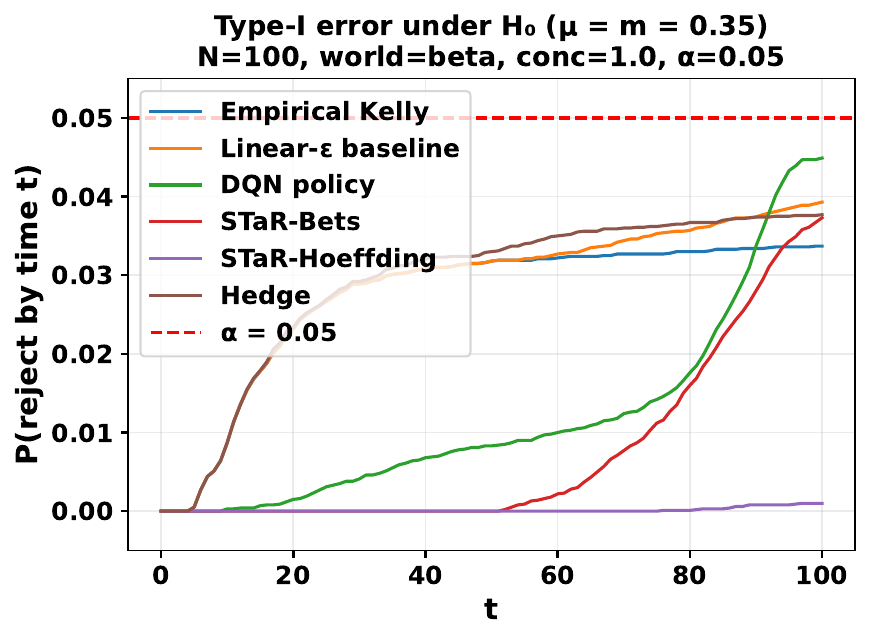}
        \caption{}
        \label{fig:sub2}
    \end{subfigure}

    \vspace{0.5em}

    \begin{subfigure}[t]{0.48\textwidth}
        \centering
        \includegraphics[width=\linewidth]{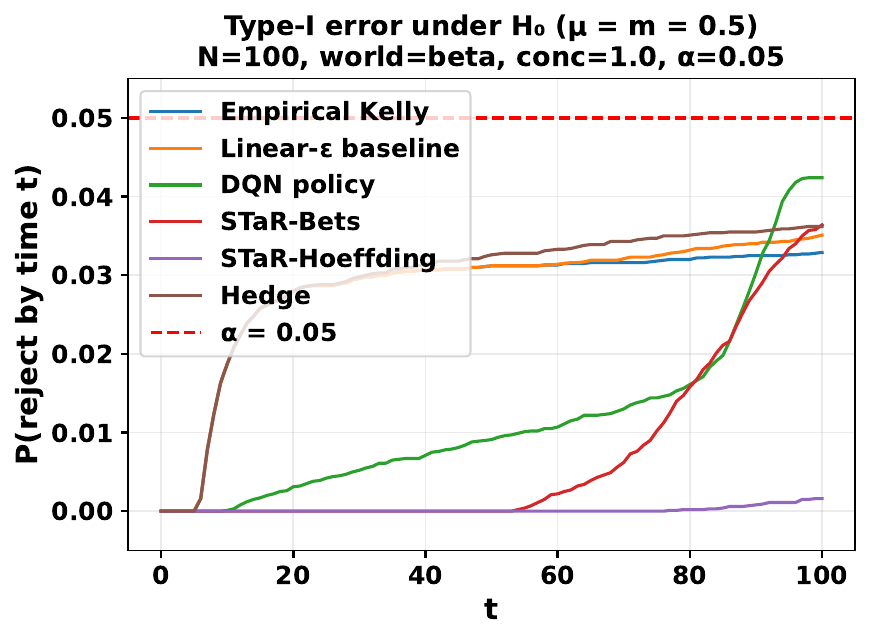}
        \caption{}
        \label{fig:sub3}
    \end{subfigure}
    \hfill
    \begin{subfigure}[t]{0.48\textwidth}
        \centering
        \includegraphics[width=\linewidth]{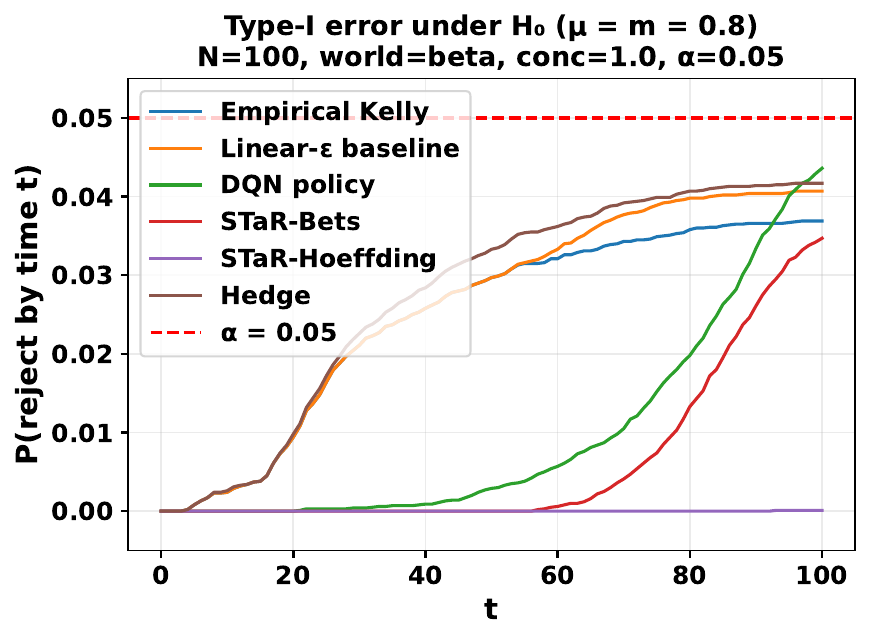}
        \caption{}
        \label{fig:sub4}
    \end{subfigure}

    \caption{In all figures, the Type I error remains below $\alpha = 0.05$, as ensured by the anytime-validity of all methods. At the same time, DQN comes closer to the threshold than the competing methods, suggesting that it uses the permitted Type I error budget more efficiently, which in turn helps explain its stronger power. }
    \label{fig:fourfigs}
\end{figure}

\subsection{DQN Training}
We now present the training dynamics. Note that we never reduce the exploration rate to zero. Instead, we anneal it only down to $0.02$. Therefore, the validation hit rate, computed with an exploration rate of $0$, is higher than the training hit rate. We chose the highest return yielding validation checkpoint as the final checkpoint.

\begin{figure}[h]
  \centering
  \includegraphics[width=0.45\linewidth]{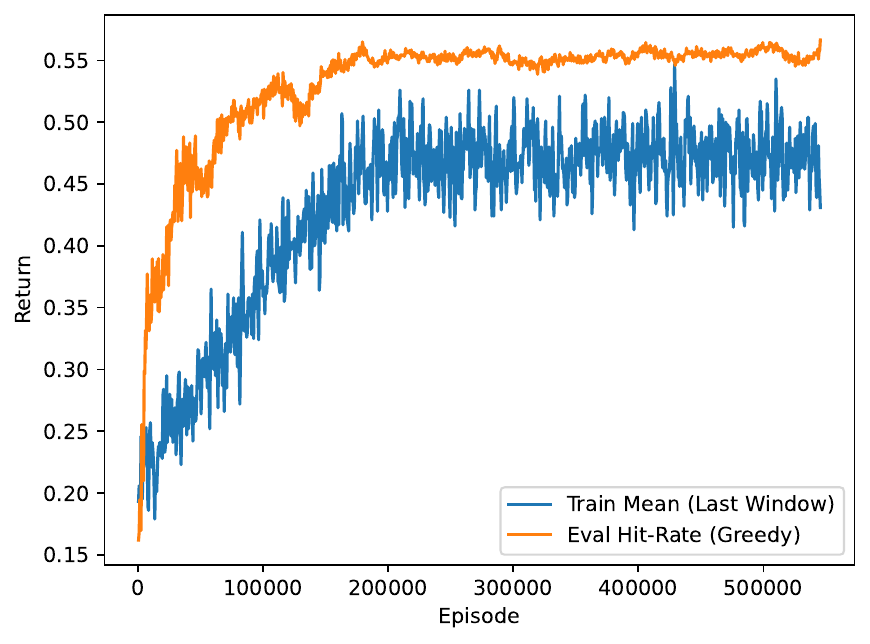}
  \caption{The validation hit rate in comparison with the training hit rate during DQN training.}
  \label{fig:dqn-training}
\end{figure}

\subsection{Qualitative DQN Actions}
\label{app:qualitativeactions}
We also visualize the DQN actions in Figures \ref{fig:2x2}--\ref{fig:2x2-3}. We see that DQN policy is usually more conservative early on, with more aggresive actions in later stages. 
\begin{figure}[h]
  \centering
  \begin{subfigure}[t]{0.33\textwidth}
    \centering
    \includegraphics[width=\linewidth]{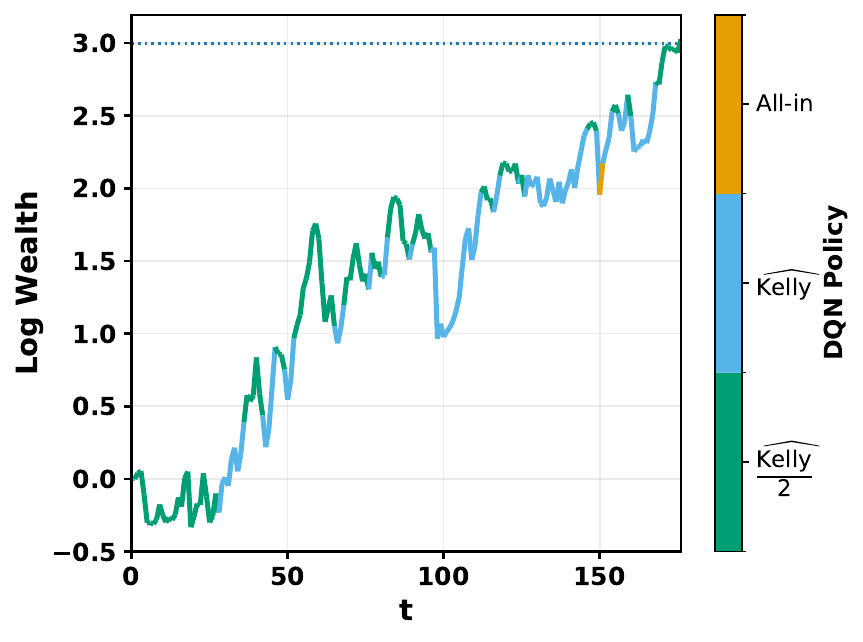}
    \caption{}
    \label{fig:2x2-a}
  \end{subfigure}
  \begin{subfigure}[t]{0.33\textwidth}
    \centering
    \includegraphics[width=\linewidth]{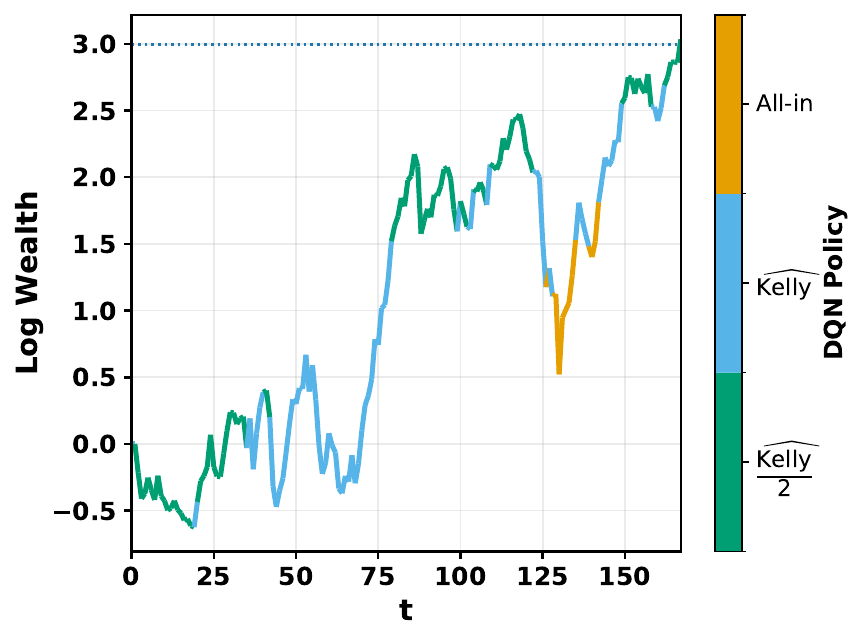}
    \caption{}
    \label{fig:2x2-b}
  \end{subfigure}

  \vspace{0.2em}

  \begin{subfigure}[t]{0.33\textwidth}
    \centering
    \includegraphics[width=\linewidth]{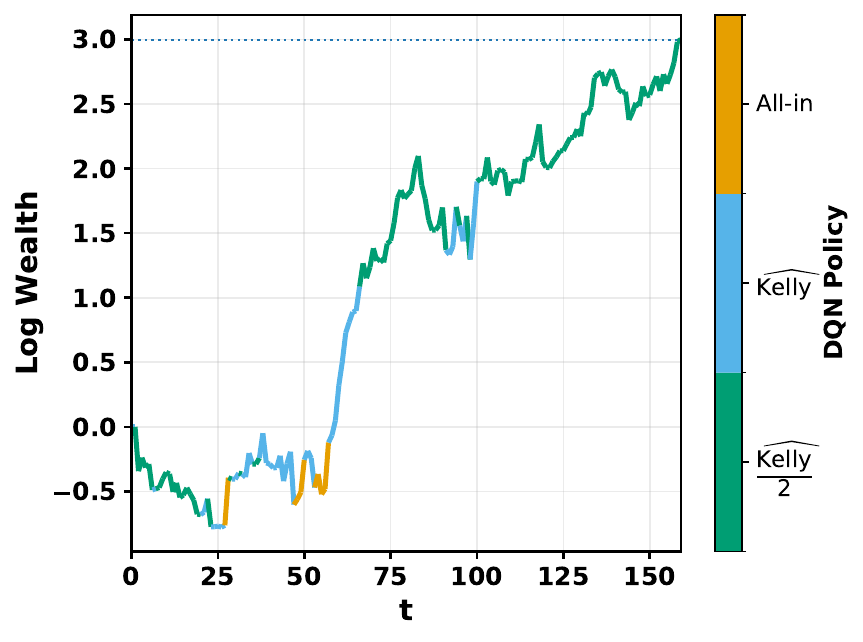}
    \caption{}
    \label{fig:2x2-c}
  \end{subfigure}
  \begin{subfigure}[t]{0.33\textwidth}
    \centering
    \includegraphics[width=\linewidth]{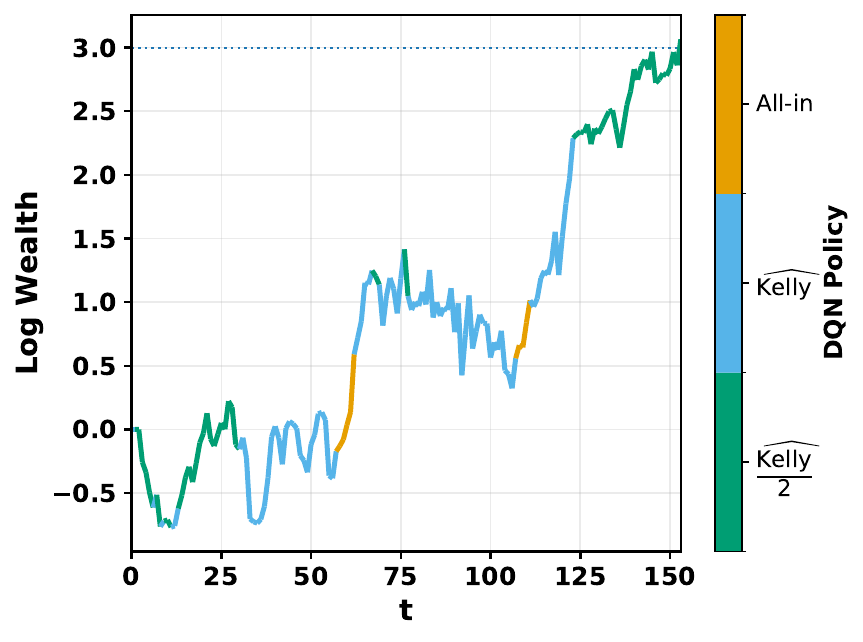}
    \caption{}
    \label{fig:2x2-d}
  \end{subfigure}

  \caption{$X_i \sim \mathrm{Beta(conc\mu_X, conc(1-\mu_X))}$ with conc=6, $\mu_X = 0.4$, $N=200$ with $\alpha=0.05$, and $m=0.41$}.
  \label{fig:2x2}
\end{figure}

\begin{figure}[h]
  \centering
  \begin{subfigure}[t]{0.33\textwidth}
    \centering
    \includegraphics[width=\linewidth]{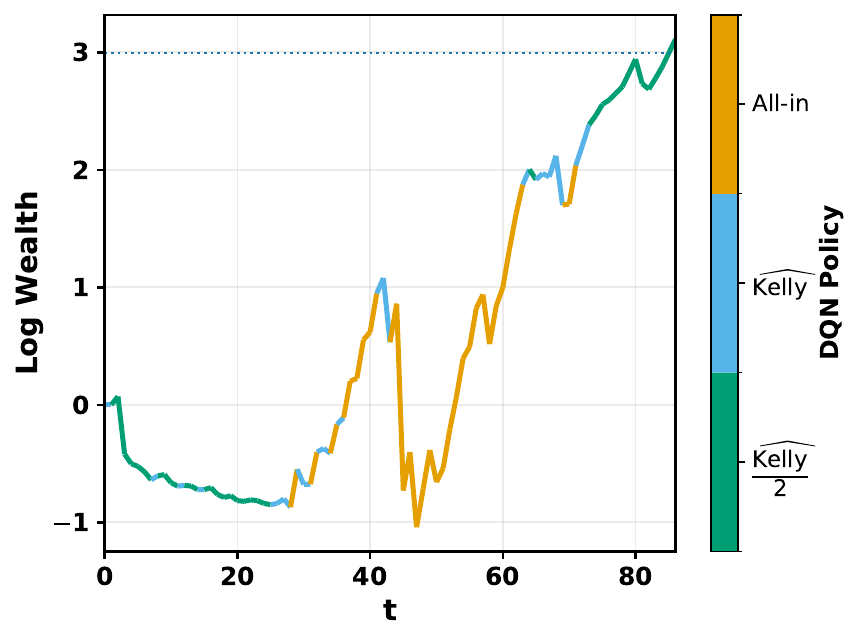}
    \caption{}
    \label{fig:2x2-a-2}
  \end{subfigure}
  \begin{subfigure}[t]{0.33\textwidth}
    \centering
    \includegraphics[width=\linewidth]{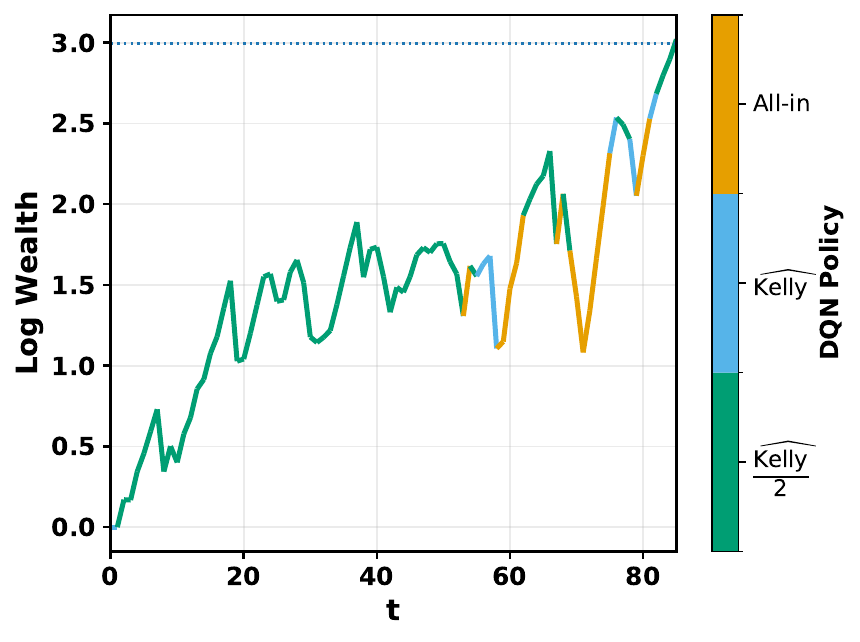}
    \caption{}
    \label{fig:2x2-b-2}
  \end{subfigure}

  \vspace{0.2em}

  \begin{subfigure}[t]{0.33\textwidth}
    \centering
    \includegraphics[width=\linewidth]{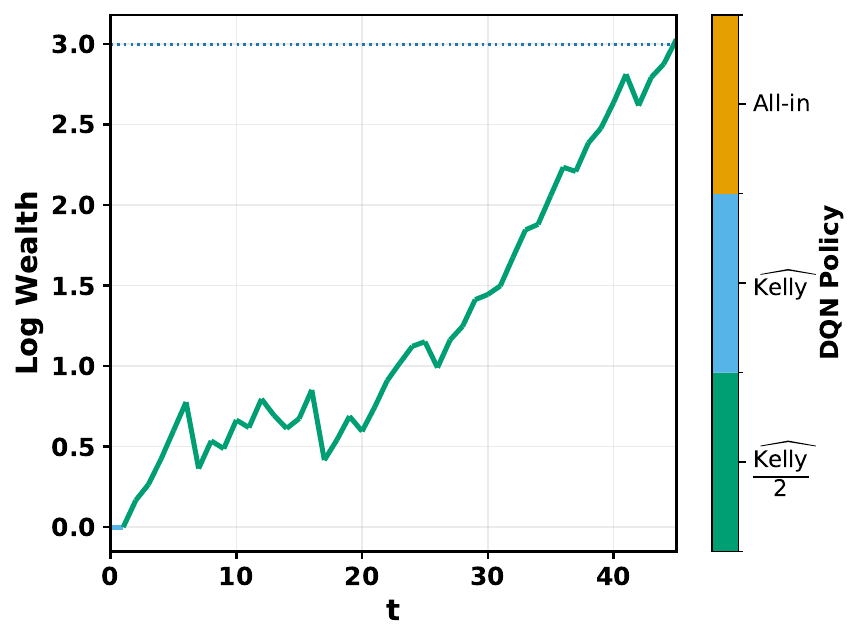}
    \caption{}
    \label{fig:2x2-c-2}
  \end{subfigure}
  \begin{subfigure}[t]{0.33\textwidth}
    \centering
    \includegraphics[width=\linewidth]{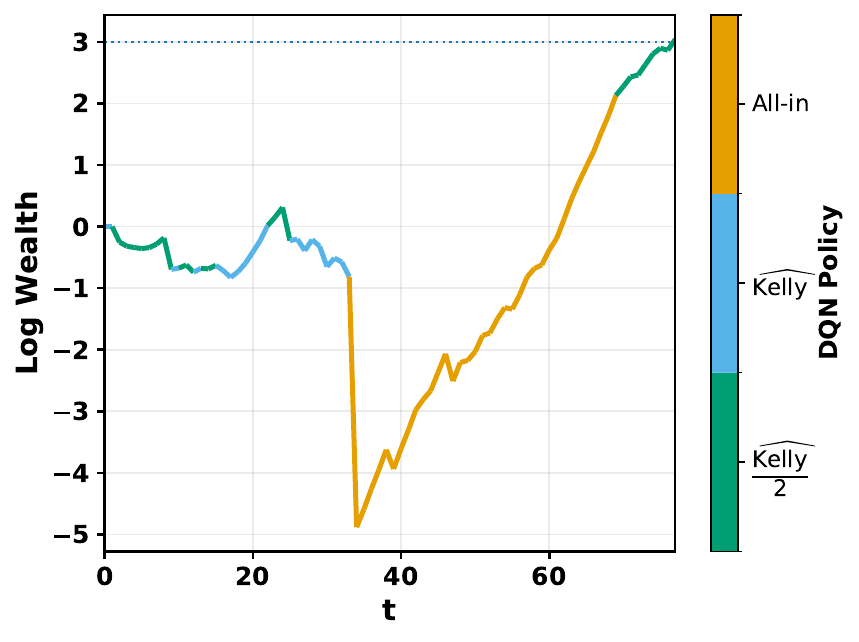}
    \caption{}
    \label{fig:2x2-d-2}
  \end{subfigure}

  \caption{$X_i \sim \mathrm{Beta(conc\mu_X, conc(1-\mu_X))}$ with conc=1, $\mu_X = 0.22$, $N=100$ with $\alpha=0.05$, and $m=0.28$}.
  \label{fig:2x2=2}
\end{figure}

\begin{figure}[h]
  \centering
  \begin{subfigure}[t]{0.33\textwidth}
    \centering
    \includegraphics[width=\linewidth]{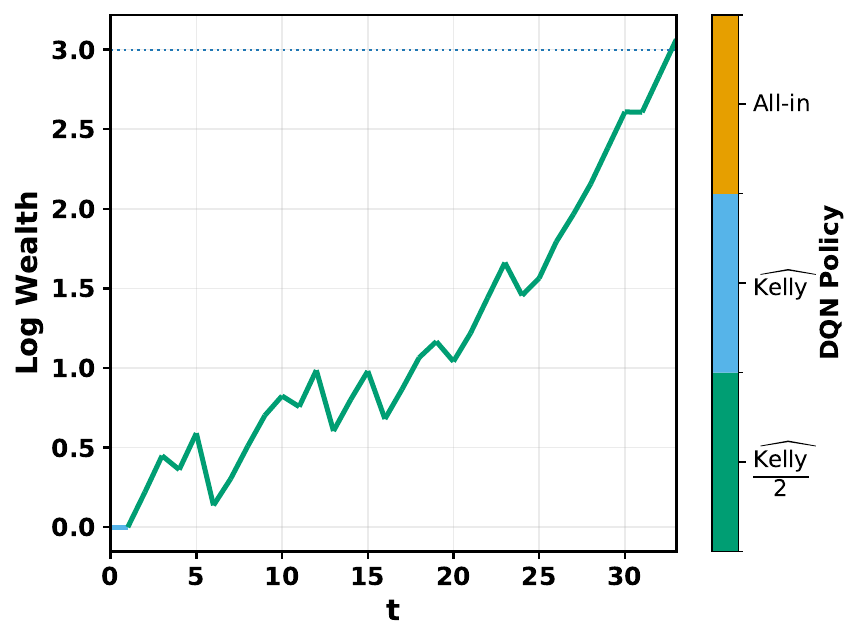}
    \caption{}
    \label{fig:2x2-a-3}
  \end{subfigure}
  \begin{subfigure}[t]{0.33\textwidth}
    \centering
    \includegraphics[width=\linewidth]{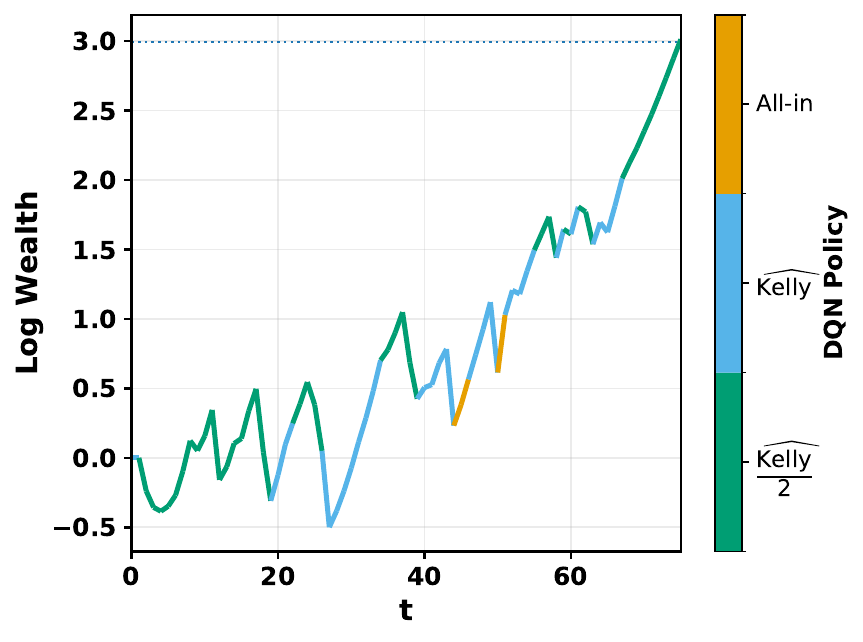}
    \caption{}
    \label{fig:2x2-b-3}
  \end{subfigure}

  \vspace{0.2em}

  \begin{subfigure}[t]{0.33\textwidth}
    \centering
    \includegraphics[width=\linewidth]{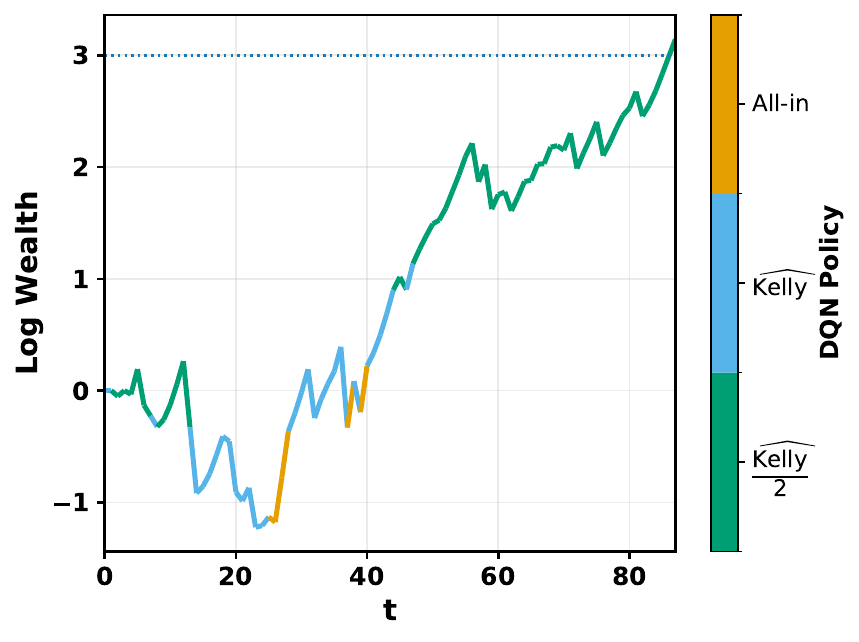}
    \caption{}
    \label{fig:2x2-c-3}
  \end{subfigure}
  \begin{subfigure}[t]{0.33\textwidth}
    \centering
    \includegraphics[width=\linewidth]{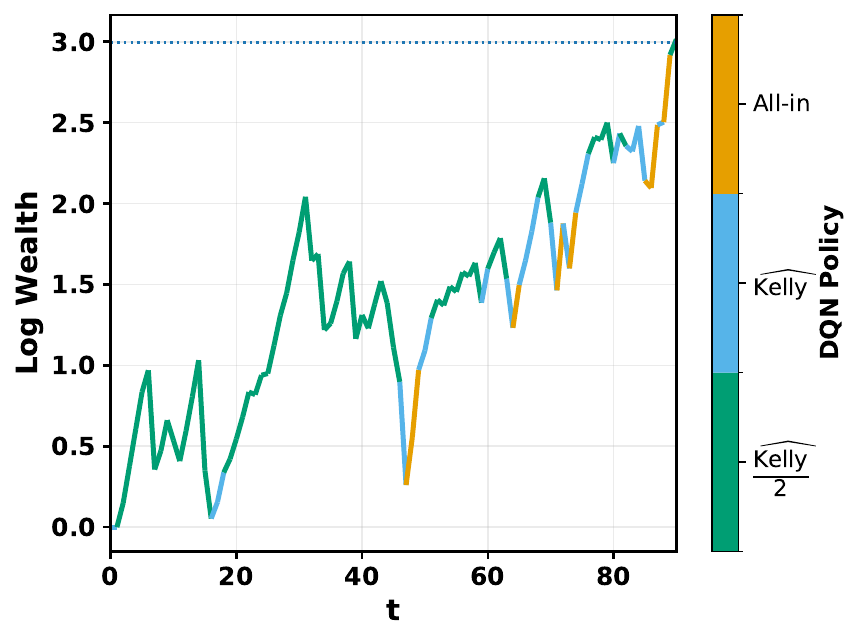}
    \caption{}
    \label{fig:2x2-d-3}
  \end{subfigure}

  \caption{$X_i \sim \mathrm{Beta(conc\mu_X, conc(1-\mu_X))}$ with conc=0.5, $\mu_X = 0.71$, $N=100$ with $\alpha=0.05$, and $m=0.66$}.
  \label{fig:2x2-3}
\end{figure}

\end{document}